\date{}
\DeclareMathAlphabet{\mathbbold}{U}{bbold}{m}{n}
\newcommand{\mtiny}[1]{{\scalebox{.81}{#1}}}
\newcommand{\smtiny}[1]{{\scalebox{.5}{#1}}}
\newcommand{\stiny}[1]{{\scalebox{.38}{#1}}}
\newcommand*{\argminOp}{\operatornamewithlimits{argmin}\limits}
\newcommand*{\argmaxOp}{\operatornamewithlimits{argmax}\limits}
\renewcommand{\tr}{{\text{\mtiny{$\mathsf{T}$}}}}              
\newcommand{\vc}[1]{{ \mathrm{#1} }}  
\newcommand{\mx}[1]{{ \mathrm{#1} }}  
\newcommand{\drm}{\mathrm{d}}           
\newcommand{\linspan}{\mathrm{span}}    
\newcommand{\nth}{{\text{\tiny{th}}}}   
\renewcommand{\trace}{\mathrm{tr}}        
\newcommand{\logdet}{\mathrm{logdet}}       
\renewcommand{\vec}{\mathrm{vec}}           
\newcommand{\Dcal}{{\mathcal{D}}}
\newcommand{\Ncal}{{\mathcal{N}}}
\newcommand{\Scal}{{\mathcal{S}}}
\newcommand{\Nbb}{{\mathbb{N}}}
\newcommand{\Rbb}{{\mathbb{R}}}
\newcommand{\Sbb}{{\mathbb{S}}}
\newcommand{\Vbb}{{\mathbb{V}}}
\newcommand{\Xbb}{{\mathbb{X}}}
\newcommand{\Zbb}{{\mathbb{Z}}}
\newcommand{\Exp}{\mathbb{E}}
    \theoremstyle{plain}
    \newtheorem{theorem}{Theorem}
    \newtheorem{definition}{Definition}
    \newtheorem{assumption}{Assumption}
    \newtheorem{proposition}[theorem]{Proposition}
    \newtheorem{corollary}[theorem]{Corollary}
    \newtheorem{lemma}[theorem]{Lemma}
    \newtheorem{remark}{Remark}
    \newtheorem{example}{Example}
    \newtheorem{problem}{Problem}
    \newtheorem{theorem}{Theorem}
    \newtheorem*{theorem*}{Theorem}
    \newtheorem*{definition*}{Definition}
    \newtheorem{assumption}{Assumption}
    \newtheorem{proposition}[theorem]{Proposition}
    \newtheorem{corollary}[theorem]{Corollary}
    \newtheorem{lemma}[theorem]{Lemma}
    \newtheorem{remark}{Remark}
    \newtheorem*{example*}{Example}
    \newtheorem*{problem*}{Problem}
\newcommand{\mxA}{\mx{A}} 
\newcommand{\mxB}{\mx{B}} 
\newcommand{\mxC}{\mx{C}} 
\newcommand{\mxD}{\mx{D}} 
\newcommand{\mxE}{\mx{E}} 
\newcommand{\mxK}{\mx{K}} 
\newcommand{\mxL}{\mx{L}} 
\newcommand{\mxM}{\mx{M}} 
\newcommand{\mxN}{\mx{N}} 
\newcommand{\mxQ}{\mx{Q}} 
\newcommand{\mxP}{\mx{P}} 
\newcommand{\mxR}{\mx{R}} 
\newcommand{\mxS}{\mx{S}} 
\newcommand{\mxX}{\mx{X}} 
\newcommand{\mxY}{\mx{Y}} 
\newcommand{\mxG}{\mx{G}} 
\newcommand{\mxH}{\mx{H}} 
\newcommand{\mxU}{\mx{U}} 
\newcommand{\mxV}{\mx{V}} 
\newcommand{\mxF}{\mx{F}} 
\newcommand{\vca}{\vc{a}} 
\newcommand{\vce}{\vc{e}} 
\newcommand{\vcu}{\vc{u}} 
\newcommand{\vcv}{\vc{v}} 
\newcommand{\vcw}{\vc{w}} 
\newcommand{\vcx}{\vc{x}} 
\newcommand{\vcy}{\vc{y}} 
\newcommand{\vcz}{\vc{z}} 
\newcommand{\vctheta}{\vc{\theta}} 
\newcommand{\vcmu}{\vc{\mu}} 
\newcommand{\vcq}{\vc{q}} 
\newcommand{\Nu}{n_\vc{u}} 
\newcommand{\Nx}{n_\vc{x}} 
\newcommand{\Ny}{n_\vc{y}} 
\newcommand{\nD}{{n_{\smtiny{$\Dcal$}}}} 
\newcommand{\nsD}{{n_{\stiny{$\Dcal$}}}} 
\newcommand{\nsmD}{{n_{\smtiny{$\Dcal$}}}} 
\newcommand{\mux}{\mu_{\vcx_0}}     
\newcommand{\mxSx}{\mx{S}_{\vcx_0}}   
\newcommand{\mxSw}{\mx{S}_{\vcw}}     
\newcommand{\mxSv}{\mx{S}_{\vcv}}     
\newcommand{\vcthetaML}{\theta^{\text{\mtiny{$\mathrm{(ML)}$}}}}     
\newcommand{\Sbfxtheta}{\mxS^{{(\mathbf{x})}}\!(\theta)}
\newcommand{\Sbfytheta}{\mxS^{{(\mathbf{y})}}\!(\theta)}
\newcommand{\mubfxtheta}{\mu^{{(\mathbf{x})}}\!(\theta)}
\newcommand{\mubfythetax}{\mu^{{(\mathbf{y})}}\!(\theta, \mathbf{x})}
\newcommand{\Xitheta}{\Xi(\theta)}
\newcommand{\Sqtheta}{\mxS_\vcq(\theta)}
\newcommand{\extendedR}{\overline{\Rbb}}
\title{\LARGE \textbf{Probabilistic Formulations for System Identification of Linear Dynamics with Bilinear Observation Models}}
\author[$\dagger$]{Diyou Liu}
\author[$\star$]{Mohammad Khosravi}
\affil[$\dagger\star$]{{\small Delft Center for Systems and  Control, Delft University of Technology, Delft, Netherlands
\authorcr
{\texttt{
\{d.liu-9,mohammad.khosravi\}@tudelft.nl}}}}  
\begin{document}
\maketitle
\begin{abstract}
In this paper, we address the identification problem for the systems characterized by linear time-invariant dynamics with bilinear observation models. More precisely, we consider a suitable parametric description of the system and formulate the identification problem as the estimation of the parameters defining the mathematical model of the system using the observed input-output data. To this end, we propose two probabilistic frameworks. The first framework employs the Maximum Likelihood (ML) approach, which accurately finds the optimal parameter estimates by maximizing a likelihood function. Subsequently, we develop a tractable first-order method to solve the optimization problem corresponding to the proposed ML approach. Additionally, to further improve tractability and computational efficiency of the estimation of the parameters, we introduce an alternative framework based on the Expectation--Maximization (EM) approach, which estimates the parameters using an appropriately designed cost function. We show that the EM cost function is invex, which ensures the existence and uniqueness of the optimal solution. Furthermore, we derive the closed-form solution for the optimal parameters and also prove the recursive feasibility of the EM procedure. 
Through extensive numerical experiments, the practical implementation of the proposed approaches is demonstrated, and their estimation efficacy is verified and compared, highlighting the effectiveness of the methods to accurately estimate the system parameters and their potential for real-world applications in scenarios involving bilinear observation structures.
\end{abstract}
       

\section{Introduction} \label{sec:intro}
%
%
%
%
System identification, as originally introduced in \cite{zadeh1956identification}, is an active area of research \cite{ljung2010perspectives} that focuses on the theory and methods for data-driven modeling of dynamical systems \cite{luenberger1979dynamic}. 
This field has gained substantial attention \cite{LjungBooK2,chiuso2019system} due to the ubiquitous nature of dynamical systems in various science and technology domains such as physics, economics, biology, and engineering, and, considering the particular importance of accurate models for tasks such as prediction, monitoring, and control \cite{khansari2014learning,tang2021expectation,ferrari2012system,franco2023parameter,shastry2023system}. 
With vast areas of application relying on precise models that accurately represent reality, numerous methodologies have been developed for different categories of systems, ranging from linear to nonlinear dynamics, discrete-time to continuous-time systems, deterministic to stochastic models, as well as lumped-parameter and distributed-parameter systems, and many other classes of systems \cite{schoukens2019nonlinear,sattar2022non,khosravi2021Koopman,umenberger2018maximum,khansari2017learning,liu2024learning,khosravi2021SSG}.
The key role of linear dynamics in practical applications has motivated extensive research into the identification problem for different classes of linear systems, covering various structures, types, and features, such as 
stability.
\cite{lacy2003subspace,pillonetto2014kernel}, 
positivity 
\cite{khosravi2021POS,zheng2021bayesian,khosravi2020regularized},
being compartmental 
\cite{benvenuti2002model}, 
positive-realness 
\cite{goethals2003identification}, 
passivity 
\cite{rodrigues2021novel},
low internal complexity 
\cite{smith2014frequency,khosravi2020low,pillonetto2016AtomicNuclearKernel}, 
specific frequency-domain attributes 
\cite{khosravi2021FDI,marconato2016filter}, 
and many other significant properties \cite{risuleo2019bayesian,everitt2018empirical}.
%
Consequently, linear system identification techniques have undergone significant advancements, often demonstrating remarkable estimation performance and computational efficiency, making them well-suited for systems exhibiting predominantly linear behavior \cite{aastrom1971system,orlov2003adaptive}.
%
However, since many real-world systems exhibit nonlinear, stochastic, or generally complex behaviors, the resulting complexity presents significant challenges in accurately capturing their properties in the model.
To address this issue, various advanced tools and techniques from statistics and optimization theory are employed in linear and nonlinear system identification, leading to the development of parametric and nonparametric approaches. Nonparametric estimation techniques \cite{khosravi2022Lut,zorzi2022nonparametric,khosravi2021ROA,khosravi2021grad}, which impose fewer structural assumptions and rely more significantly on the data, are particularly well-suited when the system structure is less known. In contrast, parametric approaches are especially useful when a specific well-defined and appropriate structure for the system is given, which is often the case in many real-world scenarios \cite{astrom1979maximum,bremaud2012introduction,little2019statistical}.

Parametric system identification is an extensively used framework that assumes a sufficiently comprehensive class of models with a predefined specific structure for the mathematical representation of the system and then estimates the parameters characterizing these models using measurement data. Accordingly, it can be applied to a broad range of systems of various types and natures, provided that a suitable parametric mathematical characterization is available. The subsequent primary objective is to fine-tune the parameters of the model so that the resulting system most accurately describes the observed data.
To this end, different parameter estimation methods are employed depending on the nature of the system and the objectives of the identification process. 
For example, the prediction error method (PEM) \cite{astrom1979maximum}  is widely used in the identification of linear time-invariant systems, which iteratively adjusts the parameters to minimize the fitting loss, evaluated based on the difference between the output values predicted by the resulting model and the actual output from the real system.
Furthermore, to improve the accuracy and robustness of parameter estimation in noisy systems, probabilistic frameworks are widely employed.
Unlike deterministic methods, these approaches incorporate uncertainty into the estimation process, making them well suited for handling noisy measurement data and unmodeled system dynamics. For example, the Bayesian approach \cite{bernardo2009bayesian} treats parameters as random variables with probability distributions of specific forms. 
The Bayesian approaches exhibit remarkable performance in providing accurate probability distributions for parameters of interest, particularly when dealing with noisy or limited data, thus making it a universally applicable choice of technique in numerous system identification applications. 
The maximum likelihood (ML) estimation approach \cite{rossi2018mathematical} is one of the most prominent Bayesian methods used in system identification, aiming to determine the parameter values that maximize the likelihood function, which indicates the probability of observing the given measurement data under specified parameter settings.
In practical terms, ML estimation approach identifies the parameter values that maximize the likelihood of the observed data under the model. As the dataset size increases, the ML estimation results often converge to the true parameter values.
Aside from the ML approach, Maximum a Posteriori (MAP) estimation \cite{gauvain1994maximum} is another widely used method, particularly when prior knowledge is available for incorporation into the estimation process. MAP estimation involves selecting a suitable prior distribution for the parameters, deriving the posterior distribution by combining the likelihood function with the prior, and identifying the parameter values that maximize the obtained posterior distribution. It is essential to note that MAP can be applied only when, along with the measurement data, additional information about the parameters describing the system is available.

In addition to linear system identification, there is a growing interest in methods for identifying nonlinear systems \cite{sattar2022non,bouvrie2017kernel,khosravi2021Koopman,umenberger2018maximum}. The identification of nonlinear systems often requires advanced techniques that can accurately model and predict their complex dynamics. One classic approach is to employ schemes based on linearization \cite{chen2004identification}, e.g.,  \cite{sharabiany2024nonlinear} presents a method to identify nonlinear systems through local linear approximations along the trajectory. Additionally, probabilistic frameworks are also commonly used in nonlinear systems identification. In \cite{wills2013identification}, the authors proposed an ML-based algorithm for Hammerstein--Wiener models, which is a serial combination of systems, 
which consist of a linear dynamic system cascaded with two  static nonlinear blocks.
Moreover, in \cite{liu2021identification}, a Bayesian approach-based method is introduced to identify Wiener--Hammerstein models, which involve a static nonlinearity between two linear dynamic systems. 
Among general nonlinear systems, bilinear systems are of particular interest as the bridging case between linear and nonlinear systems, and also due to their technical tractability and relevance in various fields \cite{pardalos2010optimization}.
Bilinear systems combine linear dynamics with input-dependent nonlinearities, which can be exploited in designing effective identification methods with computational tractability. For example, in \cite{verdult2001identification,verdult2005kernel}, subspace techniques are used to identify bilinear state space systems. Furthermore, in \cite{liu2020moving} and \cite{liu2022expectation}, a bilinear system in canonical form of observability is considered. In \cite{liu2020moving}, the authors proposed an approach using the Kalman filter to estimate the state of the system and a gradient-based iterative algorithm to identify the parameters of the system. In \cite{liu2022expectation}, the Rauch–Tung–Striebel smoother (RTS) is used to estimate the state variables and the expectation-maximization (EM) algorithm to identify parameters. These papers consider systems with bilinear dynamics and linear observation models. Inspired by the Wiener-Hammerstein models and Hammerstein–Wiener models, in this current work, we extend the approach to consider bilinear systems with linear dynamics and bilinear observation models. By developing identification methods for systems with bilinear observations, aiming to provide more flexible and realistic models capable of capturing interactions that standard linear or bilinear dynamics alone cannot fully describe. 

In this paper, we focus on identifying systems with linear dynamics and bilinear observation models, where the system matrices, initial state, and noise distributions, including their means and covariances, are treated as unknown parameters to be estimated. 
%
To address this problem, we first propose a scheme using the ML approach. 
%
More precisely, we formulate a likelihood function to identify the optimal parameters that maximize the likelihood of the observed data. 
%
Subsequently, we develop a practical first-order method to efficiently solve the optimization problem associated with the proposed ML approach. 
%
Additionally, to further improve the tractability and computational efficiency, we extends our previous work \cite{liu2024system} and propose an alternative identification scheme based on the RTS smoother \cite{rauch1965maximum} and the EM approach \cite{dempster1977maximum}.
The developed scheme consists of two iterative steps. 
%
In the first step, state estimates are obtained using the estimated parameters in the current iteration. 
%
To this end, one may employ the Kalman filter \cite{kalman1960new}, a well-established method for state estimation that recursively updates state estimates by combining information from previous and current measurements. In the literature, various extensions of the Kalman filter have been proposed to improve estimation accuracy \cite{julier2004unscented,julier1995new}. 
%
In this work, we utilize the RTS smoother, further refining the state estimates by incorporating information from future measurements alongside past and current data.
%
Notably, the RTS smoothing procedure produces more accurate state estimates compared to the Kalman filter estimation results.
%
In the second step, we use the estimated states to define a suitable cost function based on the expected log-likelihood of the parameters and subsequently minimize it to update the parameter estimates.
We show the invexity of the cost function, guaranteeing the existence and uniqueness of the optimal solution. Additionally, we derive a closed-form expression for the optimal parameters and prove the recursive feasibility property for the EM procedure. 
The main contributions of this work are outlined below. 
\begin{itemize}
    \item 
    We develop a system identification framework based on the maximum likelihood approach and formulate accordingly an optimization problem to identify systems with linear dynamics and bilinear observation models.
    \item  
    To facilitate the practical implementation of the introduced maximum likelihood-based approach, we propose first-order methods to solve the associated optimization problem and compute the necessary derivatives.
    \item 
    To further improve tractability and computational efficiency of the estimation of the parameters describing the system, an alternative framework combining the Rauch–Tung–Striebel smoother and the Expectation--Maximization approach is proposed to address the identification problem, with estimated parameters iteratively updated by solving an optimization problem.
    \item 
    We show the invexity of the EM cost function and prove the existence and uniqueness of the optimal solution for the EM optimization problem. Furthermore, we derive the closed-form solution for the optimal parameters and also verify the recursive feasibility of the EM procedure.    
    \item 
    To facilitate the practical application of the proposed methods and for ease of numerical implementation, we have included detailed algorithms outlining all the steps.
    \item
    We present the results of extensive numerical experiments conducted to demonstrate and evaluate the performance of the proposed schemes, followed by a rigorous discussion of the observed features and phenomena, offering valuable insights into their underlying behavior.
\end{itemize}

The remainder of this paper is organized as follows. Section \ref{sec:notation} lists the main notations used throughout the paper. The identification problem is discussed in detail in Section \ref{sec:pf}. A scheme employing the ML approach is proposed in Section \ref{sec:ML}. In Section \ref{sec:An Expectation-Maximization Approach}, we develop an alternative scheme based on the EM approach. Section \ref{sec:numerical experiments} presents numerical examples to evaluate the performance of the proposed approaches. Finally, Section~\ref{sec:conclusion} concludes the paper.
\section{Notation} \label{sec:notation}
In this paper, the set of integers, the set of positive integers, the set of real numbers, the extended set of real numbers, and the set of $n$ by $m$ matrices with real value are denoted by $\Zbb$, $\Zbb_+$, $\Rbb$, $\extendedR$, and $\Rbb^{n\times m}$, respectively. 
%
The sets of $n$ by $n$ symmetric matrices, positive semi-definite matrices and positive definite matrices are denoted as $\Sbb^{n\times n}$, $\Sbb_{+}^{n\times n}$ and $\Sbb_{++}^{n\times n}$, respectively. 
%
For any symmetric matrix $\mxA \in \Sbb^{n \times n}$, the notation $\mxA \succ 0$ indicates that $\mxA$ is positive definite.
For any matrix $\mxA \in \Rbb^{m\times n}$, the entry at the $i$-th row and $j$-th column is denoted by $[\mxA]_{i,j}$, for $i=1,\ldots,m$, and $j=1,\ldots,n$. 
%
The $i$-th column of $\mxA$ is represented by $[\mxA]_i$. Furthermore, $\vec(\mxA)$ refers to the column vector in $\Rbb^{mn}$ obtained from stacking the columns of the matrix $\mxA$
%
Additionally, the Kronecker product is denoted by $\otimes$. 
For any vector $\vcx \in \Rbb^n$, the Euclidean norm of $\vcx$ is denoted by $\|\vcx\|$. 
%
Finally,  $p(A|B)$ represents the conditional probability of $A$ given $B$.
%

\section{Identification of Linear Dynamics with Bilinear Observation Models} \label{sec:pf}
Consider an \emph{unknown} time-invariant random dynamical system $\Scal$ with linear dynamics and a bilinear observation model.  
More precisely, let the process model describing the dynamics of $\Scal$ be as
\begin{equation}\label{eqn:dynamics1}
\vcx_{k+1} = \mxA\vcx_k+\mxB\vcu_k+\vcw_k,    \qquad \forall k\in\Zbb_+,
\end{equation}
where $\vcx_{k}\in\Rbb^{\Nx}$, $\vcu_k\in\Rbb^{\Nu}$, and $\vcw_k\in\Rbb^{\Nx}$ are respectively the vectors of state variables, input, and process noise, at time instant $k\in\Zbb_+$, and,
$\mxA\in\Rbb^{\Nx\times\Nx}$ and $\mxB\in\Rbb^{\Nx\times\Nu}$ are \emph{unknown} matrices characterizing the dynamics of system.
Also, let the observation model of the system have a bilinear form as 
\begin{equation}\label{eqn:dynamics2}
\vcy_{k+1} = \big(\mxC_0+\sum_{i=1}^{\Nu}\mxC_iu_{k,i}\big)\vcx_k+\mxD\vcu_k+\vcv_k,       \qquad \forall k\in\Zbb_+,
\end{equation}
where $u_{k, i}$ denotes the $i^\nth$ entry of $\vcu_k$, for $i=1,\ldots,\Nu$ and $k\in\Zbb_+$, $\vcy_{k}\in\Rbb^{\Ny}$ and $\vcv_k\in\Rbb^{\Ny}$  are respectively the vectors of output observations and measurement noise, and, $\mxC_0,\mxC_1,\ldots,\mxC_{\Nu}\in\Rbb^{\Ny\times\Nx}$ and $\mxD\in\Rbb^{\Ny\times\Nu}$ are \emph{unknown} matrices describing the observation model of the system. Suppose that the initial state $\vcx_0$, the process noise $\big(\vcw_k\big)_{k\in\Zbb_+}$, and the measurement noise $\big(\vcv_k\big)_{k\in\Zbb_+}$ are mutually independent Gaussian random variables such as 
\begin{align}
    &\vcx_0\sim\Ncal(\mux,\mxSx),
    \label{eqn:x_dist_Gaussion}
    \\
    &\vcw_k\sim\Ncal(0,\mxSw), \qquad \forall k\in\Zbb_+,
    \label{eqn:w_dist_Gaussion}
    \\
    &\vcv_k\sim\Ncal(0,\mxSv), \qquad \forall k\in\Zbb_+,
    \label{eqn:v_dist_Gaussion}
\end{align}
where $\mux\in\Rbb^{\Nx}$ is  an \emph{unknown} vector and $\mxSx\in\Rbb^{\Nx\times\Nx}$ is an \emph{unknown} positive definite matrix  respectively denoting the mean and covariance of $\vcx_0$, and, $\mxSw\in\Rbb^{\Nx\times\Nx}$  and  $\mxSv\in\Rbb^{\Ny\times\Ny}$ are \emph{unknown} positive definite matrices representing respectively the covariance of vector $\vcw_k$ and  the covariance of vector $\vcv_k$, for any $k\in\Zbb_+$.

Assume a set of $\nD\in\Nbb$ input-output pairs of measurement data is given as
\begin{equation}
    \Dcal:=\big\{(\vcu_k,\vcy_k)\,|\, k=0,\ldots,\nD-1\big\}
\end{equation}
Accordingly, we introduce the main problem discussed in this paper as identifying system $\Scal$ through estimating the unknown vector and matrices mentioned above using the set of data $\Dcal$. 
\begin{problem*}[\textbf{Identification Problem for Linear Dynamics with Bilinear Observation Models}] \label{pr:Bilinear Observation}
Given the measurement set of data $\Dcal$, 
estimate $\mxA$, $\mxB$, $\mxC_0$, $\mxC_1$, $\ldots$, $\mxC_{\Nu}$, $\mxD$, $\mux$, $\mxSx$, $\mxSw$, and $\mxSv$.
\end{problem*}
To address the estimation problem introduced, we utilize probabilistic approaches, specifically the maximum likelihood (ML) method \cite{lehmann2006theory} and the expectation--maximization (EM) algorithm \cite{theodoridis2006pattern}. In the following sections, we derive tractable procedures for implementing these methods. 
Detailed discussions can be found in Section~\ref{sec:ML} and Section~\ref{sec:An Expectation-Maximization Approach}.
\section{Maximum Likelihood Approach}\label{sec:ML}

To address the identification problem introduced in Section~\ref{sec:pf}, a natural choice is the maximum likelihood (ML) estimation method \cite{lehmann2006theory}, which identifies the parameters that maximize the likelihood of observed measurements under the assumed model. 
In this section, we outline how to apply the ML method to our identification problem.
More precisely, we derive the likelihood function based on the observed data and the parameterized model, and then, determine the values of the optimal parameters by maximizing the likelihood function. 

To derive the likelihood function and formulate the maximum likelihood scheme, we first define multi-tuple of parameters, denoted by  $\vctheta$, as
\begin{equation}
    {\vctheta} 
    := 
    \,\,(
    \mxA, 
    \mxB, 
    \mxC_0,\mxC_1,...,\mxC_{\Nu}, 
    \mxD, 
    \vcmu_{\vcx_0},
    \mxS_{\vcx_0},
    \mxS_\vcw,\mxS_\vcv),
    \end{equation}
which belongs to the vector space $\Vbb$ defined as
\begin{equation}
    \Vbb:=
    \Rbb^{\Nx\times\Nx} 
    \times
    \Rbb^{\Nx\times\Nu} 
    \times
    \Rbb^{\Ny\times\Nx} 
    \times
    \cdots
    \times 
    \Rbb^{\Ny\times\Nx} 
    \times
    \Rbb^{\Ny\times\Nu} 
    \times
    \Rbb^{\Nx} 
    \times
    \Sbb^{\Nx\times\Nx} 
    \times
    \Sbb^{\Nx\times\Nx} 
    \times
    \Sbb^{\Ny\times\Ny}. 
\end{equation}
Also, let $\Theta \subseteq \Vbb$ be the feasible set for $\vctheta$, specified based on the structure of the its entries and the corresponding constraints, i.e.,
\begin{equation}
    \Theta :=
    \Rbb^{\Nx\times\Nx} 
    \times
    \Rbb^{\Nx\times\Nu} 
    \times
    \Rbb^{\Ny\times\Nx} 
    \times
    \cdots
    \times 
    \Rbb^{\Ny\times\Nx} 
    \times
    \Rbb^{\Ny\times\Nu} 
    \times
    \Rbb^{\Nx} 
    \times
    \Sbb_{++}^{\Nx\times\Nx} 
    \times
    \Sbb_{++}^{\Nx\times\Nx} 
    \times
    \Sbb_{++}^{\Ny\times\Ny}. 
\end{equation}
Let $\mathbf{u}$ and $\mathbf{y}$ denote the vectors of input and output measurements, respectively, defined as $\mathbf{y} := [\vcy_0^\tr,\vcy_1^\tr,..,\vcy_{{\nsD-1}}^\tr]^\tr\in\Rbb^{\Ny\nsD}$ and $\mathbf{u} := [\vcu_0^\tr,\vcu_1^\tr,..,\vcu_{\nsD-1}^\tr]^\tr\in\Rbb^{\Nu\nsD}$. The ML estimation approach identifies $\vctheta$ by determining the 
parameters $\vcthetaML$ that maximizes the probability of observing the given measurements $\mathbf{u}$ and $\mathbf{y}$. 
More precisely, we have 
\begin{equation} \label{eqn:ML_generic}
    \vcthetaML  := \argmaxOp_{\vctheta \in \Theta}\, p(\mathbf{y}|\vctheta, \mathbf{u}).
\end{equation} 
To solve the optimization problem \eqref{eqn:ML_generic}, we need to simplify the corresponding objective function
and express it in a tractable closed form characterized directly in terms of the optimization variable $\vctheta$.

\subsection{Maximum Likelihood Optimization Problem}\label{sec:ML formulation}
Let $\mathbf{x}$ denote the vector of state trajectory, defined as $\mathbf{x} := [\vcx_0^\tr,\vcx_1^\tr,..,\vcx_{{\nsD}-1}^\tr]^\tr$. 
According to \eqref{eqn:dynamics1}--\eqref{eqn:v_dist_Gaussion}, one can easily see that $\mathbf{x}\big|\vctheta,\mathbf{u}$ and $\mathbf{y}\big|\mathbf{x},\vctheta,\mathbf{u}$ follow Gaussian distributions, i.e., we have
\begin{equation}
  \mathbf{x}\big|\vctheta,\mathbf{u} 
  \ \sim \ 
  \mathcal{N}
  \big(
  \mu^{(\mathbf{x})}(\vctheta), \mxS^{(\mathbf{x})}(\vctheta)
  \big),
\end{equation}
and 
\begin{equation}
  \mathbf{y}\big|\mathbf{x},\vctheta,\mathbf{u} 
  \ \sim \
  \mathcal{N}
  \big(
  \mu^{(\mathbf{y})}(\vctheta,\mathbf{x}), \mxS^{(\mathbf{y})}(\vctheta,\mathbf{x})
  \big),  
\end{equation}
where 
$\mu^{(\mathbf{x})}(\vctheta)$, 
$\mu^{(\mathbf{y})}(\vctheta,\mathbf{x})$, 
$\mxS^{(\mathbf{x})}(\vctheta)$, and
$\mxS^{(\mathbf{y})}(\vctheta,\mathbf{x})$
are the corresponding mean vectors and covariance matrices.
For ease of notation, we have omitted the dependence on the input measurement vector $\mathbf{u}$, assuming that it is implicitly known.
Using the law of total probability, we know that 
\begin{equation}\label{eq:law_of_total_probability}
\begin{split}
p(\mathbf{y}|\vctheta, \mathbf{u}) 
= 
\int_{\Xbb}
\,
p(\mathbf{y}|\mathbf{x}, \vctheta, \mathbf{u})p(\mathbf{x}|\vctheta, \mathbf{u}) \drm \mathbf{x}, 
\end{split}
\end{equation}
where 
$\Xbb$ 
is the integration domain, defined as $\Rbb^{\Nx\nsD}$.
Thus, the maximization problem \eqref{eqn:ML_generic} can be written as 
\begin{equation}\label{eq:likelihood0}
\begin{split}
\vcthetaML 
= 
\argmaxOp_{\vctheta\in\Theta} 
\int_{\Xbb}\, p(\mathbf{y}|\mathbf{x}, \vctheta, \mathbf{u})p(\mathbf{x}|\vctheta, \mathbf{u}) \drm \mathbf{x}. \\
\end{split}
\end{equation}
To formulate a tractable optimization problem for the ML estimation of parameters based on \eqref{eq:likelihood0}, we need to evaluate the integral in \eqref{eq:law_of_total_probability}, which requires
determining the values of $\mu^{(\mathbf{x})}(\vctheta)$, $\mu^{(\mathbf{y})}(\vctheta,\mathbf{x})$, $\mxS^{(\mathbf{x})}(\vctheta)$, and $\mxS^{(\mathbf{y})}(\vctheta,\mathbf{x})$.
%
%
According to \eqref{eqn:dynamics1}, we have
\begin{equation}\label{eqn:x_t}
    \vcx_t = \mxA^t\vcx_0 + \sum_{k=0}^{t-1}\mxA^{t-1-k}\mxB \vcu_k + \sum_{k=0}^{t-1} \mxA^{t-1-k}\vcw_k,
\end{equation}
for each $t = 1,2,\cdots, {\nsmD-1}$.
One can easily see the middle term in \eqref{eqn:x_t} is determined when $\vctheta$ and $\mathbf{u}$ are known.
Additionally, since we assume that $\vcx_0$ and $\vcw_k$ are independent Gaussian random vectors, the first and last terms also follow independent Gaussian distributions given $\vctheta$ and $\mathbf{u}$. Consequently, $\vcx_t$ also follows a Gaussian distribution. Thus, regarding $p(\mathbf{x}|\vctheta, \mathbf{u})$, 
we can obtain the joint Gaussian distribution $\mathcal{N}\left(\mu^{(\mathbf{x})}(\vctheta), \mxS^{(\mathbf{x})}(\vctheta)\right)$, where
\begin{equation}\label{eq:mean x}
    \begin{split}
    \mu^{(\mathbf{x})}(\vctheta) := 
    \Exp\big[\mathbf{x}\big| \vctheta, \mathbf{u}\big] = 
    \begin{bmatrix}
        \mu_{\vcx_0} \\
        \mxA\mu_{\vcx_0} + \mxB\vcu_0\\
        \vdots\\
        \mxA^t\mu_{\vcx_0} + \sum_{k=0}^{t-1}\mxA^{t-1-k}\mxB \vcu_k\\
        \vdots\\
        \mxA^{\nsD-1}\mu_{\vcx_0} + \sum_{k=0}^{\nsD-2}\mxA^{\nsD-2-k}\mxB \vcu_k\\
    \end{bmatrix},
    \end{split}
\end{equation}
and
\begin{equation}\label{eq:covariance x}
    \begin{split}
    \mxS^{(\mathbf{x})}{(\vctheta)} := 
    \Exp
    \big[
    \big(\mathbf{x}- \mu^{(\mathbf{x})}(\vctheta)\big)
    \big(\mathbf{x}- \mu^{(\mathbf{x})}(\vctheta)\big)^\tr    
    \big| \vctheta, \mathbf{u}
    \big] = 
    \begin{bmatrix}
        \mxS^{(\mathbf{x})}_{0,0}(\vctheta) & \, \mxS^{(\mathbf{x})}_{0,1}(\vctheta) & \cdots & \mxS^{(\mathbf{x})}_{0,\nsD-1}(\vctheta)\\[10pt]
        \mxS^{(\mathbf{x})}_{1,0}(\vctheta) & \, \mxS^{(\mathbf{x})}_{1,1}(\vctheta) & \cdots & \mxS^{(\mathbf{x})}_{1,\nsD-1}(\vctheta)\\
        \vdots & \, \vdots & \ddots & \vdots\\
        \mxS^{(\mathbf{x})}_{\nsD-1,0}(\vctheta) & \, \mxS^{(\mathbf{x})}_{\nsD-1,1}(\vctheta) & \cdots & \mxS^{(\mathbf{x})}_{\nsD-1,\nsD-1}(\vctheta)\\
    \end{bmatrix},
    \end{split}
\end{equation}
given that $\mxS^{(\mathbf{x})}_{t,s}(\vctheta)$ is defined as
\begin{equation}
    \begin{split}
        \mxS_{t,s}^{(\mathbf{x})}(\vctheta) 
        :=&\ 
        \mathbb{E}\big[(\vcx_t - \mu_t^{(\mathbf{x})})(\vcx_s - \mu_s^{(\mathbf{x})})^\tr\big| \vctheta, \mathbf{u}
        \big]
        =
        \mxA^t\mxS_{\vcx_0}(\mxA^s)^\tr + \sum_{k=0}^{\min(s,t) - 1}\mxA^{t-1-k}\mxS_\vcw(\mxA^{s-1-k})^\tr,
    \end{split}
\end{equation}
for any $t,s = 1,2,\cdots, \nD-1$. 
One can easily see that for $p(\mathbf{y}|\mathbf{x},\vctheta, \mathbf{u})$, we have similar arguments. For $t = 1,2,\cdots, \nD-1$, let $\Xi_t(\vctheta)$ be define as
\begin{equation}\label{eqn:Xi_t}
    \Xi_t(\vctheta) := {\mxC}_0 + \sum_{i=1}^{\Nu} {\mxC}_i \vcu_{t,i},
\end{equation}
where the dependence of $\Xi_t(\vctheta)$ on the input measurement vector $\mathbf{u}$ is omitted for ease of notation, and assumed to be implicitly known. 
Following the problem settings introduced in Section~\ref{sec:pf}, we know that $\vcv_0, \vcv_1, \cdots, \vcv_{\nsD-1}$ are zero-mean independent Gaussian random vectors with individual covariance $\mxS_\vcv$.
Therefore, according to \eqref{eqn:dynamics2}, one can see that
\begin{equation}
\begin{split}
    \mu_t^{(\mathbf{y})}(\vctheta,\mathbf{x}) 
    :=
    \Exp
    \big[
    \vcy_t 
    \big| 
    \mathbf{x}, \vctheta, \mathbf{u}
    \big]
     &=\ 
     \Xi_t(\vctheta) \vcx_t + \mxD \vcu_t,
\end{split}
\end{equation}
and
\begin{equation}
\begin{split}
    \mxS_{t,s}^{(\mathbf{y})}({\vctheta}) 
    :=
    \Exp
    \big[
    (\vcy_t - \mu_t^{(\mathbf{y})})(\vcy_s - \mu_s^{(\mathbf{y})})^\tr 
    \big| 
    \mathbf{x}, \vctheta, \mathbf{u}
    \big] 
    = 
    \begin{cases}
        \mxS_\vcv, & \text{if }\  t = s,\\
        \mathbf{0}, & \text{if }\  t \ne s,\\
    \end{cases}
\end{split}
\end{equation}
for any $t,s = 0,1,\cdots, \nsmD-1$.
Note that $\mxS_{t,s}^{(\mathbf{y})}({\vctheta})$  does not depend on the input measurement vector $\mathbf{u}$, and thus, the argument $\mathbf{u}$ is omitted for the ease of notation and brevity.
Hence, we know that $\mathbf{y}$ has  Gaussian distribution as $\mathcal{N}(\mu^{(\mathbf{y})}(\vctheta,\mathbf{x}), \mxS^{(\mathbf{y})}(\vctheta,\mathbf{x}))$,
where 
$\mu^{(\mathbf{y})}({\vctheta},\mathbf{x})$ and $\mxS^{(\mathbf{y})}({\vctheta})$ can be written as
\begin{equation}\label{eq:mean y}
        \mu^{(\mathbf{y})}({\vctheta},\mathbf{x}) 
        := 
        \big[
        \mu_t^{(\mathbf{y})}(\vctheta,\mathbf{x})
        \big]_{t\,=\,0}^{\nsD-1}
        =
        \begin{bmatrix}
        \Xi_0(\vctheta) \vcx_0 + \mxD \vcu_0 \\
        \Xi_1(\vctheta) \vcx_1 + \mxD \vcu_1 \\
        \vdots \\
        \Xi_{\nsD-1}(\vctheta) \vcx_{\nsD-1} + \mxD \vcu_{\nsD-1}\\
        \end{bmatrix},
\end{equation}
and
\begin{equation}\label{eq:covariance y}
    \mxS^{(\mathbf{y})}({\vctheta})
    := 
    \big[
    \mxS_{t,s}^{(\mathbf{y})}({\vctheta}) 
    \big]_{s,t\,=\,0}^{\nsD-1}
    =
    \begin{bmatrix}
        \mxS_\vcv & \mathbf{0} & \cdots & \mathbf{0}\\
        \mathbf{0} & \mxS_\vcv & \cdots & \mathbf{0}\\
        \vdots & \vdots & \ddots & \vdots\\
        \mathbf{0} & \mathbf{0} & \cdots & \mxS_\vcv\\
    \end{bmatrix}
    = 
    \mathbf{I}_{\nsD}\otimes \mxS_\vcv.
\end{equation}
For further simplification, we additionally 
define $\Xi(\vctheta)$ and $\mathbf{D}(\vctheta)$ as  
\begin{equation}
    \Xi(\vctheta) := \begin{bmatrix}
        \Xi_0(\vctheta) & \mathbf{0} & \cdots & \mathbf{0}\\
        \mathbf{0} & \Xi_1(\vctheta) & \cdots & \mathbf{0}\\
        \vdots & \vdots & \ddots & \vdots\\
        \mathbf{0} & \mathbf{0} & \cdots & \Xi_{\nsD-1}(\vctheta)\\
    \end{bmatrix},
\end{equation}
and
\begin{equation}
    \mathbf{D}(\vctheta) := 
    \begin{bmatrix}
        \mxD & \mathbf{0} & \cdots & \mathbf{0}\\
        \mathbf{0} & \mxD & \cdots & \mathbf{0}\\
        \vdots & \vdots & \ddots & \vdots\\
        \mathbf{0} & \mathbf{0} & \cdots & \mxD\\
    \end{bmatrix}
    = \mathbf{I}_\nsD \otimes \mxD,
\end{equation}
respectively.
Accordingly, due to \eqref{eq:mean y}, for $\mu^{(\mathbf{y})}({\vctheta},\mathbf{x})$, 
we have 
\begin{equation}
\mu^{(\mathbf{y})}({\vctheta},\mathbf{x}) = \Xi(\vctheta)\mathbf{x} + \mathbf{D}(\vctheta)\mathbf{u}.
\end{equation}

%
Applying  the law of total probability and using the Gaussian distributions derived above, we can express the likelihood function as
\begin{equation}\label{eq: likelihoog int}
\begin{split}
        p(\mathbf{y}|\vctheta, \mathbf{u})
        & =
        \int_{\Xbb}
        \,
        p(\mathbf{y}|\mathbf{x}, \vctheta, \mathbf{u})p(\mathbf{x}|\vctheta, \mathbf{u}) \drm \mathbf{x}, 
        \\&= 
        \frac{(2\pi)^{-\frac{1}{2}(\Nx+\Ny)\nsD}}{|\Sbfxtheta|^{\frac{1}{2}} |\Sbfytheta|^{\frac{1}{2}}} \int_{\Xbb} 
        \exp 
        \bigg( -\frac{1}{2} \Big[ 
        \big(\mathbf{y}-\mubfythetax \big)^\tr
        \mxS^{{(\mathbf{y})}}\!(\vctheta)^{-1}
        \big( \mathbf{y}-\mubfythetax \big) \\
        &
        \qquad\qquad \qquad\qquad \qquad\qquad \qquad\qquad \qquad 
        + 
        \big(\mathbf{x}-\mubfxtheta \big)^\tr
        \mxS^{{(\mathbf{x})}}\!(\vctheta)^{-1}
        \big( \mathbf{x}-\mubfxtheta \big) 
        \Big] \bigg) 
        \drm \mathbf{x}.
\end{split}
\end{equation}
%
%
One can see that the current form of the likelihood function described above is excessively complex to directly obtain $\vcthetaML$ using \eqref{eq:mean x}. Thus, we simplify it to formulate a tractable optimization problem, as outlined in the following proposition.

\begin{proposition}\label{Pro:loglikelihood}
Define function 
$\, J:\Vbb\to\extendedR \,$ 
as
\begin{equation}\label{eq:cost function ML}
    \begin{split}
    J({\vctheta}) :=&\,  
    \logdet\big(
        \Xitheta^\tr 
        \Sbfytheta^{-1} 
        \Xitheta 
        + 
        \Sbfxtheta^{-1}
    \big) 
    \, + \, 
    \logdet\big(\Sbfxtheta \big) 
    \, + \, 
    \logdet\big(\Sbfytheta \big)
     \\ &\quad
    +\Big(\big(\mathbf{u}^\tr 
    \mathbf{D}(\vctheta)^\tr 
    - \mathbf{y}^\tr + \mu^{(\mathbf{x})}({\vctheta})^\tr\Xi({\vctheta})^\tr\big)
    \big(\Xi({\vctheta})\Sbfxtheta \Xi({\vctheta})^\tr 
    + \Sbfytheta \big)^{-1}
    \big(\mathbf{D}(\vctheta) \mathbf{u} - \mathbf{y} + \Xi({\vctheta})\mubfxtheta \big)\Big). 
    \end{split}
\end{equation}
Then, the maximum likelihood problem 
\eqref{eqn:ML_generic}
is equivalent to minimizing $J$ over $\Theta$, i.e., we have
        \begin{equation}\label{eqn:min_J}
            \begin{split}
                \vcthetaML = \argminOp_{{\vctheta}\in\Theta} \ J({\vctheta}) 
            \end{split}
        \end{equation}
\end{proposition}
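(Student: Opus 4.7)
The plan is to show that $J(\vctheta)=-2\log p(\mathbf{y}|\vctheta,\mathbf{u})+C$ for some constant $C$ independent of $\vctheta$; by strict monotonicity of $-\log$, \eqref{eqn:min_J} follows immediately from \eqref{eqn:ML_generic}. The starting point is the integral representation \eqref{eq: likelihoog int}, which is already made explicit in the excerpt. Since the bracketed expression in the exponent is jointly quadratic in $\mathbf{x}$, the integral can be evaluated analytically by completing the square in $\mathbf{x}$. The positive-definiteness conditions encoded in $\Theta$ guarantee that $\Sbfxtheta$ and $\Sbfytheta$ are invertible, so every inverse appearing below is well defined.

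Concretely, with the shorthand $\mathbf{r}:=\mathbf{y}-\mathbf{D}(\vctheta)\mathbf{u}$, the exponent of the integrand in \eqref{eq: likelihoog int} rearranges to $-\tfrac{1}{2}\big(\mathbf{x}^\tr\mathbf{M}\mathbf{x}-2\mathbf{x}^\tr\mathbf{b}+c\big)$, where $\mathbf{M}:=\Xitheta^\tr\Sbfytheta^{-1}\Xitheta+\Sbfxtheta^{-1}$, $\mathbf{b}:=\Xitheta^\tr\Sbfytheta^{-1}\mathbf{r}+\Sbfxtheta^{-1}\mubfxtheta$, and $c$ collects the remaining $\mathbf{x}$-independent pieces. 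Completing the square and integrating over $\Xbb=\Rbb^{\Nx\nsD}$ produces the Gaussian normalization factor $(2\pi)^{\Nx\nsD/2}\det(\mathbf{M})^{-1/2}$ together with the exponential residual $\exp\!\big(-\tfrac{1}{2}(c-\mathbf{b}^\tr\mathbf{M}^{-1}\mathbf{b})\big)$. Multiplying by the Gaussian prefactor of \eqref{eq: likelihoog int}, taking $-2\log$, and discarding the $\vctheta$-independent constant $\Ny\nsD\log(2\pi)$, yields exactly the three log-determinant terms $\logdet(\mathbf{M})$, $\logdet(\Sbfxtheta)$, $\logdet(\Sbfytheta)$, plus the quadratic residual $c-\mathbf{b}^\tr\mathbf{M}^{-1}\mathbf{b}$. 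The first three match the first three summands of $J(\vctheta)$ verbatim.

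The remaining task is to identify the residual $c-\mathbf{b}^\tr\mathbf{M}^{-1}\mathbf{b}$ with the fourth summand of \eqref{eq:cost function ML}. This is a one-step consequence of the Woodbury identity applied in the form $(\Xitheta\Sbfxtheta\Xitheta^\tr+\Sbfytheta)^{-1}=\Sbfytheta^{-1}-\Sbfytheta^{-1}\Xitheta\mathbf{M}^{-1}\Xitheta^\tr\Sbfytheta^{-1}$. Expanding $c$ and $\mathbf{b}^\tr\mathbf{M}^{-1}\mathbf{b}$ explicitly in terms of $\mathbf{r}$ and $\mubfxtheta$, and substituting this identity, one obtains after routine cancellations the ``marginal-covariance'' quadratic $\big(\Xitheta\mubfxtheta+\mathbf{D}(\vctheta)\mathbf{u}-\mathbf{y}\big)^\tr\big(\Xitheta\Sbfxtheta\Xitheta^\tr+\Sbfytheta\big)^{-1}\big(\Xitheta\mubfxtheta+\mathbf{D}(\vctheta)\mathbf{u}-\mathbf{y}\big)$, which is exactly the last summand of $J(\vctheta)$. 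Combining the two matches finishes the proof.

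The main obstacle is bookkeeping rather than conceptual depth: the $\mathbf{x}$-quadratic in the exponent must be reorganized carefully to produce the correct $\mathbf{M}$, $\mathbf{b}$, $c$, and the Woodbury step has to be executed without sign or factor errors. As a sanity check, an equivalent and more structural route exploits the fact that once $\vctheta$ is fixed, $\Xitheta$ is a deterministic matrix, so the observation model is linear-Gaussian and $\mathbf{y}|\vctheta,\mathbf{u}$ is marginally $\mathcal{N}\big(\Xitheta\mubfxtheta+\mathbf{D}(\vctheta)\mathbf{u},\ \Xitheta\Sbfxtheta\Xitheta^\tr+\Sbfytheta\big)$; the matrix determinant lemma then splits the covariance log-determinant into the three log-determinants appearing in $J(\vctheta)$, giving the same identification with no extra work.
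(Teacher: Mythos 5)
Your proposal is correct and follows essentially the same route as the paper's own proof: complete the square in $\mathbf{x}$ (your $\mathbf{M}$ is the paper's $\mxS_\vcq(\vctheta)^{-1}$), evaluate the Gaussian integral, and invoke the Woodbury identity to recast the residual quadratic in terms of $\big(\Xitheta\Sbfxtheta\Xitheta^\tr+\Sbfytheta\big)^{-1}$; the "routine cancellations" you defer are precisely the algebra the paper carries out at length in Appendix B. The marginalization argument you add as a sanity check (together with the matrix determinant lemma) is a genuinely slicker way to obtain the same three log-determinants and the quadratic term, but since you present it only as a cross-check, the substance of your argument coincides with the paper's.
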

\begin{proof}
        The proof is provided in Appendix \ref{app:proof 1}.
\end{proof}


\subsection{Maximum Likelihood Estimation: Toward First-Order Approaches}\label{sec:Maximum Likelihood Estimation: Toward first-order approaches}
In the preceding discussion, we outlined the formulation of the log-likelihood estimation method and demonstrated that, to estimate the parameter vector $\vctheta$ using the ML approach, we need to solve the optimization problem presented in \eqref{eqn:min_J}. One can easily see that $J$ is a nonlinear function of the parameters $\vctheta$. 
Accordingly, to solve \eqref{eqn:min_J}, we can employ first-order optimization methods \cite{nocedal1999numerical}, such as \emph{Gradient Descent}.  
To implement these methods, it is necessary to obtain the gradient vector $\nabla_\vctheta J$ by deriving the derivatives of the objective function $J$ with respect to each parameter.

Note that $J$ depends on $\mu^{(\mathbf{x})}, \mxS^{(\mathbf{x})}, \mxS^{(\mathbf{y})}, \Xi, \mathbf{D}$, each of which is a function of $\vctheta$. Consequently, we can apply matrix calculus techniques \cite{petersen2008matrix}, particularly \emph{matrix chain rule} outlined in Lemma~\ref{lem:chain_rule} in Appendix~\ref{app:auxiliary lemma}, to derive the gradient vector $\nabla_\vctheta \, J$. More precisely, we first obtain the derivatives of $J$ with respect to the mentioned terms, as given by
\begin{align}
        \pdv{J(\vctheta)}{\mu^{(\mathbf{x})}} 
        =& \, 
        2 \Xitheta^\tr
        \big(\Sbfytheta + \Xitheta \Sbfxtheta \Xitheta^\tr\big)^{-1}
        \big(\mathbf{D}(\vctheta) \mathbf{u} - \mathbf{y}\big) 
        + 
        2\Xi^\tr(\vctheta) \big(\Sbfytheta + \Xitheta \Sbfxtheta \Xitheta^\tr\big)^{-1}\Xitheta \mubfxtheta ,\label{eq:derivative to mux}\\ 
        \pdv{J(\vctheta)}{\mxS^{(\mathbf{x})}} =& -\Xitheta^\tr\big(\Sbfytheta + \Xitheta \Sbfxtheta \Xitheta^\tr\big)^{-1} \big(\mathbf{D}(\vctheta)\mathbf{u} - \mathbf{y} + \Xitheta \mubfxtheta\big) \big(\mathbf{D}(\vctheta)\mathbf{u} - \mathbf{y} + \Xitheta \mubfxtheta\big)^\tr 
        \nonumber
        \\
        &\big(\Sbfytheta + \Xitheta \Sbfxtheta \Xitheta^\tr\big)^{-1}\Xitheta  - \Sbfxtheta^{-1}\big(\Sbfxtheta^{-1} + \Xitheta^\tr \Sbfytheta^{-1} \Xitheta\big)^{-1} \Sbfxtheta^{-1} + \Sbfxtheta^{-1},
        \label{eq:derivative to Sx}\\
        \pdv{J(\vctheta)}{\Xi} =& -2\big(\Sbfytheta + \Xitheta \Sbfxtheta \Xitheta^\tr\big)^{-1} \big(\mathbf{D}(\vctheta)\mathbf{u} - \mathbf{y} + \Xitheta \mubfxtheta\big) \big(\mathbf{D}(\vctheta)\mathbf{u} - \mathbf{y} + \Xitheta \mubfxtheta\big)^\tr (\Sbfytheta 
        \nonumber\\
        &+\Xitheta \Sbfxtheta \Xitheta^\tr)^{-1}\Xitheta \Sbfxtheta +2\big(\Sbfytheta + \Xitheta \Sbfxtheta \Xitheta^\tr\big)^{-1}\big(\mathbf{D}(\vctheta)\mathbf{u}-\mathbf{y} + \Xitheta \mubfxtheta\big) \mubfxtheta^\tr 
        \nonumber\\
        &+ 2\Sbfytheta^{-1} \Xitheta \big(\Xitheta^\tr \Sbfytheta \Xitheta + \Sbfxtheta^{-1}\big)^{-1},
        \label{eq:derivative to Xi}\\
        \pdv{J(\vctheta)}{\mathbf{D}} =& \, 2\big(\Sbfytheta + \Xitheta \Sbfxtheta \Xitheta^\tr\big)^{-1}\mathbf{D}(\vctheta)\mathbf{u}\mathbf{u}^\tr - 2\big(\Sbfytheta + \Xitheta \Sbfxtheta \Xitheta^\tr\big)^{-1}(\mathbf{y}+\Xitheta \mubfxtheta)\mathbf{u}^\tr
        \nonumber\\
        &+ 4\big(\Sbfytheta + \Xitheta \Sbfxtheta \Xitheta^\tr\big)^{-1}\Xitheta \mubfxtheta \mathbf{u}^\tr,
        \label{eq:derivative to D}\\
        \pdv{J(\vctheta)}{\mxS^{(\mathbf{y})}} =& -\big(\Sbfytheta + \Xitheta \Sbfxtheta \Xitheta^\tr\big)^{-1} \big(\mathbf{D}(\vctheta)\mathbf{u} - \mathbf{y} + \Xitheta \mubfxtheta\big) \big(\mathbf{D}(\vctheta)\mathbf{u} - \mathbf{y} + \Xitheta \mubfxtheta\big)^\tr \big(\Sbfytheta 
        \nonumber\\
        &+ \Xitheta \Sbfxtheta \Xitheta^\tr\big)^{-1}  - \Sbfytheta^{-1} \Xitheta\big(\Xitheta^\tr \Sbfytheta^{-1} \Xitheta + \Sbfxtheta^{-1}\big)^{-1}\Xitheta^\tr\Sbfytheta^{-1} + \Sbfytheta^{-1}.
        \label{eq:derivative to Sy}
\end{align}
Subsequently, we need to determine the derivatives of $J$ with respect to each parameter. To this end, we define the matrices $\mxE_{i,j}$ and $f_{i,j}^{\ n}(\mxA)$ respectively as $\mxE_{i,j} = \vce_i\vce_j^\tr$ and $f_{i,j}^{\ n}(\mxA) := \sum_{r=0}^{n-1} \mxA^{r}\mxE_{i,j}\mxA^{n-1}$, for $i,j=1,\ldots,\Nx$. 
Similar to the the previous step, we apply matrix calculus techniques. 
Thus, 
for $i,j=1,\ldots,\Nx$, $\, k = 1,\ldots,\Nu$, $\, l,m=1,\ldots,\Ny$, and $\, t,s = 0,1,\cdots,\nD-1$, we obtain
\begin{align}
        \pdv{\mu^{(\mathbf{x})}_t(\vctheta)}{\left[\mxA \right]_{i,j}} &= \begin{cases}
            0 &\text{if }\  t=0,\\
            \mxE_{i,j} \mu_{\vcx_0} &\text{if }\  t=1,\\
            \sum_{r=0}^{t-1} \mxA^{r}\mxE_{i,j}\mxA^{t-1-r}\mu_{\vcx_0} + \sum_{k=0}^{t-2}f_{i,j}^{\ t-1-k}(\mxA)\mxB\vcu_k&\text{if }\  t\geq2,
        \end{cases}
        \label{eq:J derivative A2}\\
        \pdv{\mxS^{(\mathbf{x})}_{t,s}(\vctheta)}{\left[ \mxA \right]_{i,j}} &= \begin{cases}
            0 &\text{if }\  t,s=0,\\
            \mxS_{\vcx_0}\left[ f^{\ s}_{i,j}(\mxA) \right]^\tr &\text{if }\  t=0,s\geq 1,\\
            f^{\ t}_{i,j}(\mxA)\mxS_{\vcx_0} &\text{if }\  t \geq 1, s=0,\\
            \mxA^{t}\mxS_{\vcx_0} \left[ f^{\ s}_{i,j}(\mxA)\right]^\tr + f^{\ t}_{i,j}(\mxA)\mxS_{\vcx_0}\mxA^{s^\tr}  &\text{if }\ 
             t,s=1,\\
            \mxA^{t}\mxS_{\vcx_0} \left[ f^{\ s}_{i,j}(\mxA)\right]^\tr + f^{\ t}_{i,j}(\mxA)\mxS_{\vcx_0}\mxA^{s^\tr} + \mxS_\vcw \left[ f_{i,j}^{\ s-1}(\mxA) \right]^\tr &\text{if }\ 
             t=1,s\geq 2,\\
            \mxA^{t}\mxS_{\vcx_0} \left[ f^{\ s}_{i,j}(\mxA)\right]^\tr + f^{\ t}_{i,j}(\mxA)\mxS_{\vcx_0}\mxA^{s^\tr} + f_{i,j}^{\ t-1}(\mxA) \mxS_\vcw  &\text{if }\ 
             t \geq 2,s = 1,\\
            \mxA^{t}\mxS_{\vcx_0} \left[ f^{\ s}_{i,j}(\mxA)\right]^\tr + f^{\ t}_{i,j}(\mxA)\mxS_{\vcx_0}\mxA^{s^\tr} &\\
            \qquad+ \sum_{k=0}^{\min (s,t)-1} \left( \mxA^{t-1-k}\mxS_\vcw \left[ f_{i,j}^{\ s-k-1}(\mxA) \right]^\tr + f_{i,j}^{\ t-k-1}(\mxA)  \mxS_\vcw \mxA^{(s-k-1)^\tr}  \right) &\text{if }\ 
             t\geq 2,s\geq 2, \label{eq:J derivative A3}\\
        \end{cases}\\
        \pdv{\mu^{(\mathbf{x})}_t(\vctheta)}{\left[\mxB \right]_{i,k}} 
        &= 
        \sum_{\tau=0}^{t-1} \left[\mxA^{t-1-\tau} \right]_i \mathbf{u}_{\tau,k}, 
        \label{eq:J derivative B}\\
        \pdv{\mxS^{(\mathbf{x})}_{t,s}(\vctheta)}{\left[ \mxS_\vcw \right]_{i,j}} 
        &= 
        \sum_{\tau=0}^{\min(s,t)-1} \left[\mxA^{t-1-\tau}\right]_i \left[\mxA^{s-1-\tau}\right]_j^\tr,
        \label{eq:J derivative Sw}\\
        \pdv{\mxS^{(\mathbf{x})}_{t,s}(\vctheta)}{\left[ \mxS_{x_0} \right]_{i,j}} &= \left[ \mxA^t \right]_i \left[ \mxA^s \right]_j^\tr, 
        \label{eq:J derivative Sx0}
\end{align}
To bridge \eqref{eq:derivative to mux}--\eqref{eq:derivative to Sy}  
and \eqref{eq:J derivative A2}--\eqref{eq:J derivative Sx0}, we need to employ the matrix chain rule, i.e., using Lemma~\ref{lem:chain_rule}, we have
\begin{align}
        \pdv{J(\vctheta)}{\left[ \mxA \right]_{i,j}} &= \pdv{J(\vctheta)}{\mu^{(\mathbf{x})}(\vctheta)}{}^\tr\, \pdv{\mu^{(\mathbf{x})}(\vctheta)}{\left[\mxA \right]_{i,j}} 
        + 
        \trace\left(\pdv{J(\vctheta)}{\mxS^{(\mathbf{x})}(\vctheta)}{}^\tr\, \pdv{\mxS^{(\mathbf{x})}(\vctheta)}{\left[ \mxA \right]_{i,j}} \right), 
        \label{eq:J derivative A1}\\
        \pdv{J(\vctheta)}{\left[ \mxB \right]_{i,k}} &=  \pdv{J(\vctheta)}{\mu^{(\mathbf{x})}(\vctheta)}^\tr \pdv{\mu^{(\mathbf{x})}(\vctheta)}{\left[\mxB \right]_{i,k}}, 
        \\
        \pdv{J(\vctheta)}{[\mxC_0]_{l,j}} &= \trace \left( \pdv{J(\vctheta)}{\Xi(\vctheta)}^\tr \left( \mathbf{I}_{\nsD} \otimes \mxE_{l,j} \right) \right)
        = \sum_{t=1}^\nsD \left[ \pdv{J(\vctheta)}{\Xi(\vctheta)} \right]_{tl,tj}, 
        \label{eq:J derivative C0}\\
        \pdv{J(\vctheta)}{[\mxC_k]_{l,j}} &= \trace \left( \pdv{J(\vctheta)}{\Xi(\vctheta)}^\tr \left( \mathbf{I}_{\nsD} \otimes \mxE_{l,j} \right) \right)
        = \sum_{t=1}^\nsD \vcu_{k,t-1} \left[ \pdv{J(\vctheta)}{\Xi(\vctheta)} \right]_{tl,tj}, 
        \label{eq:J derivative Ci}\\
        \pdv{J(\vctheta)}{[\mxD]_{l,k}} &= \trace\left(\pdv{J(\vctheta)}{\mathbf{D}(\vctheta)}^\tr \left( \mathbf{I}_{\nsD} \otimes \mxE_{l,k} \right) \right)
         = \sum_{t=1}^\nsD \left[ \pdv{J(\vctheta)}{\mathbf{D}(\vctheta)} \right]_{tl,tk}, 
         \label{eq:J derivative D}\\
        \pdv{J(\vctheta)}{\left[\mxS_\vcw \right]_{i,j}} &= \trace \left( \pdv{J(\vctheta)}{\mxS^{(\mathbf{x})}(\vctheta)}^\tr \pdv{\mxS^{(\mathbf{x})}(\vctheta)}{\left[ \mxS_\vcw \right]_{i,j}} \right), 
        \\
        \pdv{J(\vctheta)}{[\mxS_\vcv]_{l,m}} &= \trace\left(\pdv{J(\vctheta)}{\mxS^{(\mathbf{y})}(\vctheta)}^\tr \left( \mathbf{I}_{\nsD} \otimes \mxE_{l,m} \right) \right)
         = \sum_{t=1}^\nsD \left[ \pdv{J(\vctheta)}{\mxS^{(\mathbf{y})}(\vctheta)} \right]_{tl,tm}, 
        \label{eq:J derivative Sv}\\
        \pdv{J(\vctheta)}{ \mu_{x_0}(\vctheta) } &= \begin{bmatrix}
            (\mxA^{0})^{\tr} & (\mxA^{1})^\tr & \cdots (\mxA^{\nsD-1})^\tr
        \end{bmatrix} \pdv{J(\vctheta)}{\mu^{(\mathbf{x}})(\vctheta)},
        \label{eq:J derivative mux0}\\
        \pdv{J(\vctheta)}{[\mxS_{x_0}]_{i,j}} &= \trace\left(\pdv{J(\vctheta)}{\mxS^{(\mathbf{x})}(\vctheta)}^\tr \pdv{\mxS^{(\mathbf{x})}(\vctheta)}{\left[ \mxS_{x_0} \right]_{i,j}} \right), 
        \label{eq:J derivative muS0}
\end{align}
for $i,j=1,\ldots,\Nx$, $k = 1,\ldots,\Nu$, and $l,m=1,\ldots,\Ny$.
The proposed ML approach for estimating the parameter vector $\vctheta$ is outlined in Algorithm~\ref{Al:ML ID}.

\begin{remark}
    Given $\nabla_\vctheta J$, , alternative first-order optimization methods \cite{nocedal1999numerical} can be employed instead of gradient descent. For instance, the \emph{Broyden-Fletcher-Goldfarb-Shanno} (BFGS) algorithm or its \emph{limited-memory} variant (L-BFGS) can be used to achieve faster convergence. 
\end{remark}
\begin{remark}
One can easily see that the cost function $J$ is nonconvex. Consequently, the gradient descent algorithm, or any other first-order method, may converge to the local optima of $J$ rather than its global optimum, which can potentially affect the estimation performance and the accuracy of parameter identification.
\end{remark}
\begin{remark}\label{rem:comp_comp}
It can be shown that the derivative calculation used in the proposed ML method has a computational complexity of $O(\nD^3)$. One should note that this can lead to significant computational time and require substantial resources, particularly when the set of measurement data is considerably large.
\end{remark}

In the remainder of this section, we present a Monte Carlo empirical analysis of the computational complexity of Algorithm \ref{Al:ML ID}.

\begin{algorithm}[t]
    \caption{ML Estimation Method for Identification of Linear Dynamics with Bilinear Observation Models}\label{Al:ML ID}
    \begin{algorithmic}
        \Statex \textbf{Input:} $\Dcal$.
        \Statex \textbf{Output:} $\vctheta$.
        \State \textbf{Initial guess:} $\hat{\vctheta}_0$
        \State $k \gets 0$
        \While{$\mathbf{1}$}
            \State Current parameters estimates: $\hat{\vctheta} \gets \hat{\vctheta}_k$
            \State Compute $\mxS^{(\mathbf{x})}{(\hat{\vctheta})}, \mxS^{(\mathbf{y})}{(\hat{\vctheta})}, \Xi{(\hat{\vctheta})}, \mathbf{D}{(\hat{\vctheta})}, \mu^{(\mathbf{x})}{(\hat{\vctheta})}$.
            \State Compute $\pdv{J(\vctheta)}{\mu^{(\mathbf{x})}(\vctheta)}|_{\vctheta = \hat{\vctheta}}, \pdv{J(\vctheta)}{\mxS^{(\mathbf{x})}(\vctheta)}|_{\vctheta = \hat{\vctheta}}, \pdv{J(\vctheta)}{\Xi(\vctheta)}|_{\vctheta = \hat{\vctheta}}, \pdv{J(\vctheta)}{\mathbf{D}(\vctheta)}|_{\vctheta = \hat{\vctheta}}$ and $\pdv{J(\vctheta)}{\mxS^{(\mathbf{y})}(\vctheta)}|_{\vctheta = \hat{\vctheta}}$ as in \eqref{eq:derivative to mux}--\eqref{eq:derivative to Sy}. 
            \State Find the derivatives $\pdv{J(\vctheta)}{\vctheta}|_{\vctheta = \hat{\vctheta}}$ 
            \eqref{eq:J derivative A2} -- \eqref{eq:J derivative Sx0} and \eqref{eq:J derivative A1} -- \eqref{eq:J derivative muS0}.
            \State Update $\hat{\vctheta}_{k+1}$ using gradient descent.
            \If{$\lVert \hat{\vctheta}_{k+1} -- \hat{\vctheta}_k \rVert< \epsilon$,}
            \State break
            \Else
            \State $k \gets k+1$
            \EndIf
        \EndWhile
        \State $\vctheta \gets \hat{\vctheta}_k$
    \end{algorithmic}
    \label{alg_1}
\end{algorithm}

\subsection{Computational Complexity Analysis of Maximum Likelihood: An Empirical Evaluation}
\label{sec:Complexity Analysis}
To further elaborate on the computational complexity of Algorithm \ref{Al:ML ID}, we perform Monte Carlo numerical analysis on simple example. To this end, we consider a one-dimensional system governed by the equations
\begin{equation}\label{eqn:simple_system}
\begin{split}
    \vcx_{t+1} &= 0.6\,\vcx_t + 0.45\,\vcu_t + \vcw_t,\\
    \vcy_t &= (0.3 + 0.1\vcu_t)\, \vcx_t + \vcv_t,
\end{split}
\end{equation}
with a mean initial state of $\mu_{\vcx_0} = 1$. 
For clarity and simplicity in the discussion, the variance of the initial state is set to a negligible value.
We conduct Monte Carlo numerical experiments considering different lengths for the set of data, specifically $\nD = 20, 40, \ldots, 200$. With respect to each of the considered lengths for the dataset, $100$ different realizations of the initial state, random binary input sequences, process noise, and measurement noise are generated. Subsequently, the state and output trajectories are obtained according to \eqref{eqn:simple_system}. 
For the sake of clarity in the discussion, the focus is on moderately noisy conditions, i.e., SNR levels ranging from $15$\,dB to $20$\,dB, and the noise variances are adjusted accordingly.
Algorithm~\ref{Al:ML ID} is employed to estimate the system parameters, initialized to half of their true values, i.e., $\hat{\vctheta}_0 = \frac{1}{2}\vctheta$.

Figure \ref{img:time complexity} shows the results of the performed Monte Carlo experiments. In the figure, $\mu$ and $\sigma$ denote the mean and standard deviation values of the running time, respectively. The shaded region indicates the range $(\mu -\sigma, \mu + \sigma)$, i.e., 68\%-confidence range. The dark blue curve corresponds to the mean running time. The figure highlights how the computational time increases with the length of the dataset, which is essentially follows the point mentioned in Remark~\ref{rem:comp_comp} on $O(\nD^3)$ computational complexity due to the calculation of the derivatives with respect to $\mxA$.
%

It is important to note that the proposed approach may impose high computational demands, particularly when dealing with large datasets, potentially making it impractical for real-world applications. To address this limitation, in the remainder of this paper, we introduce an alternative estimation method based on the Expectation--Maximization (EM) algorithm, which is significantly more computationally efficient and better suited for large datasets.


\begin{figure}[t!]
   \centering
   \includegraphics[width = 0.6\textwidth,trim={2cm 8.7cm 3cm 9cm},clip]{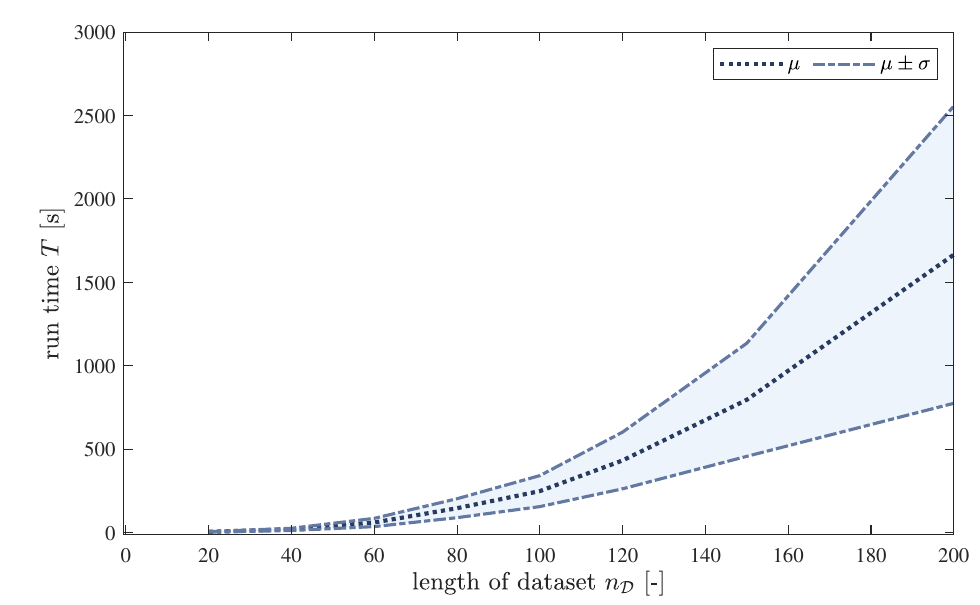}
   \caption{Time complexity analysis of algorithm \ref{Al:ML ID}. The red curve line is the mean running time. The red shaded region contains the interval $(\mu - \sigma, \mu + \sigma)$.}
   \label{img:time complexity}
\end{figure}
\section{Expectation-Maximization Approach}\label{sec:An Expectation-Maximization Approach}
In the previous section, we proposed an ML estimation approach for the system identification problem outlined in Section~\ref{sec:pf}. As noted earlier, obtaining the derivative of the likelihood function can be computationally intensive, particularly in scenarios involving large datasets, making the discussed ML approach limited and less suitable for real-world applications. To address this limitation, we propose an alternative method in this section based on the expectation-maximization (EM) framework. Compared to the introduced ML estimation method, the EM approach provides a more numerically efficient iterative framework, particularly when solving the maximum likelihood optimization problem is computationally challenging.
The EM algorithm proceeds iteratively, with each iteration consisting of two main steps, namely the expectation (E) step and the maximization (M) step. In the E step, a probability distribution for the state trajectories is determined using the measurement data and an initial parameter estimate, e.g., through employing a suitably adapted version of the Rauch-Tung-Striebel smoother \cite{rauch1965maximum}. In the subsequent M step, the estimation of system parameters is then updated by maximizing the expected log-likelihood function derived from the state estimates and distributions obtained in the E step. The resulting updated parameter estimates are then employed in the next EM iteration. 
In the following, we describe the proposed EM approach for addressing the identification problem outlined in Section~\ref{sec:pf}. More precisely, we describe the E step, namely details of the appropriate adaptation of the Rauch-Tung-Striebel smoother to the considered dynamics. Additionally, we outline how to solve the optimization problem in the M step, i.e., showing that the expected log-likelihood maximization admits a tractable closed-form solution.

\subsection{Rauch-Tung-Striebel Smoother}
\label{sec:RTS smoother}

%
In the E step, we obtain an estimation for the state trajectory. To this end, an appropriate variant of the Rauch-Tung-Striebel (RTS) smoother can be employed to derive a probability distribution for the state variables and, subsequently, estimate them.
The RTS smoother employs the Kalman filter \cite{kalman1960new}, which utilizes an estimate of parameters, the measurement data and the statistics of process and measurement noise to provide an initial estimation for the state variables of the system. It then applies a smoothing step to improve the accuracy of the estimated state variables.
Given the measurement data and an estimation of the parameters, the Kalman filter provides a Gaussian distribution for the state variables. In particular, for any time step $t = 0, 1, \ldots, \nD-1$, the Kalman filter computes the mean and covariance for the state variables estimates, denoted by $\hat{\vcx}_{t|t}$ and $\mxP_{t|t}$, respectively, as 
\begin{equation}
    \hat{\vcx}_{t|t} := \mathbb{E}\big[\vcx_t | \vcy_0, \ldots, \vcy_t, \hat{\vctheta}\big], 
\end{equation}
and
\begin{equation}
    \mxP_{t|t} := \mathbb{E}\big[(\vcx_t - \hat{\vcx}_{t|t})(\vcx_t - \hat{\vcx}_{t|t})^\tr | \vcy_0, \ldots, \vcy_t, \hat{\vctheta}\big],
\end{equation}
where $\hat{\vctheta}$ represents the given estimates of the parameters. 
%
%
%
%
Note that, for $t=0$, $\hat{\vcx}_{t|t}$ and $\mxP_{t|t}$ corresponds to $\hat{\mu}_{\vcx_0}$ and $\hat{\mxS}_{\vcx_0}$, respectively, and can thus be obtained from $\hat{\vctheta}$.
To estimate the state variables for time steps $t = 1, \ldots, \nD-1$, an additional inner iteration needs to be performed. More precisely, based on the dynamics described in \eqref{eqn:dynamics1} and \eqref{eqn:dynamics2}, the Kalman filter employs an iterative scheme as
\begin{equation}\label{eq:kalman filter}
    \begin{split}
        \hat{\vcx}_{t|t}   =&\,\, \hat{\vcx}_{t|t-1} + \mxK_t(\vcy_t - \hat{\Xi}_t \hat{\vcx}_{t|t-1}),\\
        \mxP_{t|t}   =&\,\, (\mathbb{I}_{\Nx} - \mxK_t \hat{\Xi}_t)\mxP_{t|t-1},\\
        \mxK_t   =&\,\, \mxP_{t|t-1}\hat{\Xi}_t^\tr[\hat{\Xi}_t \mxP_{t|t-1}\hat{\Xi}_t^\tr + \hat{\mxS}_\vcv]^{-1},\\
        \hat{\vcx}_{t+1|t}   =&\,\, \hat{\mxA}\hat{\vcx}_{t|t} + \hat{\mxB} \vcu_{t},\\
        \mxP_{t+1|t}   =&\,\, \hat{\mxA}\mxP_{t|t}\hat{\mxA}^\tr + \hat{\mxS}_\vcw,\\
    \end{split}
\end{equation}
where $\hat{\Xi}_t$ is defined as
\begin{equation}
\hat{\Xi}_t = \hat{\mxC}_0 + \sum_{i=1}^{\Nu} \hat{\mxC}_iu_{t,i}.    
\end{equation}
Subsequently, the RTS smoother improves the accuracy of the state estimates following an additional backward procedure, recursively smoothing the estimates obtained from the Kalman filter. More specifically, we obtain the mean and covariance of the state variables estimation, namely  $\hat{\vcx}_{t|{\nD}}$ and $\mxP_{t|{\nD}}$ defined respectively as $\hat{\vcx}_{t|{\nD}} := \mathbb{E}[\vcx_t|\mathbf{y},\hat{\vctheta}]$ and $\mxP_{t|{\nD}} := \mathbb{E}[(\vcx_t - \hat{\vcx}_{t|{\nD}})(\vcx_t - \hat{\vcx}_{t|{\nD}})^\tr|\mathbf{y},\hat{\vctheta}]$. To this end, for $t = 0,1, \ldots, \nD-2$, we employ a recursive procedure as 
\begin{equation}\label{eq:mean and covariance}
    \begin{split}
        \hat{\vcx}_{t|{\nD}}   =&\,\, \hat{\vcx}_{t|t} + {\mxL}_t(\hat{\vcx}_{t+1|\nD} - \hat{\vcx}_{t+1|t}),\\
        \mxP_{t|{\nD}}   =&\,\, \mxP_{t|t} + {\mxL}_t(\mxP_{t+1|{\nD}} - \mxP_{t+1|t}){\mxL}_t^\tr,\\
        \mxL_t   =&\,\, \mxP_{t|t}\hat{\mxA}^\tr\mxP_{t+1|t}^{-1},\\
    \end{split}
\end{equation}
%
Moreover, one can additionally obtain 
\begin{equation} \label{eq:correlation}
    \begin{split}
        \mathbb{E}[\vcx_t \vcx_t^{\tr}|\mathbf{y}, \hat{\vctheta}]   =&\,\, \hat{\vcx}_{t|{\nD}}\hat{\vcx}_{t|{\nD}}^\tr + \mxP_{t|{\nD}},
        \qquad\qquad\quad \ \;\!
        \forall\,t = 0,1, \ldots, \nD-1,
        \\
        \mathbb{E}[\vcx_{t+1} \vcx_t^{\tr}|\mathbf{y}, \hat{\vctheta}]   =&\,\, \hat{\vcx}_{t+1|{\nD}}\hat{\vcx}_{t|{\nD}}^\tr + \mxP_{t+1|{\nD}}\mxL_t^\tr,
        \qquad
        \forall\,t = 0,1, \ldots, \nD-2,
        \\
    \end{split}
\end{equation}
which will be utilized later in the EM algorithm, as discussed in the remainder of this section.

\subsection{Parameters Estimation using EM Algorithm} 
\label{sec:Parameters Estimation using EM algorithm}

In the M step, the estimated parameter, denoted by $\hat{\theta}$ in the E step, is updated by maximizing an expected log-likelihood function. Specifically, let  $\hat{\vctheta}_k$ denote the estimate of $\vctheta$ at iteration $k$ of the EM algorithm, and define function $Q(\cdot|\hat{\vctheta}_k):\Theta\to\Rbb$ as
\begin{equation} \label{eqn:Q_theta_theta_k}
    Q(\vctheta|\hat{\vctheta}_k) 
    := 
    \mathbb{E}_{p(\mathbf{x} |\mathbf{y},\hat{\vctheta}_k,\mathbf{u})} \big[\log p(\mathbf{x},\mathbf{y}|\vctheta,\mathbf{u})\big],    
\end{equation}
for any $\vctheta\in\Theta$.
Rather than directly maximizing the likelihood $p(\mathbf{y}|\vctheta,\mathbf{u})$ as discussed in Section~\ref{sec:ML}, the EM algorithm employs an iterative scheme that maximizes the auxiliary function $Q(\vctheta|\hat{\vctheta}_k)$ \cite{little2019statistical}.
More precisely, given $\vctheta_0\in\Theta$, an initial guess for the parameters $\vctheta$, EM iteratively generates a sequence of parameter estimations given by
\begin{equation}\label{eq:EM optimization}
    \hat{\vctheta}_{k+1}
    =
    \argmaxOp_{\vctheta\in\Theta}\,
    Q(\vctheta|\hat{\vctheta}_k),
\end{equation}
for $k\ge 0$.
The resulting iterative procedure leads to the improvement of $\log p(\mathbf{y}|\vctheta,\mathbf{u})$. Unlike the ML approach, which aims to directly identify the parameters that maximize the likelihood of the observed data, the EM algorithm indirectly increases the likelihood by successively refining $Q(\vctheta|\hat{\vctheta}_k)$.
More precisely, one can see that
\begin{equation} \label{eq:log likelihoood}
    \begin{split}
        \log p(\mathbf{y}|\vctheta,\mathbf{u}) \, \, 
        &=\,\, 
        \int_{\Xbb} p(\mathbf{x}|\mathbf{y},\hat{\vctheta}_k,\mathbf{u})
        \big[
        \log p(\mathbf{x},\mathbf{y}|\vctheta,\mathbf{u}) - \log p(\mathbf{x}|\mathbf{y},\vctheta,\mathbf{u})
        \big]\,\drm \mathbf{x}\\
        &= Q(\vctheta|\hat{\vctheta}_k) + I(\vctheta|\hat{\vctheta}_k),
    \end{split} 
\end{equation}
where function $I(\cdot|\hat{\vctheta}_k):\Theta\to\Rbb$ is defined as 
\begin{equation}
    I(\vctheta|\hat{\vctheta}_k) 
    := 
    -\, \mathbb{E}_{p(\mathbf{x} | \mathbf{y},\hat{\vctheta}_k,\mathbf{u})} 
    \big[
    \log p(\mathbf{x}|\mathbf{y},\vctheta,\mathbf{u})
    \big], 
\end{equation}
for any $\vctheta\in\Theta$.
Comparing $\log p(\mathbf{y}|\vctheta,\mathbf{u})$ with $\log p(\mathbf{y}|\hat{\vctheta}_k,\mathbf{u})$, the log-likelihood in the current iteration, we have
\begin{equation} \label{eq:Q convengence}
    \begin{split}
        \log p(\mathbf{y}|\vctheta,\mathbf{u}) 
        - 
        \log p(\mathbf{y}|\hat{\vctheta}_k,\mathbf{u}) =  Q(\vctheta|\hat{\vctheta}_k) - Q(\hat{\vctheta}_k|\hat{\vctheta}_k) + I(\vctheta|\hat{\vctheta}_k) - I(\hat{\vctheta}_k|\hat{\vctheta}_k).
    \end{split}
\end{equation}
From the Gibbs' inequality \cite{bremaud2012introduction}, we know that  $I(\vctheta|\hat{\vctheta}_k) \ge I(\hat{\vctheta}_k|\hat{\vctheta}_k)$, for any $\vctheta\in\Theta$. Accordingly, from \eqref{eq:Q convengence}, it is implied that
\begin{equation} \label{eq:Q convengence_2}
    \begin{split}
        \log p(\mathbf{y}|\vctheta,\mathbf{u}) 
        - 
        \log p(\mathbf{y}|\hat{\vctheta}_k,\mathbf{u}) 
        \ge 
        Q(\vctheta|\hat{\vctheta}_k) - Q(\hat{\vctheta}_k|\hat{\vctheta}_k).
    \end{split}
\end{equation}
for any $\vctheta\in\Theta$. 
Therefore, $Q(\vctheta|\hat{\vctheta}_k)$ provides a lower bound for $\log p(\mathbf{y}|\vctheta,\mathbf{u})$, and consequently, its successive improvement will ultimately result in improving the log-likelihood function. Hence, the EM algorithm does not find the maximizer of $\log p(\mathbf{y}|\vctheta,\mathbf{u})$ directly but improves the value through each iteration until it reaches convergence. 
%
As discussed in Section~\ref{sec:Complexity Analysis}, the ML approach offers a direct and intuitive approach to finding optimal parameters, albeit at a high computational cost. In contrast, as detailed below, the EM algorithm is better suited for large datasets, as maximizing $Q(\vctheta|\hat{\vctheta}_k)$ is computationally less expensive than maximizing the log-likelihood $\log p(\mathbf{y}|\vctheta,\mathbf{u})$. 
In the following, we derive $Q(\vctheta|\hat{\vctheta}_k)$ and subsequently demonstrate how to solve \eqref{eq:EM optimization}.

Recall that we have
\begin{equation}\label{eq:likelihood function}
    p(\mathbf{x},\mathbf{y}|\vctheta,\mathbf{u}) = p(\mathbf{y}|\mathbf{x},\vctheta)p(\mathbf{x}|\vctheta,\mathbf{u}).
\end{equation}
Using Markov properties \cite{ross2014introduction}, $p(\mathbf{y}|\mathbf{x},\vctheta)$ 
can be decomposed as
\begin{equation}
    p(\mathbf{y}|\mathbf{x},\vctheta) 
    = 
    \prod_{t=0}^{\nD-1} p(\vcy_t|\mathbf{x},\vctheta)
    = 
    \prod_{t=0}^{\nD-1} p(\vcy_t|\vcx_t,\vctheta).
\end{equation}
Furthermore, for $p(\mathbf{x}|\vctheta,\mathbf{u})$, we have
\begin{equation}
    p(\mathbf{x}|\vctheta,\mathbf{u}) 
    = 
    \prod_{t=0}^{\nsmD-1} p(\vcx_t|\vctheta,\mathbf{u})
    = 
    p(\vcx_0|\vctheta,\mathbf{u}) \prod_{t=0}^{\nD-1} p(\vcx_{t+1}|\vcx_{t},\vctheta).
\end{equation}
Thus, the log function of \eqref{eq:likelihood function} can be derived as 
\begin{equation}\label{eq:log-likelihood function}
    \begin{split}
        \log p(\mathbf{x},\mathbf{y}|\vctheta,\mathbf{u}) = \log p(\vcx_0|\vctheta,\mathbf{u}) + \sum_{t=0}^{\nD-1}\log p(\vcy_t|\vcx_t,\vctheta) + \sum_{t=0}^{\nD-1}\log p(\vcx_{t+1}|\vcx_{t},\vctheta).
    \end{split}
\end{equation} 
Following the same line of arguments as in Section~\ref{sec:ML}, it can be seen that $\vcy_t|\vcx_t,\vctheta$ and $\vcx_{t+1}|\vcx_t,\vctheta$ follow Gaussian distributions, given respectively by $\mathcal{N}(\Xi_t(\vctheta) \vcx_t + \mxD \vcu_t, \mxS_\vcv)$ and $\mathcal{N}(\mxA \vcx_t + \mxB \vcu_t, \mxS_\vcw)$, for $t=0,\ldots,\nD-1$, where $\Xi_t(\vctheta)$ is defined in \eqref{eqn:Xi_t}. 
For $t=0,\ldots,\nD-1$, 
it is worth noting that $\Xi_t$ is indeed function of matrix $\mxC$ defined as $\mxC := [\mxC_0,\, \mxC_1,\, ... \,,\,{\mxC}_{\Nu}]$, and thus, one can equivalently write $\Xi_t(\mxC)$ instead of $\Xi_t(\vctheta)$.
Accordingly, from \eqref{eqn:Q_theta_theta_k} and \eqref{eq:log-likelihood function}, it is implied that
\begin{equation}\label{eq:Q function}
    \begin{split}\!\!\!
        Q(\vctheta|\hat{\vctheta}_k) = &\mathbb{E}_{p(\mathbf{x}|\mathbf{y},\hat{\vctheta}_k,\mathbf{u})}
        \Big[
        -\frac{1}{2}   \logdet(\mxS_{\vcx_0}) 
        -\frac{\nD}{2} \logdet(\mxS_\vcv) 
        -\frac{\nD}{2} \logdet(\mxS_\vcw) 
        -\frac{1}{2}(\vcx_0-\vcmu_{\vcx_0})^\tr \mxS_{\vcx_0}^{-1} (\vcx_0-\vcmu_{\vcx_0})
        \\
        & \!\!\!\! \!\!\!\! 
        -\frac{1}{2} \sum_{t=0}^{\nD-1} (\vcy_t - \Xi_t(\mxC) \vcx_t - \mxD \vcu_t)^\tr \mxS_\vcv^{-1} (\vcy_t - \Xi_t(\mxC) \vcx_t - \mxD \vcu_t) 
         \\
        & \!\!\!\! \!\!\!\! 
        -\frac{1}{2} \sum_{t=0}^{\nD-1} (\vcx_{t+1} - \mxA \vcx_t - \mxB \vcu_t)^\tr \mxS_\vcw^{-1} (\vcx_{t+1} - \mxA \vcx_t - \mxB \vcu_t) 
        \Big].
    \end{split}\!\!\!\!
\end{equation}
Note that, for any matrix $\mxR\in\Rbb^{n\times n}$ and any vector $\vcx\in\Rbb^n$, we have $\vcx^\tr\mxR\vcx = \trace(\mxR\vcx\vcx^\tr)$. 
From this property, it follows for $Q(\vctheta|\hat{\vctheta}_k)$ that
\begin{equation}\label{eq:Q function2}
    \begin{split}\!\!\!\!\!\!\!\!
        Q(\vctheta|\hat{\vctheta}_k) 
        = & \,\,
        \mathbb{E}_{p(\mathbf{x}|\mathbf{y},\hat{\vctheta}_k,\mathbf{u})}
        \Big[
        -\frac{\nD}{2} \logdet(\mxS_\vcv) 
        -\frac{1}{2}   \logdet(\mxS_{\vcx_0})
        -\frac{\nD}{2} \logdet(\mxS_\vcw) 
        -\frac{1}{2}\trace\left(\mxS_{\vcx_0}^{-1}\, (\vcx_0-\vcmu_{\vcx_0}) (\vcx_0-\vcmu_{\vcx_0})^\tr\right)
        \\ &
        \qquad \qquad \qquad \qquad \qquad 
        -\frac{1}{2} \sum_{t=0}^{\nD-1} \trace\left(\mxS_\vcv^{-1} (\vcy_t - \Xi_t(\mxC) \vcx_t - \mxD \vcu_t) (\vcy_t - \Xi_t(\mxC) \vcx_t - \mxD \vcu_t)^\tr\right) 
        \\ &
        \qquad \qquad \qquad \qquad \qquad 
        -\frac{1}{2} \sum_{t=0}^{\nD-1}\trace\left(\mxS_\vcw^{-1} (\vcx_{t+1} - \mxA \vcx_t - \mxB \vcu_t) (\vcx_{t+1} - \mxA \vcx_t - \mxB \vcu_t)^\tr\right)
        \Big].
    \end{split}\!\!\!\!\!\!
\end{equation}
To simplify the notation, let the matrices $\mxM$ and $\mxN$ be defined as $\mxM := [\mxA,\, \mxB]$ and $\mxN := [\mxC, \, \mxD]$, respectively. 
Furthermore, we define matrix-valued functions 
$\mxF_k:\Rbb^{\Nx}\to\Sbb^{\Nx\times\Nx}$,
$\mxG_k: \Rbb^{\Ny \times (\Nu + \Nx + \Nx\Nu)} \to\Sbb^{\Ny\times\Ny}$, and
$\mxH_k: \Rbb^{\Nx \times (\Nx + \Nu)} \to\Sbb^{\Nx\times\Nx}$ respectively as
\begin{align}
        \mxF_k(\mu_{\vcx_0})\!\, &:= \,
        \mathbb{E}_{p(\mathbf{x}|\mathbf{y},\hat{\vctheta}_k,\mathbf{u})} \left[(\vcx_0-\vcmu_{\vcx_0}) (\vcx_0-\vcmu_{\vcx_0})^\tr\right], \label{eq:F_k}\\
        \mxG_k(\mxN)\, &:= \,
        \mathbb{E}_{p(\mathbf{x}|\mathbf{y},\hat{\vctheta}_k,\mathbf{u})}
        \left[
        \sum_{t=0}^{\nD-1}  (\vcy_t - \Xi_t(\mxC) \vcx_t - \mxD \vcu_t) (\vcy_t - \Xi_t(\mxC) \vcx_t - \mxD \vcu_t)^\tr\right],\label{eq:G_k}
\end{align}
and, 
\begin{align}
    \mxH_k(\mxM) &:= \mathbb{E}_{p(\mathbf{x}|\mathbf{y},\hat{\vctheta}_k,\mathbf{u})}
    \left[
    \sum_{t=0}^{\nD-1}(\vcx_{t+1} - \mxA \vcx_t - \mxB \vcu_t) (\vcx_{t+1} - \mxA \vcx_t - \mxB \vcu_t)^\tr\right],\label{eq:H_k}
\end{align}
for any 
$\mu_{\vcx_0} \in \Rbb^{\Nx}$, 
$\mxM \in \Rbb^{\Nx \times (\Nx + \Nu)}$, and $\mxN \in \Rbb^{\Ny \times (\Nu + \Nx + \Nx\Nu)}$.
Additionally, for $k \in \Nbb$, let the function $J_k: \Vbb \to \extendedR$ be defined as 
\begin{equation}\label{eq:J function}
    J_k(\vctheta) := 
        \frac{\nD}{2} \logdet(\mxS_\vcv) 
        +\frac{1}{2}   \logdet(\mxS_{\vcx_0})
        +\frac{\nD}{2} \logdet(\mxS_\vcw) 
        +\frac{1}{2} \trace \left(\mxS_{\vcx_0}^{-1}\, \mxF_k(\mu_{\vcx_0}) \right)
        +\frac{1}{2} \trace \left(\mxS_\vcv^{-1}\, \mxG_k(\mxN)\right) 
        +\frac{1}{2} \trace \left(\mxS_\vcw^{-1}\, \mxH_k(\mxM)\right),
\end{equation}
for any $\theta\in\Vbb$.
According to the linearity of trace and expectation, it follows from \eqref{eq:Q function2} that
$J_k(\vctheta) := -Q(\vctheta|\hat{\vctheta}_k)$,
for any $\theta\in\Theta$.
Thus, one can see that the EM iterative generation of the sequence of parameter estimation, as formulated through the optimization problem \eqref{eq:EM optimization}, is equivalent to
\begin{equation}\label{eq:EM optimization J_k}
\begin{split}
    \hat{\vctheta}_{k+1}
    &=
    \argminOp_{\vctheta \in \Theta}\,
    J_k(\vctheta),\\
\end{split}
\end{equation}
for $k \in \Nbb$. 
To proceed further and elaborate on the properties of the introduced functions and the optimization problem \eqref{eq:EM optimization J_k}, suitable assumptions are required.

\begin{assumption}\label{assum:input}
For the input sequence     
$\vcu_0,\vcu_1,\ldots,\vcu_{\nsmD-1}$, we have 
\begin{equation} \label{eqn:rank_1_u_i_s}
    \rank\left(
    \begin{bmatrix} 
    1       & 1      & \cdots & 1 \\
    \vcu_0  & \vcu_1 & \cdots & \vcu_{\nsmD-1}
    \end{bmatrix}
    \right) = \Nu+1.
\end{equation}
\end{assumption}

\begin{remark}
    Assumption~\ref{assum:input} is equivalent to the property that the variation of input sequence spans $\Rbb^{\Nu}$. More precisely, \eqref{eqn:rank_1_u_i_s} holds if and only if
    \begin{equation}
        \Rbb^{\Nu}
        =
        \linspan
        \big\{
        \vcu_1 - \vcu_0,
        \vcu_2 - \vcu_1, 
        \ldots,
        \vcu_{\nsmD-1} - \vcu_{\nsmD-2}
        \big\}.
    \end{equation}
    Accordingly, Assumption~\ref{assum:input} can be interpreted as a weak form of persistent excitation for the input sequence 
    $\vcu_0\!,\!\ldots\!,\!\vcu_{\nsmD-1}$.
\end{remark}



\begin{assumption}\label{assum:output}
For the output measurements    
$\vcy_0,\vcy_1,\ldots,\vcy_{\nsmD-1}$, 
we have 
\begin{equation} \label{eqn:rank_y_i_s}
    \rank\left(
    \begin{bmatrix} 
    \vcu_0  & \vcu_1 & \cdots & \vcu_{\nsmD-1}\\
    \vcy_0  & \vcy_1 & \cdots & \vcy_{\nsmD-1}
    \end{bmatrix}
    \right) = \Nu + \Ny.
\end{equation}
\end{assumption}

\begin{remark}
Assumption~\ref{assum:output} is equivalent to the condition that $[\vcu_0^\tr, \vcy_0^\tr]^\tr, [\vcu_1^\tr, \vcy_1^\tr]^\tr, \ldots, [\vcu_\nsmD^\tr, \vcy_\nsmD^\tr]^\tr$ span $\Rbb^{\Nu+\Ny}$.
\end{remark}

\begin{proposition}\label{pro: PD for expectation}
    Under Assumption~\ref{assum:input} and \ref{assum:output}, the matrices $\mxF_k(\mu_{\vcx_0})$, $\mxG_k(\mxN)$, and $\mxH_k(\mxM)$ are positive definite for any $\mu_{\vcx_0} \in \Rbb^{\Nx}, \mxM \in \Rbb^{\Nx \times (\Nx + \Nu)}$, and $\mxN \in \Rbb^{\Ny \times (\Nu + \Nx + \Nx\Nu)}$.
\end{proposition}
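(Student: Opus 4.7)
The plan is to treat each of the three matrices as an expected sum of outer products and use the identity $\Exp[\vcz\vcz^\tr]=\mathrm{Cov}(\vcz)+\Exp[\vcz]\Exp[\vcz]^\tr$ to split off a covariance contribution from a mean contribution. In every case positive semidefiniteness is automatic from the outer-product form, so the real task is to rule out nontrivial null vectors of the quadratic form $\vca\mapsto\vca^\tr M\vca$. Throughout, I would exploit the fact that $\hat\vctheta_k\in\Theta$ forces $\hat\mxS_{\vcx_0},\hat\mxS_\vcw,\hat\mxS_\vcv\succ 0$, which in turn makes every smoother covariance $\mxP_{t|\nD}$ and every joint smoother covariance of $(\vcx_{t+1},\vcx_t)\,|\,\mathbf{y},\hat\vctheta_k$ strictly positive definite.

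For $\mxF_k(\mu_{\vcx_0})$, I would add and subtract the conditional mean and use $\Exp[\vcx_0|\mathbf{y},\hat\vctheta_k]=\hat\vcx_{0|\nD}$ to rewrite
\begin{equation*}
    \mxF_k(\mu_{\vcx_0}) \,=\, \mxP_{0|\nD} \,+\, (\hat\vcx_{0|\nD}-\mu_{\vcx_0})(\hat\vcx_{0|\nD}-\mu_{\vcx_0})^\tr.
\end{equation*}
Since $\mxP_{0|\nD}\succ 0$ and the rank-one correction is PSD, this gives $\mxF_k(\mu_{\vcx_0})\succ 0$ for every $\mu_{\vcx_0}\in\Rbb^{\Nx}$.

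For $\mxH_k(\mxM)$, I would factor each residual as $\vcx_{t+1}-\mxA\vcx_t-\mxB\vcu_t=[\eye_{\Nx},-\mxA]\,[\vcx_{t+1}^\tr,\vcx_t^\tr]^\tr-\mxB\vcu_t$. Applying the mean-plus-covariance decomposition, the $t$-th summand of $\mxH_k(\mxM)$ equals $[\eye_{\Nx},-\mxA]\,\boldsymbol{\Sigma}_t\,[\eye_{\Nx},-\mxA]^\tr$ plus a PSD rank-one term, where $\boldsymbol{\Sigma}_t$ is the joint smoother covariance of $(\vcx_{t+1},\vcx_t)$ assembled from $\mxP_{t+1|\nD}$, $\mxP_{t|\nD}$, and the cross block $\mxP_{t+1|\nD}\mxL_t^\tr$ appearing in \eqref{eq:correlation}. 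Because $\boldsymbol{\Sigma}_t\succ 0$ and $[\eye_{\Nx},-\mxA]$ has full row rank $\Nx$ for every $\mxA$, each summand is positive definite, so the sum satisfies $\mxH_k(\mxM)\succ 0$.

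For $\mxG_k(\mxN)$, I would argue by contradiction: if $\vca^\tr\mxG_k(\mxN)\vca=0$, then
\begin{equation*}
    \sum_{t=0}^{\nD-1}\vca^\tr\Xi_t(\mxC)\mxP_{t|\nD}\Xi_t(\mxC)^\tr\vca \,+\, \sum_{t=0}^{\nD-1}\bigl(\vca^\tr\vcy_t-\vca^\tr\mxD\vcu_t-\vca^\tr\Xi_t(\mxC)\hat\vcx_{t|\nD}\bigr)^2 \,=\, 0,
\end{equation*}
so both nonnegative sums must vanish term by term. Positive definiteness of $\mxP_{t|\nD}$ forces $\Xi_t(\mxC)^\tr\vca=\mxC_0^\tr\vca+\sum_{i=1}^{\Nu}u_{t,i}\mxC_i^\tr\vca=0$ for every $t$; Assumption~\ref{assum:input} then yields $\mxC_i^\tr\vca=0$ for $i=0,\ldots,\Nu$. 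Substituting back into the second sum leaves $[-\vca^\tr\mxD,\,\vca^\tr]\,[\vcu_t^\tr,\vcy_t^\tr]^\tr=0$ for every $t$, which by Assumption~\ref{assum:output} forces $\vca=0$. The main obstacle is certifying $\mxP_{t|\nD}\succ 0$ uniformly in $t$; I would dispose of it via the information-form identity $\mxP_{t|t}^{-1}=\mxP_{t|t-1}^{-1}+\hat\Xi_t^\tr\hat\mxS_\vcv^{-1}\hat\Xi_t$ in the Kalman forward pass and the observation that Gaussian conditioning on linear measurements cannot collapse a nondegenerate prior covariance, so positive definiteness propagates through the RTS backward recursion.
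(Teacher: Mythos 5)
Your proof is correct, and for $\mxF_k$ and $\mxG_k$ it essentially coincides with the paper's: the same $\mxP_{0|\nD}$-plus-rank-one decomposition for $\mxF_k$, and the same two-stage contradiction for $\mxG_k$ (covariance part kills $\mxC_i^\tr\vca$ via $\mxP_{t|\nD}\succ 0$ and Assumption~\ref{assum:input}, mean part kills $\vca$ via Assumption~\ref{assum:output}); the paper merely writes the covariance block in the Kronecker form $[1;\vcu_t][1;\vcu_t]^\tr\otimes\mxP_{t|\nD}$, which is identical to your $\Xi_t(\mxC)\mxP_{t|\nD}\Xi_t(\mxC)^\tr$ since $\Xi_t(\mxC)=\mxC([1;\vcu_t]^\tr\otimes\eye)$. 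Where you genuinely diverge is $\mxH_k$: the paper substitutes the identity $\mxP_{t|\nD}=(\mxP_{t|t}^{-1}+\hat\mxA_k^\tr\hat\mxS_\vcw^{-1}\hat\mxA_k)^{-1}+\mxL_{t,k}\mxP_{t+1|\nD}\mxL_{t,k}^\tr$ to rearrange each summand into $(\eye-\mxA\mxL_{t,k})\mxP_{t+1|\nD}(\eye-\mxA\mxL_{t,k})^\tr+\mxA(\cdot)^{-1}\mxA^\tr$ and invokes its Lemma~\ref{lem:PD}, whereas you write the summand as $[\eye,-\mxA]\,\boldsymbol\Sigma_t\,[\eye,-\mxA]^\tr$ plus a PSD term and use full row rank of $[\eye,-\mxA]$. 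Your route is cleaner and more conceptual, but it shifts the burden onto certifying that the \emph{joint} smoother covariance $\boldsymbol\Sigma_t$ of $(\vcx_{t+1},\vcx_t)$ is strictly positive definite, which is stronger than $\mxP_{t|\nD}\succ0$ and $\mxP_{t+1|\nD}\succ0$ separately; your appeal to nondegenerate Gaussian conditioning does cover this (the full trajectory covariance $\mxS^{(\mathbf{x})}$ is PD and the measurement noise is PD, so every principal block of the conditional covariance is PD), but note that the Schur complement you need, $\mxP_{t|\nD}-\mxL_t\mxP_{t+1|\nD}\mxL_t^\tr\succ0$, is exactly what the paper's identity delivers explicitly --- so the two arguments are two faces of the same fact. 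Finally, be aware that the paper's version also extracts uniform lower bounds independent of $\mxM$, $\mxN$, $\mu_{\vcx_0}$, which are not needed for this proposition but are reused in the proof of Corollary~\ref{cor: invex function}; your contradiction argument for $\mxG_k$ establishes positive definiteness pointwise but would need the extra step $\mxG_k(\mxN)\succeq\mxY-\Pi^\tr\Psi^{-1}\Pi$ to serve that later purpose.
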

\begin{proof}
    The proof is provided in Appendix~\ref{app:proof PD of Expectation}.
\end{proof}
Through the next propositions, we show that \eqref{eq:EM optimization J_k} is well-defined by demonstrating the existence and uniqueness of its solution. 
\begin{proposition}[Stationary Point Uniqueness]\label{pro:local minimum}
Let Assumption~\ref{assum:input} hold. Then, $J_k$ has a \emph{unique stationary point} in $\Theta$, meaning the system of equations $\nabla J_k(\vctheta) = 0$ admits a unique solution in $\Theta$, which can be expressed in closed form as 
\begin{align} 
        \hat{\mxM}_{k+1} =&  \sum_{t=0}^{\nD-1} \begin{bmatrix} \hat{\vcx}_{t+1|{\nD}}\hat{\vcx}_{t|{\nD}}^\tr + \mxP_{t+1|{\nD}}\mxL_t^\tr & \quad \hat{\vcx}_{t+1|\nD}\vcu_t^\tr  \end{bmatrix} 
        \left(\sum_{t=0}^{\nD-1} \begin{bmatrix}\hat{\vcx}_{t|\nD}\hat{\vcx}_{t|\nD}^\tr + \mxP_{t|\nD}  &\hat{\vcx}_{t|\nD}\vcu_t^\tr 
        \\ \vcu_t\hat{\vcx}_{t|\nD}^\tr &\vcu_t \vcu_t^\tr \end{bmatrix}\right)^{-1}, 
        \label{eq: closed form M}
        \\
        \hat{\mxN}_{k+1}   =&\,\, \sum_{t=0}^{\nD-1} \vcy_t \begin{bmatrix} 
        \hat{\vcx}_{t|\nD}\\ \vcu_t \otimes \hat{\vcx}_{t|\nD} \\ \vcu_t\end{bmatrix}^\tr
        \left(\sum_{t=0}^{\nD-1}
        \begin{bmatrix} 
        \hat{\vcx}_{t|\nD} \hat{\vcx}_{t|\nD}^\tr + \mxP_{t|\nD} 
        & \vcu_t^\tr \otimes \left(\hat{\vcx}_{t|\nD} \hat{\vcx}_{t|\nD}^\tr + \mxP_{t|\nD} \right) 
        & \hat{\vcx}_{t|\nD} \vcu_t^\tr \\ 
        \vcu_t \otimes \left(\hat{\vcx}_{t|\nD} \hat{\vcx}_{t|\nD}^\tr +    \mxP_{t|\nD} \right)
        & \vcu_t \vcu_t^\tr \otimes \left(\hat{\vcx}_{t|\nD} \hat{\vcx}_{t|\nD}^\tr     + \mxP_{t|\nD} \right)
        & \vcu_t\vcu_t^\tr \otimes \hat{\vcx}_{t|\nD}\\
        \vcu_t \hat{\vcx}_{t|\nD}^\tr 
        & \vcu_t \vcu_t^\tr \otimes \hat{\vcx}_{t|\nD}^\tr
        & \vcu_t \vcu_t^\tr 
        \end{bmatrix}\right)^{-1}\!\!\!\!\!\!, 
        \label{eq: closed form C}
        \\
        \hat{\mxS}_{\vcw,k+1}   =&\,\, \frac{1}{\nD}\sum_{t=0}^{\nD-1} (\hat{\vcx}_{t+1|\nD}\hat{\vcx}_{t+1|\nD}^\tr + \mxP_{t+1|\nD}) - \frac{1}{\nD}\sum_{t=0}^{\nD-1} \begin{bmatrix}\hat{\vcx}_{t+1|\nD}\hat{\vcx}_{t|\nD}^\tr + \mxP_{t+1|\nD}\mxL_t^\tr &\,\, \hat{\vcx}_{t+1|\nD} \vcu_t^\tr  \end{bmatrix} 
        \nonumber
        \\
        &
        \qquad
        \left(\sum_{t=0}^{\nD-1} \begin{bmatrix}\hat{\vcx}_{t|\nD}\hat{\vcx}_{t|\nD}^\tr + \mxP_{t|\nD} & \hat{\vcx}_{t|\nD}\vcu_t^\tr\\ \vcu_t\hat{\vcx}_{t|\nD}^\tr &\vcu_t \vcu_t^\tr\end{bmatrix}\right)^{-1}
        \sum_{t=0}^{\nD-1}\begin{bmatrix}\hat{\vcx}_{t+1|\nD}\hat{\vcx}_{t|\nD}^\tr + \mxP_{t+1|\nD}\mxH_t^\tr &\, \,\hat{\vcx}_{t+1|\nD} \vcu_t^\tr  \end{bmatrix}^\tr, 
        \label{eq: closed form Sw}
        \\
        \hat{\mxS}_{\vcv,k+1}   =&\,\, \frac{1}{\nD}\sum_{t=0}^{\nD-1} \vcy_t  \vcy_t^\tr - \frac{1}{\nD}  \sum_{t=0}^{\nD-1} \vcy_t\begin{bmatrix} 
        \hat{\vcx}_{t|\nD}\\ \vcu_t \otimes \hat{\vcx}_{t|\nD} \\ \vcu_t\end{bmatrix}^\tr 
        \nonumber
        \\
        &\left(\sum_{t=0}^{\nD-1}
        \begin{bmatrix} 
        \hat{\vcx}_{t|\nD} \hat{\vcx}_{t|\nD}^\tr + \mxP_{t|\nD} 
        & \vcu_t^\tr \otimes \left(\hat{\vcx}_{t|\nD} \hat{\vcx}_{t|\nD}^\tr + \mxP_{t|\nD} \right) 
        & \hat{\vcx}_{t|\nD} \vcu_t^\tr \\ 
        \vcu_t \otimes \left(\hat{\vcx}_{t|\nD} \hat{\vcx}_{t|\nD}^\tr +    \mxP_{t|\nD} \right)
        & \vcu_t \vcu_t^\tr \otimes \left(\hat{\vcx}_{t|\nD} \hat{\vcx}_{t|\nD}^\tr     + \mxP_{t|\nD} \right)
        & \vcu_t\vcu_t^\tr \otimes \hat{\vcx}_{t|\nD}\\
        \vcu_t \hat{\vcx}_{t|\nD}^\tr 
        & \vcu_t \vcu_t^\tr \otimes \hat{\vcx}_{t|\nD}^\tr
        & \vcu_t \vcu_t^\tr 
        \end{bmatrix}\right)^{-1}
        \sum_{t=0}^{\nD-1} \begin{bmatrix} 
        \hat{\vcx}_{t|\nD}\\ \vcu_t \otimes \hat{\vcx}_{t|\nD} \\ \vcu_t\end{bmatrix} \vcy_t^\tr, 
        \label{eq: closed form Sv}\\
        \hat{\vcmu}_{x_0,k+1}   =&\,\, \hat{\vcx}_{0|\nD}, 
        \label{eq: closed form mux0}\\
        \hat{\mxS}_{\vcx_0,k+1}   =&\,\, \mxP_{0|\nD}. \label{eq: closed form Sx0} 
\end{align}
\end{proposition}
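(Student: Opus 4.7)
The plan is to exploit the separable structure of $J_k$ and reduce the stationarity system to three decoupled weighted least-squares / Gaussian-MLE subproblems. Observe that $J_k(\vctheta)$ splits naturally into three independent groups of variables: (i) $(\mu_{\vcx_0},\mxS_{\vcx_0})$, (ii) $(\mxM,\mxS_{\vcw})$ with $\mxM=[\mxA,\mxB]$, and (iii) $(\mxN,\mxS_{\vcv})$ with $\mxN=[\mxC_0,\mxC_1,\ldots,\mxC_{\Nu},\mxD]$. So $\nabla J_k=0$ decomposes into three independent systems that can be solved separately.

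First I would handle the easy block: differentiating $\tfrac{1}{2}\trace(\mxS_{\vcx_0}^{-1}\mxF_k(\mu_{\vcx_0}))$ in $\mu_{\vcx_0}$ and setting it to zero gives $\mu_{\vcx_0}=\mathbb{E}[\vcx_0|\mathbf{y},\hat{\vctheta}_k]=\hat{\vcx}_{0|\nD}$. Substituting back, the minimization in $\mxS_{\vcx_0}$ of $\tfrac{1}{2}\logdet(\mxS_{\vcx_0})+\tfrac{1}{2}\trace(\mxS_{\vcx_0}^{-1}\mxP_{0|\nD})$ is the classical Gaussian MLE whose unique positive-definite critical point is $\mxS_{\vcx_0}=\mxP_{0|\nD}$, giving \eqref{eq: closed form mux0}-\eqref{eq: closed form Sx0}. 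I would then treat the $(\mxM,\mxS_{\vcw})$ block: write $\vcx_{t+1}-\mxA\vcx_t-\mxB\vcu_t=\vcx_{t+1}-\mxM\vcz_t$ with $\vcz_t=[\vcx_t^\tr,\vcu_t^\tr]^\tr$, so minimizing $\trace(\mxS_{\vcw}^{-1}\mxH_k(\mxM))$ in $\mxM$ is a vector-valued weighted least squares. Setting $\partial/\partial\mxM=0$ yields the normal equations
\begin{equation*}
\mxM\Bigl(\sum_t \mathbb{E}[\vcz_t\vcz_t^\tr\mid\mathbf{y},\hat{\vctheta}_k]\Bigr)=\sum_t \mathbb{E}[\vcx_{t+1}\vcz_t^\tr\mid\mathbf{y},\hat{\vctheta}_k],
\end{equation*}
and substituting the RTS-smoother identities \eqref{eq:correlation} delivers \eqref{eq: closed form M}. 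The closed form \eqref{eq: closed form Sw} then comes from the standard Gaussian-MLE step $\mxS_{\vcw}=\tfrac{1}{\nD}\mxH_k(\hat{\mxM}_{k+1})$ after expanding $\mxH_k$.

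The block $(\mxN,\mxS_{\vcv})$ is the technically delicate one and will be the main obstacle. The key is to rewrite the bilinear observation term as a linear regression: $\Xi_t(\mxC)\vcx_t+\mxD\vcu_t=\mxN\vce_t$ with the augmented regressor $\vce_t:=[\vcx_t^\tr,(\vcu_t\otimes\vcx_t)^\tr,\vcu_t^\tr]^\tr$, using $\mxC_i u_{t,i}\vcx_t=\mxC_i(u_{t,i}\vcx_t)$ and stacking. The optimality condition in $\mxN$ then reads $\mxN\sum_t\mathbb{E}[\vce_t\vce_t^\tr\mid\mathbf{y},\hat{\vctheta}_k]=\sum_t\vcy_t\,\mathbb{E}[\vce_t^\tr\mid\mathbf{y},\hat{\vctheta}_k]$. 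Computing $\mathbb{E}[\vce_t\vce_t^\tr\mid\mathbf{y},\hat{\vctheta}_k]$ requires the Kronecker manipulations $\vca(\vcb\otimes\vcc)^\tr=\vcb^\tr\otimes(\vca\vcc^\tr)$ and $(\vcu_t\otimes\vcx_t)(\vcu_t\otimes\vcx_t)^\tr=(\vcu_t\vcu_t^\tr)\otimes(\vcx_t\vcx_t^\tr)$ (mixed-product property), together with the smoother identity $\mathbb{E}[\vcx_t\vcx_t^\tr|\mathbf{y},\hat{\vctheta}_k]=\hat{\vcx}_{t|\nD}\hat{\vcx}_{t|\nD}^\tr+\mxP_{t|\nD}$; matching the resulting $3\times 3$ block matrix against the inverse in \eqref{eq: closed form C} finishes that formula, and then \eqref{eq: closed form Sv} follows as before from $\mxS_{\vcv}=\tfrac{1}{\nD}\mxG_k(\hat{\mxN}_{k+1})$.

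Finally I would argue uniqueness and feasibility. Uniqueness of $\hat{\mxM}_{k+1}$ and $\hat{\mxN}_{k+1}$ amounts to invertibility of the two regressor Gram matrices; these are exactly the expectations in Proposition~\ref{pro: PD for expectation} restricted to appropriate sub-blocks, so Assumption~\ref{assum:input} yields the invertibility needed for \eqref{eq: closed form M} and \eqref{eq: closed form C}. Positive definiteness of the closed-form $\hat{\mxS}_{\vcw,k+1}$, $\hat{\mxS}_{\vcv,k+1}$, $\hat{\mxS}_{\vcx_0,k+1}$ — which is required to place the stationary point in $\Theta$ — follows because each of these matrices equals $\tfrac{1}{\nD}$ times an evaluated $\mxH_k$, $\mxG_k$, or equals $\mxF_k$, which are positive definite by Proposition~\ref{pro: PD for expectation}. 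Uniqueness of the covariance updates given the regression blocks then comes from the strict convexity of $\mxS\mapsto\logdet(\mxS^{-1})+\trace(\mxS^{-1}\mxR)$ on $\Sbb_{++}$ for any fixed $\mxR\succ 0$, concluding the proof.
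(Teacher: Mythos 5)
Your proposal follows essentially the same route as the paper's proof: write out the first-order conditions of $\nabla J_k(\vctheta)=0$ block by block, substitute the RTS-smoother moment identities \eqref{eq:expectation 1}--\eqref{eq:expectation 3}, and solve the resulting decoupled normal equations together with the Gaussian-MLE covariance updates $\hat{\mxS}_{\vcw,k+1}=\tfrac{1}{\nD}\mxH_k(\hat{\mxM}_{k+1})$, $\hat{\mxS}_{\vcv,k+1}=\tfrac{1}{\nD}\mxG_k(\hat{\mxN}_{k+1})$. Two small corrections to your justifications: the invertibility of the two regressor Gram matrices is not a sub-block of Proposition~\ref{pro: PD for expectation} (that result concerns the residual covariances $\mxF_k,\mxG_k,\mxH_k$ and additionally requires Assumption~\ref{assum:output}); it is supplied by Lemma~\ref{lem:PD_phi_psi}, which needs only Assumption~\ref{assum:input}, consistent with the hypotheses of the proposition. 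Likewise, $\mxS\mapsto\logdet(\mxS)+\trace(\mxS^{-1}\mxR)$ is not convex on $\Sbb_{++}^{n\times n}$ (this is precisely why the paper establishes invexity separately in Lemma~\ref{lem:invexity_logdet_trinv}); for the stationarity claim you only need that the condition $\nD\,\mxS^{-1}=\mxS^{-1}\mxR\,\mxS^{-1}$ forces $\mxS=\mxR/\nD$, which is immediate and does not require any convexity argument.
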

\begin{proof}
    The proof is provided in Appendix~\ref{app:Stationary Point Uniqueness}.
\end{proof}



\begin{corollary}\label{cor: invex function}
     The function $J_k$ is invex, that is, $J_k$ has exactly one local minima, which is the unique optimizer of $J_k$.
\end{corollary}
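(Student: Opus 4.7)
The plan is to invoke the classical characterization of invex functions due to Ben-Israel and Mond: a differentiable function is invex if and only if every stationary point is a global minimizer. Since Proposition~\ref{pro:local minimum} already provides the existence and uniqueness of a stationary point of $J_k$ in $\Theta$, with closed-form expressions \eqref{eq: closed form M}--\eqref{eq: closed form Sx0}, it only remains to verify that this unique stationary point is in fact the global minimizer of $J_k$ on $\Theta$. Once that is done, invexity follows immediately, and the unique stationary point must coincide with the unique local (and global) minimum.

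To establish that the unique stationary point is the global minimum, the natural route is to show that $J_k$ is coercive on $\Theta$, i.e., $J_k(\vctheta) \to +\infty$ whenever $\vctheta$ approaches the boundary of $\Theta$ or whenever any component of $\vctheta$ diverges in norm. The open set $\Theta$ has its boundary characterized by the covariance blocks $\mxS_{\vcx_0}, \mxS_\vcw, \mxS_\vcv$ losing positive definiteness (smallest eigenvalue approaching $0$). Combining this with the behavior at infinity, I would argue as follows. By Proposition~\ref{pro: PD for expectation}, the matrices $\mxF_k(\vcmu_{\vcx_0})$, $\mxG_k(\mxN)$, and $\mxH_k(\mxM)$ are positive definite for all admissible arguments, so for each of the trace terms in \eqref{eq:J function} we have $\trace(\mxS^{-1}\mxE) \geq \lambda_{\min}(\mxE)\,\trace(\mxS^{-1}) \to +\infty$ as $\mxS$ approaches a singular PSD matrix; while the accompanying $\logdet(\mxS)$ diverges to $-\infty$, standard bounds show that the trace term dominates. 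Conversely, as $\|\mxS\| \to \infty$, the $\logdet(\mxS)$ term drives $J_k$ to $+\infty$. For the remaining unknowns $\mxA,\mxB,\mxC,\mxD,\vcmu_{\vcx_0}$, coercivity follows from the quadratic growth of $\mxF_k, \mxG_k, \mxH_k$ in their arguments (which are affine in the blocks of $\vctheta$), again under Assumptions~\ref{assum:input} and~\ref{assum:output}.

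Coercivity together with continuity of $J_k$ on the open set $\Theta$ yields, by a standard compactness argument on sublevel sets, that $J_k$ attains its infimum on $\Theta$ at some interior point $\vctheta^\star$. Any such minimizer must satisfy $\nabla J_k(\vctheta^\star) = 0$, and hence is a stationary point. By Proposition~\ref{pro:local minimum}, the stationary point is unique, so $\vctheta^\star = \hat{\vctheta}_{k+1}$ as given by \eqref{eq: closed form M}--\eqref{eq: closed form Sx0}. Since the only stationary point coincides with the global minimizer, every stationary point is a global minimum, and the Ben-Israel--Mond characterization gives that $J_k$ is invex. Moreover, because the global minimum is unique, the unique local minimum of $J_k$ must also be this same point, completing the proof of the corollary.

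The main obstacle in this plan is a clean verification of coercivity on the mixed domain $\Theta$, where the cost couples the matrix parameters nonlinearly through $\mxF_k, \mxG_k, \mxH_k$ and through inverses of the covariance matrices. In particular, controlling the interplay between $\logdet(\mxS)$ and $\trace(\mxS^{-1}\mxE)$ near the boundary requires the strict positive definiteness supplied by Proposition~\ref{pro: PD for expectation}; without it, the dominance of the trace term over $-\logdet$ could fail. The remainder of the argument, namely passing from coercivity to existence of a minimizer and invoking the invexity characterization, is then straightforward.
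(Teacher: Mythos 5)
Your proposal is correct and follows essentially the same route as the paper: the paper likewise combines the unique stationary point from Proposition~\ref{pro:local minimum} with radial unboundedness of $J_k$ (divergence both as $\|\vctheta\|\to\infty$ and as $\vctheta$ approaches the boundary of $\Theta$), handling the covariance blocks via an eigenvalue decomposition and the scalar analysis of $\log x + a/x$ (Lemmas~\ref{lem:invex} and~\ref{lem:invexity_logdet_trinv}), with the uniform positive-definite lower bounds on $\mxF_k$, $\mxG_k$, $\mxH_k$ from Proposition~\ref{pro: PD for expectation} playing exactly the role you identify. The only cosmetic difference is that you make the Ben-Israel--Mond characterization and the coercivity-to-attainment step explicit, whereas the paper uses them implicitly.
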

\begin{proof}
    The proof is provided in Appendix~\ref{app:Proof of Corollary 1}.
\end{proof}

\begin{proposition}[Recursive Feasibility]\label{pro: postive definite}
Let Assumption~\ref{assum:input} and Assumption~\ref{assum:output} hold, and suppose $\hat{\mxS}_{\vcw,0} \in \Sbb_{++}^{\Nx\times \Nx} $, $\hat{\mxS}_{\vcv,0} \in \Sbb_{++}^{\Ny\times \Ny}$, and $\hat{\mxS}_{\vcx_0,0} \in \Sbb_{++}^{\Nx\times \Nx}$. Then, \eqref{eq: closed form M}-\eqref{eq: closed form Sx0} provides a feasible solution for the optimization problem \eqref{eq:EM optimization J_k}, meaning it lies within $\Theta$. This result ensures that the EM procedure admits the \emph{recursive feasibility} property.
\end{proposition}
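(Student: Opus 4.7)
The plan is to proceed by induction on the iteration index $k$. Because membership in $\Theta$ is constrained only by positive definiteness of the three covariance-related blocks (all other parameters lie in unconstrained linear spaces), the entire argument reduces to showing that $\hat{\mxS}_{\vcw,k+1}$, $\hat{\mxS}_{\vcv,k+1}$, and $\hat{\mxS}_{\vcx_0,k+1}$ are positive definite under the inductive hypothesis that $\hat{\mxS}_{\vcw,k}$, $\hat{\mxS}_{\vcv,k}$, and $\hat{\mxS}_{\vcx_0,k}$ are positive definite. The base case $k=0$ is given by the proposition's assumption.

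For $\hat{\mxS}_{\vcx_0,k+1} = \mxP_{0|\nD}$, I would unroll the RTS smoother and invoke the standard preservation of positive definiteness. Starting from $\mxP_{0|0} = \hat{\mxS}_{\vcx_0,k} \succ 0$, the Kalman filter recursion \eqref{eq:kalman filter}, driven by $\hat{\mxS}_{\vcv,k}, \hat{\mxS}_{\vcw,k} \succ 0$, yields positive definite filter covariances $\mxP_{t|t}$ and $\mxP_{t+1|t}$ at every time step (the predict step adds the positive definite $\hat{\mxS}_{\vcw,k}$, and the update step is a Joseph-form contraction that preserves positive definiteness). The backward pass \eqref{eq:mean and covariance} then preserves positive definiteness of $\mxP_{t|\nD}$ by the usual backward induction on $t$, so in particular $\mxP_{0|\nD} \succ 0$.

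For $\hat{\mxS}_{\vcw,k+1}$ and $\hat{\mxS}_{\vcv,k+1}$, the key observation is that the closed-form expressions \eqref{eq: closed form Sw} and \eqref{eq: closed form Sv} coincide with $\tfrac{1}{\nD}\mxH_k(\hat{\mxM}_{k+1})$ and $\tfrac{1}{\nD}\mxG_k(\hat{\mxN}_{k+1})$ respectively. To verify this, I would expand $\mxH_k$ at $\mxM = \hat{\mxM}_{k+1}$ using \eqref{eq:H_k}, substitute the conditional second-moment identities from \eqref{eq:correlation}, and insert the closed-form \eqref{eq: closed form M} for $\hat{\mxM}_{k+1}$; the cross terms then collapse into the Schur-complement structure visible in \eqref{eq: closed form Sw}. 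The same program applies to $\mxG_k$ via \eqref{eq:G_k}, \eqref{eq: closed form C}, and \eqref{eq: closed form Sv}. Once this identification is made, Proposition~\ref{pro: PD for expectation} supplies positive definiteness of $\mxH_k(\hat{\mxM}_{k+1})$ and $\mxG_k(\hat{\mxN}_{k+1})$ under Assumptions~\ref{assum:input} and~\ref{assum:output}, and dividing by the positive scalar $\nD$ keeps them positive definite.

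The main obstacle will be the algebraic verification that \eqref{eq: closed form Sw} equals $\tfrac{1}{\nD}\mxH_k(\hat{\mxM}_{k+1})$ and that \eqref{eq: closed form Sv} equals $\tfrac{1}{\nD}\mxG_k(\hat{\mxN}_{k+1})$. This is essentially re-deriving the first-order optimality conditions of Proposition~\ref{pro:local minimum} from the ``residual'' side, and it requires careful bookkeeping of Kronecker products and the smoothed cross-moments $\mathbb{E}[\vcx_{t+1}\vcx_t^\tr|\mathbf{y},\hat{\vctheta}_k]$ provided by \eqref{eq:correlation}. Once those identifications are in place, the conclusion that $\hat{\vctheta}_{k+1} \in \Theta$ is immediate, which closes the induction and establishes recursive feasibility.
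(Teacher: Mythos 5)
Your proposal is correct and follows essentially the same route as the paper: the paper likewise reduces feasibility to positive definiteness of the three covariance updates, identifies $\hat{\mxS}_{\vcv,k+1} = \tfrac{1}{\nD}\mxG_k(\hat{\mxN}_{k+1})$ (and analogously $\hat{\mxS}_{\vcw,k+1} = \tfrac{1}{\nD}\mxH_k(\hat{\mxM}_{k+1})$, $\hat{\mxS}_{\vcx_0,k+1} = \mxF_k(\hat{\vcmu}_{\vcx_0,k+1}) = \mxP_{0|\nD}$) by substituting the closed-form optimizers into the residual expansions, and then invokes Proposition~\ref{pro: PD for expectation}. Your explicit induction on $k$ and the Kalman/RTS positive-definiteness propagation for $\mxP_{0|\nD}$ only make precise what the paper leaves implicit (the latter appears in the paper as equation \eqref{eq:P_t|nD} inside the proof of Proposition~\ref{pro: PD for expectation}).
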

\begin{proof}
    The proof is provided in Appendix~\ref{app:recursive feaisbility}.
\end{proof}


The Algorithm~\ref{Al:EM ID} summarizes the EM approach introduced for the identification of linear dynamics with bilinear observation models.
\begin{algorithm}[t]
    \caption{EM Estimation Approach for Identification of Linear Dynamics with Bilinear Observation Models}\label{Al:EM ID}
    \begin{algorithmic}
        \Statex \textbf{Input:} $\Dcal$.
        \Statex \textbf{Output:} $\vctheta$.
        \State \textbf{Initial guess:} $\hat{\vctheta}_0$
        \State $k \gets 0$
        \While{$\mathbf{1}$}
            \State Current parameters estimates: $\hat{\vctheta} \gets \hat{\vctheta}_k$
            \For{$t \gets 0$ to $\nD$}
               \State Kalman Filter: \eqref{eq:kalman filter}
            \EndFor
            \For{$t \gets \nD$ to $0$}
               \State RTS smoother: \eqref{eq:mean and covariance}
            \EndFor
            \State EM approach: compute \eqref{eq: closed form M}-\eqref{eq: closed form Sx0} to find a new parameters estimates. $\hat{\vctheta}_{k+1} \gets \argminOp_{\vctheta \in \Theta} J_k(\vctheta)$.
            \If{$\lVert \hat{\vctheta}_{k+1} - \hat{\vctheta}_k \rVert< \epsilon$}
            \State break
            \Else
            \State $k \gets k+1$
            \EndIf
        \EndWhile
        \State $\vctheta \gets \hat{\vctheta}_k$
    \end{algorithmic}
    \label{alg_1}
\end{algorithm}


\section{Numerical Experiments} \label{sec:numerical experiments}
To evaluate the proposed algorithms and verify their effectiveness, we perform extensive numerical experiments, including Monte Carlo simulations under different noise levels and dataset sizes. Accordingly, we employ suitable evaluation metrics, providing quantitative measures of the accuracy of the identified parameters relative to the ground truth. Furthermore, cross-validation is used to ensure the reliability of the results and enable a proper and unbiased evaluation of the proposed approaches. In addition to accuracy, the computational complexity of the methods is analyzed, focusing on their scalability and runtime behavior as the size of the dataset increases. The convergence performance of the EM approach is also empirically demonstrated to evaluate the efficiency of the algorithm further. The results show the efficacy of the proposed methods, their practical effectiveness in estimating system parameters, and their potential usage in scenarios involving bilinear observational structures. 
%

\textbf{Example 1.} 
In this example, we evaluate our proposed algorithms considering a time-invariant system as introduced in \eqref{eqn:dynamics1}-\eqref{eqn:dynamics2} with matrices $\mxA$, $\mxB$, $\mxC_0$, $\mxC_1$ and $\mxD$ defined as 
\begin{subequations}\label{eq:system example1_1}
\begin{align}
        \mxA &= \begin{bmatrix}
            0.6 & -0.28 \\ 0.25 & 0.45
        \end{bmatrix},\\
        \mxB &= \begin{bmatrix}
            0.5 \\ -0.5
        \end{bmatrix},\\
        \mxC_0 &= \begin{bmatrix}
            0.5 & -0.15
        \end{bmatrix},\\
        \mxC_1 &= \begin{bmatrix}
            0.15 & 0.1
        \end{bmatrix},\\
        \mxD &= 0.
\end{align}
\end{subequations}
The initial state is set to $\mu_{\mathbf{x}_0} = \begin{bmatrix}
    1 & 1 
\end{bmatrix}^\tr$, and realizations of random binary signals are generated as the control input sequence under different dataset length. 
Following the problem setting introduced in Section \ref{sec:pf}, random sequences of process noise and measurement noise are sampled from Gaussian distributions $\Ncal(\mathbf{0}, \mxS_\vcw)$ and $\Ncal(\mathbf{0}, \mxS_\vcv)$, respectively, where $\mxS_\vcw$ and $\mxS_\vcv$ are designed for different SNR levels. 

To evaluate the performance and robustness of the two proposed algorithms, we perform a Monte Carlo experiment. For each dataset length, given the same input sequences, $N_{\text{MC}} = 100$ different initializations and noise sequences are generated under four different SNR levels, namely $5\,$dB, $10\,$dB, $15\,$dB, and  $20\,$dB.  For each realization, the proposed Algorithm \ref{Al:ML ID} and Algorithm \ref{Al:EM ID} are applied to identify the system. After identification, another random input sequence with length $T = 100$ is generated for validation. The performance is evaluated based on the normalized relative error of output as following,
\begin{equation}\label{eq:Normalized relative error}
    \vcy_{\text{error}} = \sum_{t=0}^{T} \frac{\lVert \vcy_t - \hat{\vcy}_t \rVert} {\lVert \vcy_t \rVert},  
\end{equation}
where $\vcy_t$ and $\hat{\vcy}_t$ are the output trajectories computed using the real system and the identified system, respectively. 

\begin{figure}[t!]
   \centering
   \includegraphics[width=0.7\textwidth,,trim={2cm 6cm 2cm 9.5cm},clip]{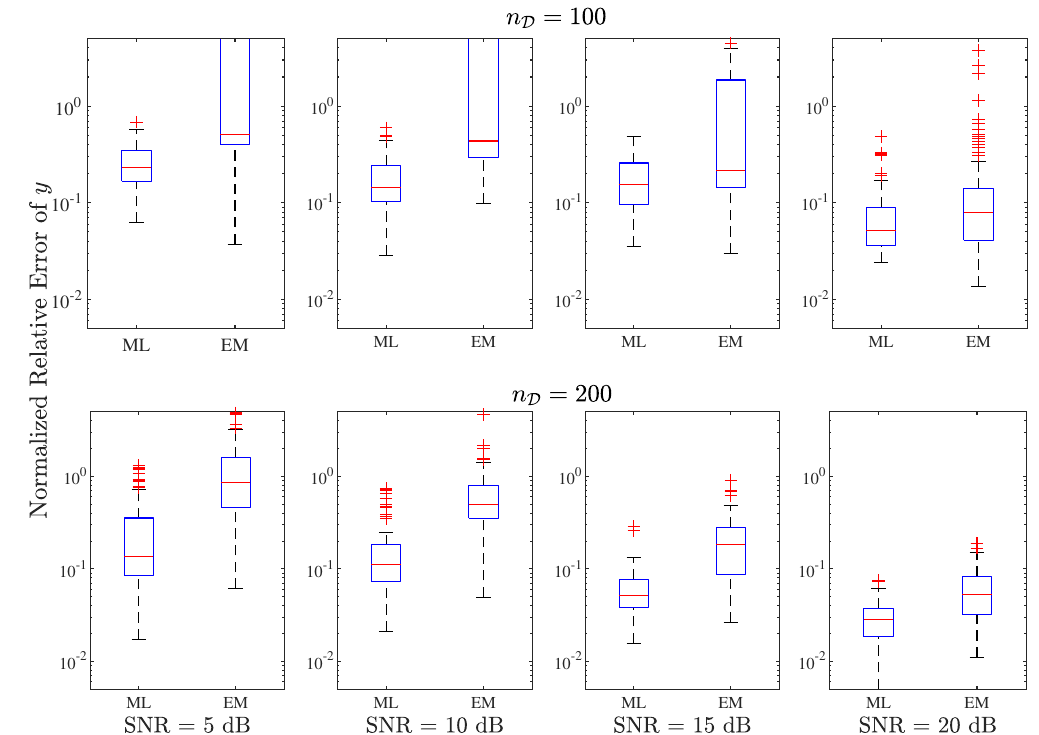}
   \caption{Comparison of normalized relative error of output trajectory of systems identified by ML and EM approaches.}
   \label{img:MC_comparison}
\end{figure}

\begin{figure}[t!]
   \centering
   \includegraphics[width=0.7\textwidth,trim={2.5cm 10cm 3.1cm 10.5cm},clip]{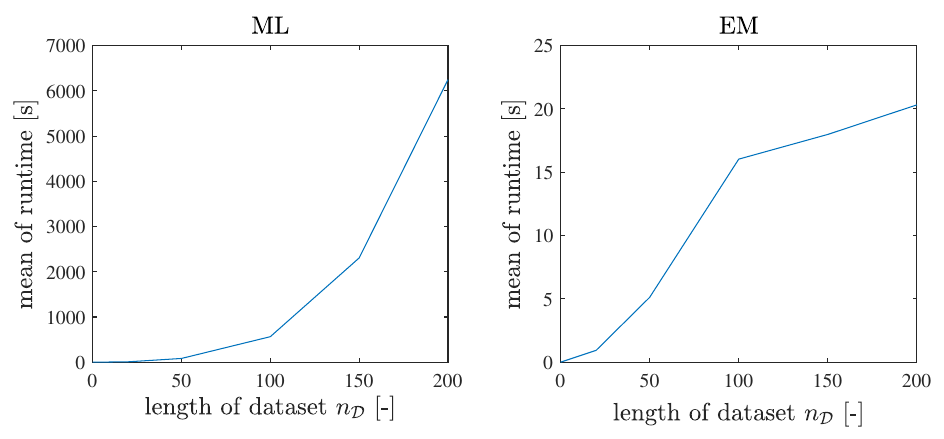}
   \caption{Runtime of ML and EM approach with different length of dataset.}
   \label{img:ML_com}
\end{figure}

The results are shown in Figure~\ref{img:MC_comparison}, where  data sets of varying sizes ($\nD=
100,200$) are used for both approaches. When comparing the results within each row, we can see that increasing the SNR levels leads to improved performance for both algorithms. Similarly, when comparing the results in each column, increasing the dataset length also enhances the performance of both algorithms. Moreover, when comparing the two algorithms in each subplot, the ML approach outperforms the EM approach for the same data set length and SNR level. However, this improved performance comes at the cost of increased runtime. As shown in Figure~\ref{img:ML_com}, the runtime of the ML approach increases significantly with increasing data set length. In contrast, the runtime of the EM approach remains around 20 seconds for a data set length of 200, which is considerably lower than that of the ML approach. As discussed in Section~\ref{sec:Complexity Analysis}, this difference arises due to the high computational demand to calculate derivatives in the ML approach. Consequently, although ML performs better for shorter data sets, solving the optimization problem for longer data sets becomes challenging, where the EM approach proves to be more effective. In Figure~\ref{img:EM_1000}, the EM approach is applied to a data set of 1000 length, achieving a run-time of approximately 34 seconds and showing better performance than the ML approach with a data set length of 200. 

In conclusion, the ML approach outperforms the EM approach for small data sets. The performance gap becomes smaller for longer length of the data set and higher SNR level. Furthermore, as the size of the data set increases, the ML approach becomes computationally challenging, whereas the EM approach is more efficient and better suited for large data sets.


To further investigate the performance of the Algorithm~\ref{Al:EM ID}, we analyze the relative error 
\begin{equation}\label{eq: relative error parameters}
\begin{split}
    \mxC_{\text{error}} =&  \frac{\lVert \mxC - \hat{\mxC} \rVert} {\lVert \mxC \rVert},\\
    \mxM_{\text{error}} =&  \frac{\lVert \mxM - \hat{\mxM} \rVert} {\lVert \mxM \rVert},
\end{split}
\end{equation}
of the estimated system matrices, where $\mxC = \begin{bmatrix}\mxC_0 & \mxC_1 \end{bmatrix}$ and $\mxM = \begin{bmatrix}\mxA &\mxB \end{bmatrix}$ as defined after \eqref{eq:Q function2}.
\begin{figure}[t!]
   \centering
   \includegraphics[width=0.5\textwidth,trim={1cm 10.5cm 10cm 10.5cm},clip]{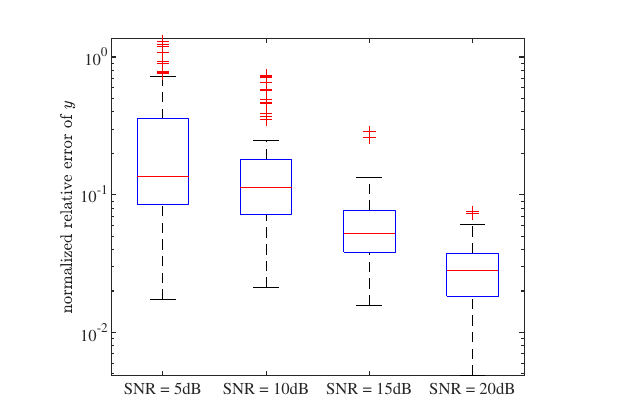}
   \caption{Normalized relative error of output trajectory of systems identified by EM approach with 1000 length of data set.}
   \label{img:EM_1000}
\end{figure}

\begin{figure}[t!]
   \centering
   \includegraphics[width=0.8\textwidth,trim={1cm 10.5cm 1cm 10.6cm},clip]{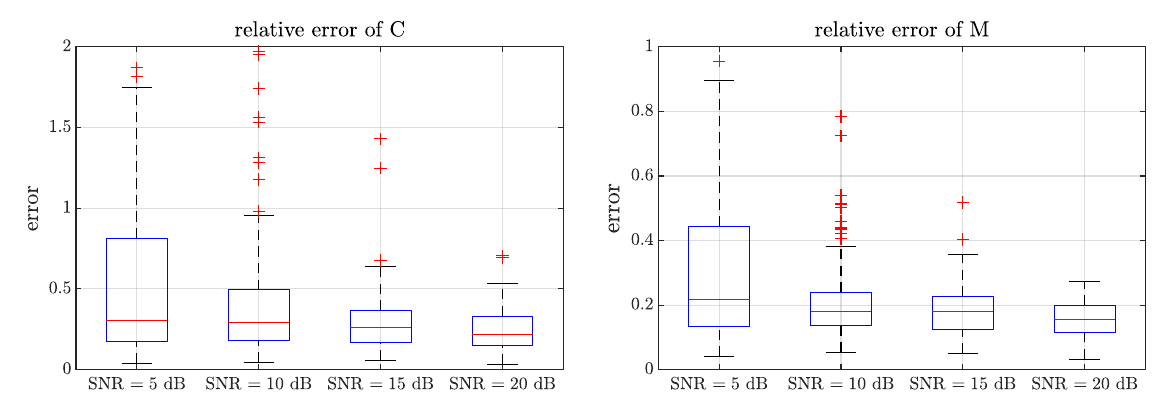}
   \caption{Relative error of system matrices under four different SNRs using EM approach with $\nD=1000$.}
   \label{img:MonteCarlo}
\end{figure}

Figure~\ref{img:MonteCarlo} shows the results when we apply Algorithm~\ref{Al:EM ID} and set the length of the data set $\nD=1000$. Generally, the mean relative errors of both $\mxC$ and $\mxM$ are lower for scenarios of higher SNR levels, indicating improved precision in the identified parameters. At lower SNR levels, the performance is more sensitive to initialization. Specifically, while some realizations achieve low relative errors, others exhibit significantly higher errors, resulting in increased variance. However, for larger SNR cases, the variance of the relative errors is relatively smaller, indicating that the algorithm is less dependent on initialization. 

\begin{figure}[t!]
   \centering
   \includegraphics[width=0.9\textwidth,trim={1cm 10cm 1cm 10.8cm},clip]{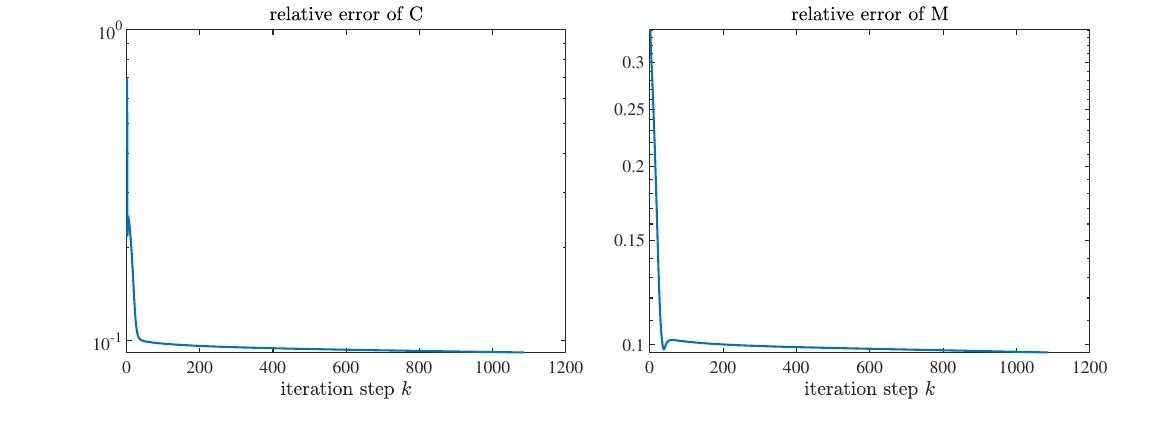}
   \caption{The relative error of system matrices estimates with respect to iteration steps.}
   \label{img:convergence}
\end{figure}

Figure~\ref{img:convergence} illustrates a single test of Algorithm~\ref{Al:EM ID}. In this figure, we can  observe the convergence of the relative errors $\mxC_\text{error}$ and $\mxM_\text{error}$. It can be seen that, initially, the relative error is large because of the difference between the initialized parameters and the real parameters. However, as the number of iterations increases, the relative error decreases. The convergence rate is fast in the initial iterations; however, it gets slower later. For $\epsilon = 10^{-5}$, the algorithm stops after about $1100$ EM steps. Figure~\ref{img:validation} shows the performance of the proposed Algorithm~\ref{Al:EM ID}. To this end, another random input sequence of length $T = 80$ is generated to validate the identified system. We can note that the predicted trajectory of the outputs for the identified system is close to that of the real system. The relatively error $\vcy_{\text{error}}$, computed as in \eqref{eq:Normalized relative error} is approximately $10.3\%$. 

\begin{figure}[t!]
   \centering
   \includegraphics[width=0.7\textwidth,trim={5cm 10.5cm 5cm 10.5cm},clip]{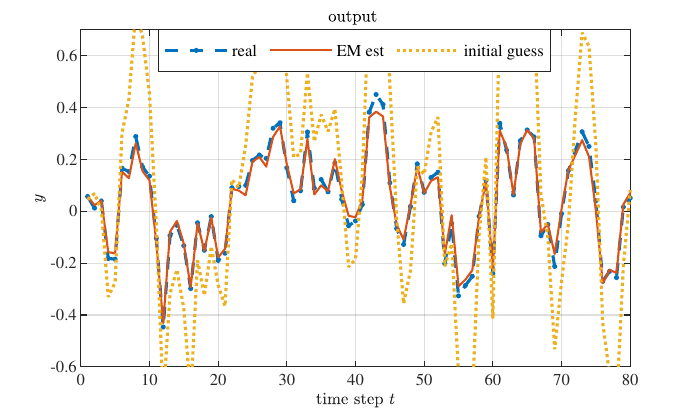}
   \caption{Comparison of real system outputs and identified system outputs. The blue line represents the output trajectory of the real system, the red line corresponds to the trajectory obtained from the identified system, and the yellow line depicts the trajectory generated using the system with initial guess parameters. }
   \label{img:validation}
\end{figure}

\begin{remark}
    As shown in Figure~\ref{img:convergence}, the convergence speed of the EM algorithm is relatively slow. As discussed after \eqref{eq:Q convengence}, with each iteration step, improving $Q(\vctheta | \hat{\vctheta}_k)$ will improve the log-likelihood $\log p(\mathbf{\mathbf{y}}|\vctheta,\mathbf{u})$. However, the rate of improvement can be slow. Furthermore, there is no guarantee that the log-likelihood will converge to the global optimum. More precisely, the EM algorithm may converge to a local optimum. Consequently, in simulations, it may be necessary to restart the algorithm with different initialization to obtain more accurate parameter estimates.
\end{remark}

\textbf{Example 2.} In this example, we consider a resistor–capacitor circuit (RC circuit), where the capacitor is non-ideal as shown in Figure~\ref{fig:Parallel RC}a). The non-ideal capacitor is modeled as in Figure~\ref{fig:Parallel RC}b) \cite{bisquert2000role}. In this model, $R_s$ represents the equivalent series resistance (ESR), which captures the resistive losses due to the internal structure. The parasite inductance, denoted as $L$, accounts for the inductive effects caused by the leads or internal connections. Leakage resistance $R_p$ models the imperfection of the dielectric material, which causes current leakage through the capacitor. 

\begin{figure}[t]
    \begin{circuitikz}
    \node at (-0.5,2.3) {\textbf{a)}};
    \draw (1.5,2)
    to[short] (3,2)
    to[C=$C_0$] (3,0) 
    to[short] (0,0);

    \draw (0,0)
    to[V,v=$V$] (0,2) 
    to[short, i=$I$] (1.5,2)
    to[R=$R_0$] (1.5,0)
    to[short] (0,0);
    
    \node at (6.5,2.3) {\textbf{b)}};
    
    \draw (7,0)
    to[V,v=$V$] (7,2) 
    to[short,  i=$I$] (8.5,2)
    to[R=$R_0$] (8.5,0)
    to[short] (7,0);

    \draw (8.5,2)
    to[short, i=$I_L$](9.5, 2)
    to[R=$R_s$] (10.5,2) 
    to[L=$L$] (12.5,2)
    to[C=$C$, i=$I_c$] (12.5,0)
    to[short] (7,0);

    \draw (12.5,2)
    to[short] (14,2)
    to[R=$R_p$] (14,0)
    to[short] (7,0);

    \node at (15,2) {$+$};
    \node at (15,0) {$-$};
    \node at (15,1) {$V_c$};
    
\end{circuitikz}
    \caption{a) Parallel RC circuit with a non-ideal capacitor and b) Parallel RC circuit with equivalent circuit of a real capacitor.}
    \label{fig:Parallel RC}
\end{figure}
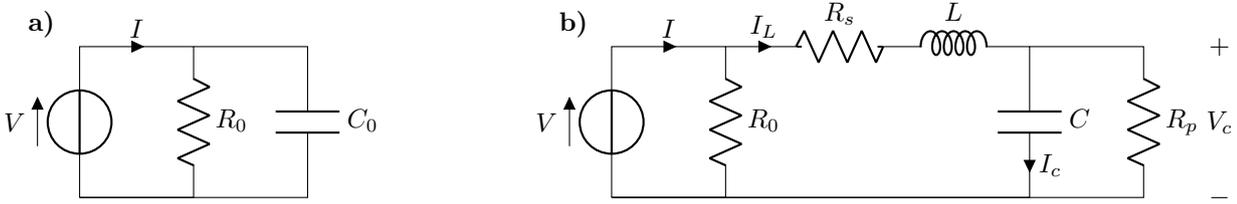

As shown in Figure~\ref{fig:Parallel RC}b), let $I_L$ denotes the current passing through the inductance $L$,  $I_c$ the current passing capacitor $C$, and $V_c$ the voltage across the capacitor. The above RC circuit can be seen as a simplified circuit of a load. To protect the capacitor, limitations are imposed on the thermal effects of the non-ideal capacitor, which can be modeled as $Q = \alpha I_L V$, where $\alpha$ is a heat transfer coefficient and $Q$ is proportional to the temperature and can be measured through a temperature sensor. Define 
\begin{equation}
\begin{split}
    \vcx &= \begin{bmatrix}
    I_L \\ V_c \end{bmatrix},\\
    \vcu  &= V, \\
    \vcy &= Q.\\
\end{split}
\end{equation}
We have 
\begin{equation}\label{eq:RC circuit}
\begin{split}
    \dot{\vcx} &= \begin{bmatrix}
        -\frac{R_L}{L} & -\frac{1}{L} \\
        \frac{1}{C} & -\frac{1}{CR_c} \\
    \end{bmatrix} \vcx + \begin{bmatrix}
        \frac{1}{L} \\ 0
    \end{bmatrix} \vcu, \\
    \vcy &= \begin{bmatrix}
        \alpha & 0
    \end{bmatrix} \vcx \vcu.\\
\end{split}
\end{equation}
In this experiment, the data set $\Dcal$ of length $\nD = 1000$ is generated by an input sequence $\vcu(t) = 12\sin(\omega t)$, where the angular frequency $\omega = 2\pi\times10^{8} \text{rad/s}$ and the sample time is $1\times10^{-9} \text{s}$. The covariance matrices of process noise, measurement noise, and initial states are set as
\begin{subequations}\label{eq:RC covariance}
\begin{align} 
        \mxS_\vcw &= \begin{bmatrix}
            1\times10^{-3} & 0 \\ 0 & 1\times10^{-3}
        \end{bmatrix},
        \\
        \mxS_\vcv &= 1\times10^{-4}, \\
        \mxS_{\vcx_0} &= \begin{bmatrix}
            1\times10^{-3} & 0 \\ 0 & 1\times10^{-3}
        \end{bmatrix},
\end{align}
\end{subequations}
which results in an SNR level of $10\,\text{dB}$. We use the proposed Algorithm~\ref{Al:EM ID} to identify the above system \eqref{eq:RC circuit}. After identification, another input signal sequence $u(t) = 6\sin(\omega t) + 6\sin(0.3\omega t)$, with the same sample time is used for validation.

Figure~\ref{img:validation2} presents the validation results. The estimated thermal dissipation trajectory closely follows the actual trajectory, with some discrepancies observed near the peaks. The mean normalized relative error of the output, calculated using \eqref{eq:Normalized relative error}, is approximately 9.9\%.

\begin{figure}[t!]
   \centering
   \includegraphics[width=0.7\textwidth,trim={5cm 10cm 5cm 10cm},clip]{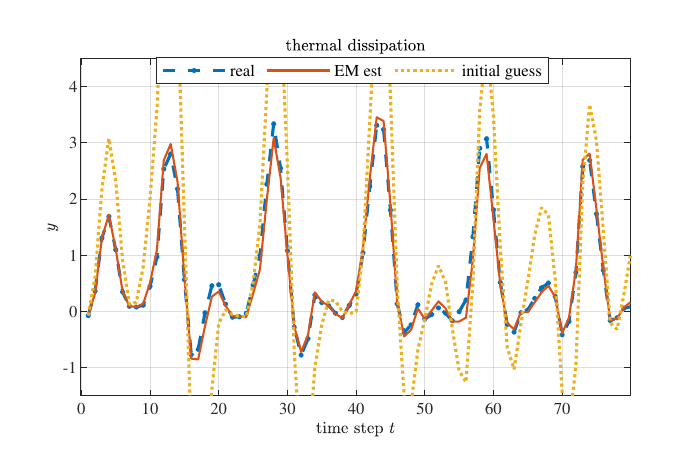}
   \caption{Closed-loop cross validation of thermal dissipation. The blue line represents the output trajectory of the real system, the red line corresponds to the trajectory obtained from the identified system, and the yellow line depicts the trajectory generated using the system with initial guess parameters.}
   \label{img:validation2}
\end{figure}
\section{Conclusion} \label{sec:conclusion}
In this paper, we have addressed the identification problem for unknown dynamical systems with linear dynamics and bilinear observation models. 
First, we have proposed a tractable identification procedure based on maximum likelihood (ML) to maximize the probability of observing the measurements. 
Given the considerable computational demand associated with the developed ML-based method, particularly for large measurement datasets, we have introduced an alternative approach combining the Rauch–Tung–Striebel (RTS) smoother with the Expectation-Maximization (EM) algorithm.
In the resulting framework, the RTS smoother estimates the system states, while the EM algorithm is used to estimate the unknown parameters. We have evaluated the performance of both approaches through numerical examples, comparing their effectiveness under various signal-to-noise ratios (SNRs) via onte Carlo numerical experiments. Furthermore, we applied the EM algorithm to a non-ideal capacitor example, demonstrating that the estimated output trajectory closely matches the real one. These results highlight the potential of the proposed methods for addressing the identification problem in practical applications.

\appendix



\label{sec:Appendix}
\section{Auxiliary Lemmas}\label{app:auxiliary lemma}
\begin{lemma}[Matrix Inversion Lemma, 
\cite{petersen2008matrix}]
\label{lem:matrix_inversion_lemma}
Given matrices $\mxA \in \Rbb^{n\times n}$, $\mxB \in \Rbb^{n\times m}$, $\mxC \in \Rbb^{m\times m}$ and $\mxD \in \Rbb^{m\times n}$. If $\mxA$ and $\mxC$ are invertible, then $\mxA+\mxB\mxC\mxD$ has an inverse as
\begin{equation}\label{eq:matrix inversion lemma}
    (\mxA+\mxB\mxC\mxD)^{-1} = \mxA^{-1} -\mxA^{-1}\mxB(\mxC^{-1} + \mxD\mxA^{-1}\mxB)^{-1}\mxD\mxA^{-1}.
\end{equation}
\end{lemma}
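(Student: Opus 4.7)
The plan is to verify the identity directly by forming the product of $\mxA+\mxB\mxC\mxD$ with the candidate inverse
\[
\mxX := \mxA^{-1} -\mxA^{-1}\mxB(\mxC^{-1} + \mxD\mxA^{-1}\mxB)^{-1}\mxD\mxA^{-1},
\]
and showing $(\mxA+\mxB\mxC\mxD)\mxX=\eye$. Since $\mxA+\mxB\mxC\mxD\in\Rbb^{n\times n}$ is square, exhibiting a right (equivalently left) inverse suffices to conclude that $\mxX$ is the two-sided inverse. Note that writing $\mxX$ presupposes that the inner $m\times m$ matrix $\mxC^{-1}+\mxD\mxA^{-1}\mxB$ is invertible; I would state this as an implicit standing hypothesis of the lemma alongside the invertibility of $\mxA$ and $\mxC$.

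The key algebraic steps I would carry out in order are as follows. First, distribute the product to obtain the four terms
\[
(\mxA+\mxB\mxC\mxD)\mxX
\;=\;
\eye \;-\; \mxB\,(\mxC^{-1}+\mxD\mxA^{-1}\mxB)^{-1}\mxD\mxA^{-1}
\;+\; \mxB\mxC\mxD\mxA^{-1}
\;-\; \mxB\mxC\mxD\mxA^{-1}\mxB\,(\mxC^{-1}+\mxD\mxA^{-1}\mxB)^{-1}\mxD\mxA^{-1}.
\]
Second, factor $\mxB$ on the left and $(\mxC^{-1}+\mxD\mxA^{-1}\mxB)^{-1}\mxD\mxA^{-1}$ on the right out of the second and fourth terms, which yields the middle bracket $\eye+\mxC\mxD\mxA^{-1}\mxB$. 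Third, rewrite this bracket as $\mxC(\mxC^{-1}+\mxD\mxA^{-1}\mxB)$, so that the factor $(\mxC^{-1}+\mxD\mxA^{-1}\mxB)$ annihilates its inverse. The surviving contribution from these two terms collapses to $-\mxB\mxC\mxD\mxA^{-1}$, which cancels exactly against the third term, leaving the identity $\eye$.

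The main (and only) obstacle here is not mathematical but bookkeeping: one has to be careful with the order of the non-commuting factors $\mxB$, $\mxC$, $\mxD$ and the inner inverse when collecting terms, since reversing any of them breaks the cancellation. Provided the implicit invertibility of $\mxC^{-1}+\mxD\mxA^{-1}\mxB$ is granted, the argument is a short formal manipulation and requires no further structural assumption on the matrices.
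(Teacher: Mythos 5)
Your proposal is correct. The paper itself offers no proof of this lemma --- it is quoted verbatim from \cite{petersen2008matrix} as a standing auxiliary result --- so there is no in-paper argument to compare against; your direct verification that $(\mxA+\mxB\mxC\mxD)\mxX=\eye$ via the factorization $\eye+\mxC\mxD\mxA^{-1}\mxB=\mxC(\mxC^{-1}+\mxD\mxA^{-1}\mxB)$ is the standard and fully adequate route, and the algebra you outline does close as claimed. Your side remark is also well taken: as stated, the lemma's hypotheses (invertibility of $\mxA$ and $\mxC$ alone) do not by themselves guarantee that $\mxC^{-1}+\mxD\mxA^{-1}\mxB$ is invertible, so making that an explicit standing assumption (or equivalently assuming $\mxA+\mxB\mxC\mxD$ invertible) is the right way to make the statement airtight.
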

\begin{lemma}[Chain Rule -- Matrix Version, \cite{petersen2008matrix}] 
\label{lem:chain_rule}
Consider a matrix $\mxX \in \Rbb^{n\times m}$. Let $\mxU = f(\mxX)$,
where $f:\Rbb^{n\times m}\to\Rbb^{p \times q}$. Let $g:\Rbb^{p\times q}\to\Rbb$. Then, for the derivative of $g(\mxU)$ with respect to $\mxX$, we have
\begin{equation}\label{eq:chain rule}
    \pdv{g(\mxU)}{\mxX_{i,j}} = \trace\left[\left(\pdv{g(\mxU)}{\mxU}\right)^\tr \pdv{\mxU}{\mxX_{i,j}} \right],
\end{equation}
for any $i=1,\ldots,n$ and $j=1,\ldots,m$.
\end{lemma}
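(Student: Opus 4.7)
The plan is to reduce the statement to the ordinary multivariate chain rule by unrolling the matrix notation entry-wise, and then to repackage the resulting double sum as a trace. Since $g(\mxU) = g(f(\mxX))$ is ultimately a scalar function of the $nm$ real variables $\{X_{i,j}\}$ through the $pq$ intermediate real variables $\{U_{k,l}\}$, the hypothesis that $g$ and $f$ are differentiable lets one invoke the classical chain rule componentwise, and the rest is purely a notational rearrangement.

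First I would fix an index pair $(i,j)$ and apply the standard multivariate chain rule to the composition $X_{i,j} \mapsto (U_{k,l})_{k,l} \mapsto g$. This yields
\begin{equation}
    \pdv{g(\mxU)}{X_{i,j}} \;=\; \sum_{k=1}^{p}\sum_{l=1}^{q} \pdv{g(\mxU)}{U_{k,l}}\,\pdv{U_{k,l}}{X_{i,j}}.
\end{equation}
Next I would recall the convention used in the paper for matrix derivatives, namely that $\partial g(\mxU)/\partial \mxU$ is the $p\times q$ matrix whose $(k,l)$-entry is $\partial g(\mxU)/\partial U_{k,l}$, and that $\partial \mxU/\partial X_{i,j}$ is the $p\times q$ matrix whose $(k,l)$-entry is $\partial U_{k,l}/\partial X_{i,j}$. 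Under this convention, the double sum above is exactly the Frobenius inner product of these two $p\times q$ matrices.

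Finally, I would use the identity $\langle \mxA,\mxB\rangle_{\mathrm F} = \trace(\mxA^\tr \mxB)$, valid for any $\mxA,\mxB\in\Rbb^{p\times q}$, to rewrite the Frobenius inner product as
\begin{equation}
    \sum_{k,l}\left[\pdv{g(\mxU)}{\mxU}\right]_{k,l}\left[\pdv{\mxU}{X_{i,j}}\right]_{k,l} \;=\; \trace\!\left[\left(\pdv{g(\mxU)}{\mxU}\right)^{\!\tr}\pdv{\mxU}{X_{i,j}}\right],
\end{equation}
which is exactly the claimed identity. There is really no serious obstacle in this proof; the only point requiring care is matching the paper's layout conventions for matrix derivatives so that the trace in the final expression comes out with the transpose on the correct factor rather than its transpose. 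For completeness I would also note that the result holds for every $(i,j)$ with $1\le i\le n$ and $1\le j\le m$, since the chain rule step applies uniformly in $(i,j)$.
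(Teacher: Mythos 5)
Your proof is correct. Note that the paper itself gives no proof of this lemma --- it is stated in the appendix as an auxiliary result cited directly from \cite{petersen2008matrix} --- so there is no in-paper argument to compare against. Your derivation is the standard one: the componentwise multivariate chain rule produces the double sum $\sum_{k,l} \pdv{g}{U_{k,l}}\pdv{U_{k,l}}{X_{i,j}}$, which is the Frobenius inner product of the two $p\times q$ derivative matrices, and the identity $\langle \mxA,\mxB\rangle_{\mathrm F}=\trace(\mxA^\tr\mxB)$ gives the claimed trace form. You are right that the only point requiring care is the layout convention (the $(k,l)$ entry of $\pdv{g}{\mxU}$ being $\pdv{g}{U_{k,l}}$, i.e.\ denominator layout), which is indeed the convention the paper uses when it applies this lemma in \eqref{eq:J derivative A1}--\eqref{eq:J derivative muS0}; with that convention fixed, the transpose lands on the correct factor and the argument is complete.
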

\begin{lemma}[Multidimensional Gaussian Integral Formula, \cite{petersen2008matrix}]  
\label{lem:Gaussian_integration}
For any vector $\mu\in \mathbb{R}^{n}$ and any positive definite matrix $\mxS \in \mathbb{R}^{n \times n}$, we have
\begin{equation}\label{eq:Gaussian integration}
    \int_{\mathbb{R}^{n}} 
    \exp\big(-\frac{1}{2} 
        (\vcx - \mu)^\tr\mxS^{-1} (\vcx - \mu) 
        \big) 
    \drm \vcx 
    = 
    (2\pi)^{\frac{1}{2}n}|\mxS|^{\frac{1}{2}}.
\end{equation}
\end{lemma}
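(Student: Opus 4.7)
The plan is to reduce the general multidimensional Gaussian integral to a product of one-dimensional Gaussian integrals via two successive changes of variable, exploiting the symmetry and positive definiteness of $\mxS$.

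First, I would apply the translation $\vcy := \vcx - \mu$, whose Jacobian determinant is $1$, so the integral becomes $\int_{\Rbb^n} \exp(-\tfrac{1}{2}\vcy^\tr \mxS^{-1} \vcy)\,\drm\vcy$. This step removes the mean and isolates the quadratic form, which is the only nontrivial structure left.

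Next, since $\mxS \in \Sbb_{++}^{n \times n}$, the spectral theorem gives an orthogonal matrix $\mxQ$ and a diagonal matrix $\Lambda = \diag(\lambda_1,\ldots,\lambda_n)$ with $\lambda_i > 0$ such that $\mxS = \mxQ \Lambda \mxQ^\tr$, hence $\mxS^{-1} = \mxQ \Lambda^{-1} \mxQ^\tr$ and $|\mxS| = \prod_i \lambda_i$. Applying the orthogonal substitution $\vcz := \mxQ^\tr \vcy$, whose Jacobian determinant has absolute value $1$, I diagonalize the quadratic form: $\vcy^\tr \mxS^{-1} \vcy = \vcz^\tr \Lambda^{-1} \vcz = \sum_{i=1}^n z_i^2 / \lambda_i$. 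Consequently, the integrand factors and Fubini's theorem yields
\begin{equation*}
\int_{\Rbb^n} \exp\Big(-\tfrac{1}{2} \sum_{i=1}^n z_i^2/\lambda_i\Big)\,\drm\vcz = \prod_{i=1}^n \int_{\Rbb} \exp\big(-\tfrac{z_i^2}{2\lambda_i}\big)\,\drm z_i.
\end{equation*}

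Finally, I would invoke the classical one-dimensional Gaussian integral $\int_{\Rbb} \exp(-t^2/(2\lambda))\,\drm t = \sqrt{2\pi\lambda}$ (itself obtained by squaring and converting to polar coordinates) for each factor. Multiplying the $n$ factors yields $\prod_{i=1}^n \sqrt{2\pi \lambda_i} = (2\pi)^{n/2} \big(\prod_i \lambda_i\big)^{1/2} = (2\pi)^{n/2} |\mxS|^{1/2}$, matching the claimed expression. There is no real obstacle here; the only point requiring care is tracking that both the translation and the orthogonal change of variable contribute unit Jacobian, so the determinant of $\mxS$ enters only through the eigenvalues of the inverse quadratic form rather than through a measure-theoretic factor.
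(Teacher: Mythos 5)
Your argument is correct and complete: the unit-Jacobian translation, the orthogonal diagonalization $\mxS=\mxQ\Lambda\mxQ^\tr$ reducing the quadratic form to $\sum_i z_i^2/\lambda_i$, the factorization via Tonelli/Fubini, and the one-dimensional Gaussian integral together give exactly $(2\pi)^{n/2}\prod_i\lambda_i^{1/2}=(2\pi)^{n/2}|\mxS|^{1/2}$. The paper itself supplies no proof of this lemma --- it is stated as an auxiliary result with a citation to the Matrix Cookbook --- so there is nothing to compare against; your derivation is the standard one and would serve as a self-contained justification.
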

\begin{lemma}\label{lem:invex}
    Define function $f:\Rbb_+\to \Rbb$ as    
    \begin{equation}
        f(x) := \log(x) + \frac{a}{x}, \qquad \forall\, x\in\Rbb_+,
    \end{equation}
    where $a\in\Rbb^+$. Then, the function $f$ is invex.
\end{lemma}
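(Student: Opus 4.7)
The plan is to use the fact that a smooth function on an open interval is invex if and only if every stationary point is a global minimizer (equivalently, the set of stationary points coincides with the set of global minimizers). Hence, it suffices to show that $f$ has a unique stationary point on $\Rbb_+$ and that this point is a global minimizer of $f$.

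First, I would compute the derivative
\begin{equation}
f'(x) = \frac{1}{x} - \frac{a}{x^2} = \frac{x-a}{x^2}, \qquad \forall\, x\in\Rbb_+.
\end{equation}
Since $a>0$ and $x^2>0$ on $\Rbb_+$, the equation $f'(x)=0$ admits the unique solution $x^*=a$, so $f$ has exactly one stationary point on $\Rbb_+$.

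Next, I would establish that this stationary point is the global minimizer by analyzing the boundary behavior and monotonicity. From the sign of $f'$, one sees immediately that $f'(x)<0$ for $x\in(0,a)$ and $f'(x)>0$ for $x\in(a,\infty)$, so $f$ is strictly decreasing on $(0,a)$ and strictly increasing on $(a,\infty)$. Combined with $\lim_{x\to 0^+} f(x) = \lim_{x\to\infty} f(x) = +\infty$, this shows $x^*=a$ is the unique global minimizer of $f$ over $\Rbb_+$.

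Finally, invoking the characterization mentioned above (e.g., via the Ben-Israel--Mond definition that $f$ is invex on $\Rbb_+$ iff every stationary point is a global minimum, which holds for smooth scalar functions), I would conclude that $f$ is invex. I do not expect any real obstacle here, since the argument reduces to a routine one-variable calculus analysis; the only subtlety is stating the correct scalar characterization of invexity and matching it to the sign analysis of $f'$.
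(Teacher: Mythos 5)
Your proposal is correct and follows essentially the same route as the paper: both identify the unique stationary point $x^*=a$ from $f'(x)=(x-a)/x^2$ and use the divergence of $f$ as $x\to 0^+$ and $x\to\infty$ to conclude that this stationary point is the global minimizer, hence $f$ is invex. Your version is slightly more explicit (the sign analysis of $f'$ and the named scalar characterization of invexity), but there is no substantive difference in the argument.
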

\begin{proof}
    The function $f(x)$ is radially unbounded since
    \begin{equation}
        \lim_{x \to \infty} \log(x) + \frac{a} {x} = \infty,
    \end{equation}
    and
    \begin{equation}
        \lim_{x \to 0} \log(x) + \frac{a} {x} = \infty.
    \end{equation}
    Furthermore, there is a unique stationary point at $x=a$. Thus, $f(x)$ is invex and has a lower bound $\log(a) + 1$.
\end{proof}

\begin{lemma}\label{lem:invexity_logdet_trinv}
    Given $a,b \in \Rbb_{+}$, $\mxP\in \Sbb_{++}^{n\times n}$, 
    define function $f:\Sbb_{++}^{n \times n}\to \Rbb$ as    
    \begin{equation}
        f(\mxX) := a\logdet(\mxX) + b\trace(\mxX^{-1}\mxP), \qquad \forall\, \mxX\in\Sbb_{++}^{n\times n}.
    \end{equation}
    Then, the function $f(\mxX)$ diverges to infinity as $\mxX$ approaches the boundary of $\Sbb_{++}^{n \times n}$ or infinity. 
    Moreover, $f$ admits a unique minimizer in $\Sbb_{++}^{n \times n}$, which implies that $f$ is an invex function on $\Sbb_{++}^{n \times n}$.    
\end{lemma}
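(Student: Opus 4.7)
The plan is to follow the template set by Lemma~\ref{lem:invex}: locate the unique stationary point of $f$ in closed form, and then verify that $f$ is radially unbounded on $\Sbb_{++}^{n\times n}$, so that this stationary point is forced to be the unique (global) minimizer, from which invexity (in the sense used in this paper, i.e.\ every stationary point is a global minimizer) follows immediately.

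For the first step, I would compute the matrix gradient using standard identities, namely
\begin{equation}
\nabla f(\mxX) \;=\; a\,\mxX^{-1} \;-\; b\,\mxX^{-1}\mxP\mxX^{-1}.
\end{equation}
Setting $\nabla f(\mxX)=\zeromx$ and multiplying on both sides by $\mxX$ yields $a\mxX = b\mxP$, so the unique stationary point on $\Sbb_{++}^{n\times n}$ is $\mxX^\star = (b/a)\mxP$, which is well-defined and positive definite because $a,b>0$ and $\mxP\succ 0$.

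For the divergence claim, the main obstacle is that the eigenvalues of $\mxX$ and those of $\mxX^{-1}\mxP$ are coupled, so one cannot immediately read off the behaviour of $\trace(\mxX^{-1}\mxP)$ from the spectrum of $\mxX$ alone. To decouple them, I would use the change of variables $\mxX = \mxP^{1/2}\mxZ\,\mxP^{1/2}$ with $\mxZ\in\Sbb_{++}^{n\times n}$, which is a bijection of $\Sbb_{++}^{n\times n}$ onto itself, and reduces $f$ to
\begin{equation}
f(\mxX) \;=\; a\logdet(\mxP) \;+\; \sum_{i=1}^{n}\bigl(a\log\mu_i + b\,\mu_i^{-1}\bigr),
\end{equation}
where $\mu_1,\ldots,\mu_n>0$ are the eigenvalues of $\mxZ$. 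Each summand is precisely the scalar function of Lemma~\ref{lem:invex} (up to positive scaling of $a$), hence radially unbounded on $\Rbb_+$. As $\mxX$ approaches the boundary of $\Sbb_{++}^{n\times n}$ or tends to infinity in operator norm, the corresponding $\mxZ$ has some $\mu_i\to 0$ or $\mu_i\to\infty$, and since every summand is bounded below (by $a\log(b/a)+a$) while at least one summand blows up, the whole expression diverges to $+\infty$.

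Combining the two steps, $f$ is continuous on the open convex set $\Sbb_{++}^{n\times n}$ and diverges at its boundary and at infinity, hence attains its infimum at some interior point; this interior minimizer must be a stationary point, and by the closed-form calculation above the only candidate is $\mxX^\star=(b/a)\mxP$. Thus $\mxX^\star$ is the unique minimizer, and in particular the unique stationary point of $f$, so $f$ is invex in the sense used throughout the paper. I expect the substitution-to-eigenvalues step to be the only delicate part of the argument; the rest is bookkeeping.
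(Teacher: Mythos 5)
Your proof is correct and follows the same overall skeleton as the paper's (unique stationary point plus divergence at the boundary and at infinity, hence a unique interior minimizer, hence invexity), but the key technical step is handled differently. The paper bounds $\mxP \succ \varepsilon^2 \mathbf{I}$ to get $\trace(\mxX^{-1}\mxP) \ge \varepsilon^2 \trace(\mxX^{-1})$, lower-bounds $f$ by $\sum_i \big(a\log\lambda_i(\mxX) + b\varepsilon^2/\lambda_i(\mxX)\big)$, and invokes the scalar Lemma on that lower bound; you instead use the congruence $\mxX = \mxP^{1/2}\mxZ\,\mxP^{1/2}$, which decouples the problem \emph{exactly} into $n$ independent copies of the scalar function $a\log\mu + b/\mu$. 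Your route is cleaner in two respects: it avoids introducing $\varepsilon$ at all, and it yields the stationary point correctly as $\mxX^\star = (b/a)\mxP$ directly from $a\mxX = b\mxP$ (the paper's displayed stationary point, $\frac{b\varepsilon^2}{a}\mxP^{-1}$, carries a spurious $\varepsilon^2$ and an inverted $\mxP$, evidently a slip, though it does not affect the conclusion that the stationary point is unique). The only point you should spell out is the eigenvalue correspondence under the congruence: since $\lambda_{\min}(\mxP)\,\lambda_{\min}(\mxZ) \le \lambda_{\min}(\mxX)$ and $\lambda_{\min}(\mxZ) \ge \lambda_{\min}(\mxX)/\lambda_{\max}(\mxP)$ (and similarly for the largest eigenvalues), $\mxX$ approaching the boundary or infinity does force some $\mu_i \to 0$ or $\mu_i \to \infty$; this is the bookkeeping you flagged, and it goes through.
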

\begin{proof}
    Since $\mxP$ is positive definite, there exists positive scalar $\varepsilon$, such that $\mxP > \varepsilon^2 \mathbf{I}$. Thus, we have
    \begin{equation}
        \trace(\mxX^{-1} \mxP) = \trace(\mxP^{\frac{1}{2}} \mxX^{-1} \mxP^{\frac{1}{2}}) \geq \varepsilon^2 \trace(\mxX^{-1}),
    \end{equation}
    which implies that
    \begin{equation}\label{eqn:f(X)_lower_bound}
    \begin{split}
        f(\mxX) &\geq a \log\det \mxX + \varepsilon^2 b \trace(\mxX^{-1})
        \\ &
        = \sum_{i=1}^n \left( a \log \lambda_i(\mxX) + \frac{b \varepsilon^2}{\lambda_i(\mxX)} \right),
    \end{split}
    \end{equation}
    for any $\mxX$.
    As $\mxX$ approaches the boundary of $\Sbb_{++}^{n \times n}$, the smallest eigenvalue of $\mxX$ goes to zero. Consequently, by Lemma~\ref{lem:invex} and \eqref{eqn:f(X)_lower_bound}, $f(\mxX)$ diverges to infinity. Similarly, as $\mxX$ grows unbounded, its largest eigenvalue diverges to infinity. Applying Lemma~\ref{lem:invex} and \eqref{eqn:f(X)_lower_bound} again, we conclude that $f(\mxX)$ also diverges to infinity.
    Now, we show $f$ has a unique minimizer in $\Sbb_{++}^{n \times n}$.  To find all the local minimum, we use the first-order necessary condition for optimality. Accordingly, we obtain
    \begin{equation}
        \begin{split}
            \pdv{f(\mxX)}{\mxX} &= a \mxX^{-1} - b\varepsilon^2 \mxX^{-1}\mxP\mxX^{-1} = 0,
        \end{split}
    \end{equation}
    which implies
    \begin{equation}
        \mxX = \frac{b\varepsilon^2}{a} \mxP^{-1}.
    \end{equation}
    Thus, $f$ has only one stationary point, and, due to the discussion above,  it needs to be the unique minimizer of $f$. Therefore, $f$ is an invex function on $\Sbb_{++}^{n\times n}$, which concludes the proof. 
    %
\end{proof}

\begin{lemma}\label{lem:PD}
    For any positive definite matrices $\mxP, \mxQ  \in \Sbb_{++}^{n\times n}$ and any matrices $\mxR,\mxX\in\Rbb^{n\times n}$, we have 
    \begin{equation}
            (\mathbf{I} - \mxX \mxR)
        \mxP 
            (\mathbf{I} - \mxX \mxR)
        ^\tr
        +
        \mxX\mxQ\mxX^\tr
        \succ 0.
    \end{equation}
    Moreover, one can show that the lower bound is 
\begin{equation}
    (\mxP^{-1} + \mxR^\tr \mxQ^{-1}\mxR)^{-1}.
\end{equation}
\end{lemma}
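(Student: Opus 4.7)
The plan is to directly establish the tighter claim — the lower bound — and then observe that the positive definiteness follows at once, since $(\mxP^{-1} + \mxR^\tr \mxQ^{-1} \mxR)^{-1}$ is itself positive definite as the inverse of a sum of a positive definite and a positive semidefinite matrix.

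The key idea is to treat the left-hand side as a quadratic form in $\mxX$ and complete the square. Writing $f(\mxX) := (\mathbf{I} - \mxX\mxR)\mxP(\mathbf{I} - \mxX\mxR)^\tr + \mxX \mxQ \mxX^\tr$ and expanding gives
\begin{equation*}
f(\mxX) \;=\; \mxP - \mxX \mxR \mxP - \mxP \mxR^\tr \mxX^\tr + \mxX (\mxR \mxP \mxR^\tr + \mxQ) \mxX^\tr.
\end{equation*}
Setting $\mxS := \mxR \mxP \mxR^\tr + \mxQ$, which is positive definite since $\mxQ \succ 0$, I would complete the square in $\mxX$ to obtain
\begin{equation*}
f(\mxX) \;=\; \mxP - \mxP \mxR^\tr \mxS^{-1} \mxR \mxP \;+\; (\mxX - \mxP \mxR^\tr \mxS^{-1}) \, \mxS \, (\mxX - \mxP \mxR^\tr \mxS^{-1})^\tr.
\end{equation*}
The last summand is positive semidefinite, so $f(\mxX) \succeq \mxP - \mxP \mxR^\tr (\mxR \mxP \mxR^\tr + \mxQ)^{-1} \mxR \mxP$.

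To identify the right-hand side with the claimed lower bound, I would invoke the matrix inversion lemma (Lemma~\ref{lem:matrix_inversion_lemma}) with the substitutions $\mxA \leftarrow \mxP^{-1}$, $\mxB \leftarrow \mxR^\tr$, $\mxC \leftarrow \mxQ^{-1}$, $\mxD \leftarrow \mxR$, yielding
\begin{equation*}
(\mxP^{-1} + \mxR^\tr \mxQ^{-1} \mxR)^{-1} \;=\; \mxP - \mxP \mxR^\tr (\mxQ + \mxR \mxP \mxR^\tr)^{-1} \mxR \mxP,
\end{equation*}
which coincides with the expression just derived. This establishes the lower bound, and strict positive definiteness of $f(\mxX)$ follows immediately.

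I do not anticipate a substantive obstacle: the only delicate point is keeping track of transposes during the completion of the square (noting that $\mxP$, $\mxQ$, and $\mxS$ are symmetric while $\mxR$ and $\mxX$ are not) and then carefully matching the resulting expression to the correct instance of the matrix inversion lemma. An alternative, purely algebraic route — showing $f(\mxX) \succ 0$ directly by taking an arbitrary $\vcv \neq 0$, setting $\vca := (\mathbf{I} - \mxX \mxR)^\tr \vcv$ and $\vcb := \mxX^\tr \vcv$, and observing that $\vca = \vcb = 0$ would force $\vcv = \vca + \mxR^\tr \vcb = 0$ — yields the strict positive definiteness but not the quantitative bound, so the completion-of-square approach is preferable.
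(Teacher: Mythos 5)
Your proof is correct. Note that the paper itself states Lemma~\ref{lem:PD} without providing any proof, so there is no in-paper argument to compare against; your completion-of-the-square derivation fills that gap cleanly. The expansion of $f(\mxX)$, the identification of $\mxS := \mxR\mxP\mxR^\tr + \mxQ \succ 0$, the square completion yielding $f(\mxX) = \mxP - \mxP\mxR^\tr\mxS^{-1}\mxR\mxP + (\mxX - \mxP\mxR^\tr\mxS^{-1})\,\mxS\,(\mxX - \mxP\mxR^\tr\mxS^{-1})^\tr$, and the application of Lemma~\ref{lem:matrix_inversion_lemma} to rewrite $\mxP - \mxP\mxR^\tr(\mxQ + \mxR\mxP\mxR^\tr)^{-1}\mxR\mxP$ as $(\mxP^{-1} + \mxR^\tr\mxQ^{-1}\mxR)^{-1}$ all check out, and the final matrix is positive definite as the inverse of $\mxP^{-1} + \mxR^\tr\mxQ^{-1}\mxR \succ 0$. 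One small addition worth making explicit: the bound is attained at $\mxX = \mxP\mxR^\tr(\mxR\mxP\mxR^\tr + \mxQ)^{-1}$, since the squared term then vanishes, which justifies calling $(\mxP^{-1} + \mxR^\tr\mxQ^{-1}\mxR)^{-1}$ \emph{the} (greatest) lower bound rather than merely \emph{a} lower bound.
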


\begin{lemma}\label{lem:PD for summarion u_t}
    Let Assumption~\ref{assum:input} holds. Then, the following matrix 
    \begin{equation}
        \sum_{t=0}^{\nD-1}
        \begin{bmatrix}
            1 \\ \vcu_t
        \end{bmatrix}
        \begin{bmatrix}
            1 \\ \vcu_t
        \end{bmatrix}^\tr
    \end{equation}
    is positive definite.
\end{lemma}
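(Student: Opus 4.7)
The plan is to recognize the sum as a Gram-type matrix and then invoke the standard characterization of positive definiteness of $\mxU\mxU^\tr$ in terms of the row rank of $\mxU$. More precisely, I would define
\[
\mxU := \begin{bmatrix} 1 & 1 & \cdots & 1 \\ \vcu_0 & \vcu_1 & \cdots & \vcu_{\nsmD-1} \end{bmatrix} \in \Rbb^{(\Nu+1) \times \nD},
\]
and observe that the $t$-th column of $\mxU$ is exactly the vector $[1,\vcu_t^\tr]^\tr \in \Rbb^{\Nu+1}$. Consequently, the outer-product sum appearing in the statement can be rewritten compactly as
\[
\sum_{t=0}^{\nD-1}\begin{bmatrix} 1 \\ \vcu_t \end{bmatrix}\begin{bmatrix} 1 \\ \vcu_t \end{bmatrix}^\tr \;=\; \mxU \mxU^\tr.
\]

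Next, I would invoke the elementary linear algebra fact that for any real matrix $\mxU \in \Rbb^{m \times n}$, the product $\mxU\mxU^\tr$ is positive semidefinite, and is positive definite if and only if $\mxU$ has full row rank. To justify this quickly in the writeup, I would note that for any $\vcz \in \Rbb^{\Nu+1}$, $\vcz^\tr \mxU\mxU^\tr \vcz = \|\mxU^\tr \vcz\|^2 \geq 0$, with equality if and only if $\mxU^\tr \vcz = 0$, i.e., $\vcz$ lies in the left nullspace of $\mxU$; this nullspace is trivial precisely when $\mathrm{rank}(\mxU) = \Nu+1$.

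Finally, Assumption~\ref{assum:input} states exactly that $\mathrm{rank}(\mxU) = \Nu+1$, which matches the number of rows of $\mxU$, so $\mxU$ has full row rank. Combining with the previous step immediately gives $\mxU\mxU^\tr \succ 0$, which is the desired conclusion. There is no real obstacle here: the proof is essentially a single-line identification of the sum with $\mxU\mxU^\tr$ followed by the rank criterion, and Assumption~\ref{assum:input} supplies precisely the hypothesis needed.
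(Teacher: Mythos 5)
Your proof is correct and is essentially the same argument as the paper's: both reduce the quadratic form $\vcz^\tr\big(\sum_t [1,\vcu_t^\tr]^\tr[1,\vcu_t^\tr]\big)\vcz$ to $\|\mxU^\tr\vcz\|^2$ and conclude from the full-row-rank condition in Assumption~\ref{assum:input} that it vanishes only for $\vcz=0$. The paper merely phrases this as a proof by contradiction rather than invoking the rank criterion for $\mxU\mxU^\tr$ directly.
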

\begin{proof}
    We prove this by contradiction. Assume there exists a non zero vector $\vcz$, such that
    \begin{equation}
        \vcz
        \left(
        \sum_{t=0}^{\nD-1}
        \begin{bmatrix}
            1 \\ \vcu_t
        \end{bmatrix}
        \begin{bmatrix}
            1 \\ \vcu_t
        \end{bmatrix}^\tr
        \right)
        \vcz = 0.
    \end{equation}
    The above equation implies that
    \begin{equation}
        \begin{bmatrix}
        1 \\ \vcu_t
        \end{bmatrix}^\tr
        \vcz = 0, \quad \forall t \in 0,1,\cdots,\nD-1.
    \end{equation}
    Thus, we have
    \begin{equation}
        \begin{bmatrix}
            1 & 1 & \cdots & 1\\
            \vcu_0 & \vcu_1 &\cdots & \vcu_{\nD-1}
        \end{bmatrix}^\tr
        \vcz = 0.
    \end{equation}
    From Assumption~\ref{assum:input} the transposed matrix in the above equation is full column rank, which implies $\vcz = 0$. Since we assume $\vcz$ is a non zero vector, we arrive at a contradiction, which concludes the proof.
\end{proof}

\begin{lemma}\label{lem:PD for summarion y_t}
    Let Assumption~\ref{assum:output} holds. Then, the matrices 
    \begin{equation}
        \sum_{t=0}^{\nD-1}
        \begin{bmatrix}
            \vcu_t \\ \vcy_t
        \end{bmatrix}
        \begin{bmatrix}
            \vcu_t \\ \vcy_t
        \end{bmatrix}^\tr,
    \end{equation}
    and
    \begin{equation}
        \sum_{t=0}^{\nD-1}
            \vcy_t \vcy_t^\tr
    \end{equation}
    are positive definite.
\end{lemma}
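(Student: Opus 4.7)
The plan is to mirror the argument given for Lemma~\ref{lem:PD for summarion u_t}, treating both matrices by the same contradiction-based scheme. Let $M_1 := \sum_{t=0}^{\nD-1}[\vcu_t^\tr,\vcy_t^\tr]^\tr[\vcu_t^\tr,\vcy_t^\tr]$ and $M_2 := \sum_{t=0}^{\nD-1}\vcy_t\vcy_t^\tr$. Both are sums of rank-one positive semi-definite outer products, so they are already in $\Sbb_+^{\,\cdot\times\cdot}$; the content of the claim is only the strictness, and that reduces to verifying trivial null space.

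For $M_1$, I would suppose for contradiction that there exists a nonzero vector $\vcz \in \Rbb^{\Nu+\Ny}$ with $\vcz^\tr M_1 \vcz = 0$. Since each summand $[\vcu_t^\tr,\vcy_t^\tr]^\tr[\vcu_t^\tr,\vcy_t^\tr]$ is positive semi-definite, the sum being zero on $\vcz$ forces $[\vcu_t^\tr,\vcy_t^\tr]\vcz = 0$ for every $t = 0,1,\ldots,\nD-1$. Stacking these scalar equations is precisely the statement that $\vcz$ lies in the left null space of the data matrix appearing in Assumption~\ref{assum:output}. That matrix has rank $\Nu+\Ny$ by hypothesis, hence its rows are linearly independent and the left null space is trivial, forcing $\vcz = 0$ and delivering the contradiction.

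For $M_2$, the key preliminary observation is that the rank condition $\rank[\,U;Y\,] = \Nu+\Ny$ implies in particular that the bottom block $Y = [\vcy_0,\ldots,\vcy_{\nsmD-1}]$ has full row rank $\Ny$ by itself. Indeed, if some nonzero $\vcc \in \Rbb^{\Ny}$ satisfied $\vcc^\tr Y = 0$, then the augmented vector $[\mathbf{0}^\tr,\vcc^\tr]^\tr$ would be a nonzero element of the left null space of $[\,U;Y\,]$, contradicting the assumed full row rank. With $Y$ of full row rank in hand, the same contradiction template applies: assuming $\vcz^\tr M_2 \vcz = 0$ for some nonzero $\vcz \in \Rbb^{\Ny}$ forces $\vcy_t^\tr \vcz = 0$ for all $t$, i.e.\ $Y^\tr \vcz = 0$, which under full row rank of $Y$ yields $\vcz = 0$.

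There is no real obstacle here; both parts are essentially direct consequences of Assumption~\ref{assum:output} combined with the standard equivalence between full row rank of a data matrix and positive definiteness of its Gram matrix. The only slightly non-immediate step is the intermediate observation that the rank of the stacked matrix transfers to the output block alone, but this is an elementary block-matrix rank argument.
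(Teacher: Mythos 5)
Your proof is correct and follows essentially the same route as the paper, which simply states that the argument is analogous to Lemma~\ref{lem:PD for summarion u_t} (the contradiction argument reducing $\vcz^\tr M \vcz = 0$ to membership in the null space of the stacked data matrix). You also correctly supply the one detail the paper leaves implicit for the second matrix, namely that full row rank of $[\,U;Y\,]$ transfers to the block $Y$ alone.
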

\begin{proof}
    The proof is similar to Lemma~\ref{lem:PD for summarion u_t}.
\end{proof}

\begin{lemma}\label{lem:PD_phi_psi}
Let Assumption~\ref{assum:input} holds. Define matrices $\Phi$ and $\Psi$ as
\begin{equation}
    \Phi :=    \mathbb{E}_{p(\mathbf{x}|\mathbf{y},\hat{\vctheta}_k,\mathbf{u})} \left[ \sum_{t=0}^{\nD - 1} \left(\begin{bmatrix}
        \vcx_t \\ \vcu_t
    \end{bmatrix}
    \begin{bmatrix}
        \vcx_t  \\ \vcu_t
    \end{bmatrix}^\tr \right)\right],
\end{equation}
and
\begin{equation}
    \Psi :=    \mathbb{E}_{p(\mathbf{x}|\mathbf{y},\hat{\vctheta}_k,\mathbf{u})} \left[ \sum_{t=0}^{\nD - 1} \left(\begin{bmatrix}
        \vcx_t \\ \vcu_t \otimes \vcx_t \\ \vcu_t
    \end{bmatrix}
    \begin{bmatrix}
        \vcx_t \\ \vcu_t \otimes \vcx_t \\ \vcu_t
    \end{bmatrix}^\tr \right)\right],
\end{equation}
respectively. Then $\Phi$ and $\Psi$ are positive definite. 
\end{lemma}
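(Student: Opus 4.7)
The plan is to prove positive definiteness of $\Phi$ and $\Psi$ directly from the definition, i.e., by showing that for any nonzero vector $\vcz$, the quadratic form $\vcz^\tr\Phi\vcz$ (resp.\ $\vcz^\tr\Psi\vcz$) is strictly positive. Both matrices are expectations of sums of outer products, so each quadratic form reduces to an expected sum of squares and is therefore automatically non-negative. The work is to argue that it is strictly positive whenever $\vcz\ne 0$.

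For $\Phi$, I would partition $\vcz=[\vcz_x^\tr,\vcz_u^\tr]^\tr$ with $\vcz_x\in\Rbb^{\Nx}$, $\vcz_u\in\Rbb^{\Nu}$, and write
\begin{equation*}
\vcz^\tr\Phi\vcz \;=\; \sum_{t=0}^{\nD-1}\mathbb{E}_{p(\mathbf{x}|\mathbf{y},\hat{\vctheta}_k,\mathbf{u})}\!\left[(\vcz_x^\tr\vcx_t+\vcz_u^\tr\vcu_t)^2\right]
\;=\; \sum_{t=0}^{\nD-1}\!\left(\vcz_x^\tr\mxP_{t|\nD}\vcz_x+(\vcz_x^\tr\hat{\vcx}_{t|\nD}+\vcz_u^\tr\vcu_t)^2\right).
\end{equation*}
Recall that, conditional on $\mathbf{y},\hat{\vctheta}_k,\mathbf{u}$, the state vector $\vcx_t$ is Gaussian with mean $\hat{\vcx}_{t|\nD}$ and covariance $\mxP_{t|\nD}$, and that the RTS recursions \eqref{eq:kalman filter}--\eqref{eq:mean and covariance}, initialized at $\hat{\mxS}_{\vcx_0,k}\succ 0$ and iterated with $\hat{\mxS}_{\vcw,k}\succ 0$, preserve positive definiteness of $\mxP_{t|\nD}$ (this is essentially the content used inside Proposition~\ref{pro: PD for expectation}). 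If $\vcz^\tr\Phi\vcz=0$, then the first term forces $\vcz_x=0$, after which the second term forces $\vcz_u^\tr\vcu_t=0$ for every $t$, and by Assumption~\ref{assum:input} (equivalently, by the rank condition applied to the stacked matrix) this yields $\vcz_u=0$.

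For $\Psi$, I would partition $\vcz=[\vcz_1^\tr,\vcz_2^\tr,\vcz_3^\tr]^\tr$ with $\vcz_1\in\Rbb^{\Nx}$, $\vcz_2\in\Rbb^{\Nu\Nx}$ reshaped as $\mxZ_2=[\vcz_{2,1},\ldots,\vcz_{2,\Nu}]\in\Rbb^{\Nx\times\Nu}$, and $\vcz_3\in\Rbb^{\Nu}$. Using the identity $\vcz_2^\tr(\vcu_t\otimes\vcx_t)=\vcx_t^\tr\mxZ_2\vcu_t$, introduce the auxiliary vector $\vcr_t:=\vcz_1+\mxZ_2\vcu_t$, so that
\begin{equation*}
\vcz^\tr\Psi\vcz \;=\; \sum_{t=0}^{\nD-1}\mathbb{E}_{p(\mathbf{x}|\mathbf{y},\hat{\vctheta}_k,\mathbf{u})}\!\left[(\vcr_t^\tr\vcx_t+\vcz_3^\tr\vcu_t)^2\right]
\;=\; \sum_{t=0}^{\nD-1}\!\left(\vcr_t^\tr\mxP_{t|\nD}\vcr_t+(\vcr_t^\tr\hat{\vcx}_{t|\nD}+\vcz_3^\tr\vcu_t)^2\right).
\end{equation*}
Vanishing of the first term, together with $\mxP_{t|\nD}\succ 0$, gives $\vcr_t=0$ for every $t$, i.e., $[\vcz_1,\mxZ_2][1,\vcu_t^\tr]^\tr=0$ for every $t$; Assumption~\ref{assum:input} then forces $\vcz_1=0$ and $\mxZ_2=0$, hence $\vcz_2=0$. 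The second term then reduces to $\vcz_3^\tr\vcu_t=0$ for all $t$, and invoking Assumption~\ref{assum:input} once more (or Lemma~\ref{lem:PD for summarion u_t}) delivers $\vcz_3=0$.

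The main obstacle is really the bookkeeping for $\Psi$: getting the Kronecker-product term into the convenient form $\vcx_t^\tr\mxZ_2\vcu_t$ and then packaging $\vcz_1,\mxZ_2$ into the single matrix acting on $[1,\vcu_t^\tr]^\tr$ so that Assumption~\ref{assum:input} applies cleanly. A secondary delicate point is to justify positive definiteness of $\mxP_{t|\nD}$ throughout the RTS recursion; this can either be cited from the discussion around Proposition~\ref{pro: PD for expectation} or verified inductively using the Joseph-form update and the fact that a positive definite matrix plus a positive semidefinite matrix is positive definite.
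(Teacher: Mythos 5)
Your proof is correct and follows essentially the same route as the paper's: decompose the expected square $\mathbb{E}[(\vcz^\tr\cdot)^2]$ into a variance part and a squared-mean part, use $\mxP_{t|\nD}\succ 0$ (established via the RTS/Joseph-form recursion) to force the state-dependent coefficients to vanish, and then invoke Assumption~\ref{assum:input} to kill the remaining input-dependent coefficients. The only cosmetic differences are that you treat $\Phi$ directly rather than deducing it from $\Psi$ as a principal submatrix, and you organize the Kronecker bookkeeping through the reshaped matrix $\mxZ_2$ and the vectors $\vcr_t$ instead of the covariance form $\sum_t\big([1;\vcu_t][1;\vcu_t]^\tr\otimes\mxP_{t|\nD}\big)$ used in the paper; both are equivalent.
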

\begin{proof}
    We show that $\Psi$ is positive definite, which implies that $\Phi$ is positive definite by Schur complement. 
    To show that $\Psi$ is positive definite, we use contradiction. Assume there exists a non zero vector $\vcq = \begin{bmatrix}
        \vcq_1^\tr & \vcq_2^\tr & \vcq_3^\tr
    \end{bmatrix}$
    such that $\vcq^\tr\Psi\vcq = 0$, where $\vcq \in \Rbb^{\Nx+\Nu+\Nx\Nu}$, $\vcq_1 \in \Rbb^{\Nx}$, $\vcq_2 \in \Rbb^{\Nx\Nu}$, and $\vcq_3 \in \Rbb^{\Nu}$. Then, we have
    \begin{equation}\label{eq:Psi=0}
        \sum_{t=0}^{\nD-1} \mathbb{E}_{p(\mathbf{x}|\mathbf{y},\hat{\vctheta}_k,\mathbf{u})} 
        \left[ \vcq^\tr  \begin{bmatrix}
        \vcx_t \\ \vcu_t \otimes \vcx_t \\ \vcu_t
    \end{bmatrix}
    \begin{bmatrix}
        \vcx_t \\ \vcu_t \otimes \vcx_t \\ \vcu_t
    \end{bmatrix}^\tr \vcq  \right] = 0.
    \end{equation}
    Define 
    \begin{equation}
        \xi_t := \begin{bmatrix}
        \vcx_t \\ \vcu_t \otimes \vcx_t \\ \vcu_t
        \end{bmatrix}^\tr \vcq.
    \end{equation} 
    Notice that $\xi_t$ is a scalar. Thus, from \eqref{eq:Psi=0}, we have $\mathbb{E}_{p(\mathbf{x}|\mathbf{y},\hat{\vctheta}_k,\mathbf{u})} 
        \left[\xi_t^2 \right] = 0, \forall t = 1,\cdots,\nD-1$,
    which implies that 
    \begin{equation}\label{eq:mu_xi_t}
        \mu_{\xi_t} := \mathbb{E}_{p(\mathbf{x}|\mathbf{y},\hat{\vctheta}_k,\mathbf{u})} 
        \left[\xi_t \right] = 0, \quad \forall t = 1,\cdots,\nD-1,
    \end{equation}
    and
    \begin{equation}\label{eq:var_xi_t}
        \mathbb{E}_{p(\mathbf{x}|\mathbf{y},\hat{\vctheta}_k,\mathbf{u})} 
        \left[(\xi_t - \mu_{\xi_t})^2 \right] = 0, \quad \forall t = 1,\cdots,\nD-1.
    \end{equation}
On the one hand, \eqref{eq:var_xi_t} implies that
\begin{equation}
    \sum_{t=0}^{\nD-1}
    \begin{bmatrix}
        \vcq_1^\tr & \vcq_2^\tr
    \end{bmatrix}
    \mathbb{E}_{p(\mathbf{x}|\mathbf{y},\hat{\vctheta}_k,\mathbf{u})}
    \left[ 
    \left(
    \begin{bmatrix}
        \vcx_t \\ \vcu_t \otimes \vcx_t
    \end{bmatrix}
    -
    \begin{bmatrix}
        \hat{\vcx}_{t|\nD} \\ \vcu_t \otimes \hat{\vcx}_{t|\nD}
    \end{bmatrix}
    \right)
        \left(
    \begin{bmatrix}
        \vcx_t \\ \vcu_t \otimes \vcx_t
    \end{bmatrix}
    -
    \begin{bmatrix}
        \hat{\vcx}_{t|\nD} \\ \vcu_t \otimes \hat{\vcx}_{t|\nD}
    \end{bmatrix}
    \right)^\tr
    \right]
    \begin{bmatrix}
        \vcq_1 \\ \vcq_2
    \end{bmatrix}
    = 0,
\end{equation}
which is equivalent to
\begin{equation}\label{eq:q_1_q_2}
    \begin{bmatrix}
        \vcq_1^\tr & \vcq_2^\tr
    \end{bmatrix}
    \left[
    \sum_{t=0}^{\nD-1}
    \left(
    \mxP_{t|\nD}
    \otimes
    \begin{bmatrix}
         1 \\ \vcu_t   
    \end{bmatrix}
    \begin{bmatrix}
         1 \\ \vcu_t   
    \end{bmatrix}^\tr
    \right)
    \right]
    \begin{bmatrix}
        \vcq_1 \\ \vcq_2
    \end{bmatrix}
    = 0.
\end{equation}
From Lemma~\ref{lem:PD for summarion u_t} and \eqref{eq:P_t|nD}, we know $\sum_{t=0}^{\nD-1}\left(\begin{bmatrix}
         1 \\ \vcu_t   
    \end{bmatrix}
    \begin{bmatrix}
         1 \\ \vcu_t   
    \end{bmatrix}^\tr \right)$ and $\mxP_{t|\nD}$ are positive definite. Thus, \eqref{eq:q_1_q_2} holds if and only if $\vcq_1 = 0$ and $\vcq_2=0$. On the other hand, \eqref{eq:mu_xi_t} implies that
    \begin{equation}
        \hat{x}_{t|\nD}^\tr \vcq_1 + \left(\vcu_t^\tr \otimes \hat{\vcx}_{t|\nD}^\tr \right) \vcq_2 + \vcu_t^\tr \vcq_3 = 0, \quad \forall t = 1,\cdots,\nD-1,
    \end{equation}
    which implies that 
    \begin{equation}
        \begin{bmatrix}
            \vcu_0^\tr \\ \vcu_1^\tr \\ \cdots \\ \vcu_{\nD-1}^\tr
        \end{bmatrix}\vcq_3 = 0.
    \end{equation}
    From Assumption~\ref{assum:input}, the above equation implies that $\vcq_3 = 0$. Thus, we arrive at $\vcq = 0$, which contradicts the assumption that $\vcq$ is a non zero vector. Therefore, $\Psi$ is positive definite and thus invertible, which concludes the proof.
\end{proof}
\section{Proof of Proposition 1} \label{app:proof 1}
One can see that the integrand in \eqref{eq: likelihoog int} is the exponential of a quadratic function with respect to $\mathbf{x}$. 
Accordingly, we can employ Lemma~\ref{lem:Gaussian_integration} to evaluate and simplify the integral.
To this end, we define the matrix $\Sqtheta$ and the vector $\vcq(\vctheta)$ as
\begin{equation}\label{eqn:Sq}
    \Sqtheta 
    := 
    \Big(
    \Xi(\vctheta)^\tr\,
    \Sbfytheta^{-1}\, 
    \Xi(\vctheta)
    + 
    \Sbfxtheta^{-1}
    \Big)^{-1}\!,
    %
    %
\end{equation}
and 
\begin{equation}\label{eq:q function}
\begin{split}
    \vcq(\vctheta)
    &:= 
    -\,
    \Sqtheta
    \Big[
    \Xi(\vctheta)^\tr
    \Sbfytheta^{-1}\,
    ( \mathbf{D}(\vctheta)\mathbf{u}-\mathbf{y})
    -
    \Sbfxtheta^{-1}
    \mubfxtheta
    \Big],
\end{split}
\end{equation}
respectively.
By expanding the exponent term in \eqref{eq: likelihoog int}, we have
\begin{align}
    \big( \mathbf{y}-\mubfythetax & \big)^\tr
    \Sbfytheta^{-1}
    \big( \mathbf{y}-\mubfythetax \big) 
    + 
    \big( \mathbf{x}-\mubfxtheta \big)^\tr
    \Sbfxtheta^{-1} 
    \big( \mathbf{x}-\mubfxtheta \big)
    \nonumber\\
    =& \, 
    \big\| \Xi(\vctheta) \mathbf{x} + \mathbf{D}(\vctheta) \mathbf{u} -  \mathbf{y} \big\|^2_{\Sbfytheta^{-1}} 
    + 
    \big\| \mathbf{x} - \mubfxtheta \big\|^2_{\Sbfxtheta^{-1}} 
    \label{eq: S_q Gaussian}\\
    =& \, 
    \big\| \mathbf{x} - \vcq(\vctheta) \big\|^2_{\Sqtheta^{-1}} 
    - 
    \vcq(\vctheta)^\tr \Sqtheta^{-1} \vcq(\vctheta) 
    + 
    \big(\mathbf{D}(\vctheta)\mathbf{u} - \mathbf{y}\big)^\tr
    \Sbfytheta^{-1}
    \big(\mathbf{D}(\vctheta)\mathbf{u} - \mathbf{y}\big) 
    + 
    \mubfxtheta^\tr\Sbfxtheta^{-1}\mubfxtheta.
    \nonumber
\end{align}
Note that only the first term in the last line of \eqref{eq: S_q Gaussian} depends on $\mathbf{x}$.
Thus, substituting \eqref{eq: S_q Gaussian} into \eqref{eq: likelihoog int} and using the property \eqref{eq:Gaussian integration} introduced in Lemma~\ref{lem:Gaussian_integration}, we can simplify the likelihood function as
\begin{equation}\label{eq:likelihood integrated}
\begin{split}
     p(\mathbf{y}|\vctheta, \mathbf{u}) 
     = 
     (2\pi)^{-\frac{1}{2}\Ny\nsD}\,
     \frac{ |\Sqtheta|^{\frac{1}{2}}}
           {|\Sbfxtheta|^{\frac{1}{2}} |\Sbfytheta|^{\frac{1}{2}}} 
    \exp \bigg( &\frac{1}{2} \Big[ 
        \vcq(\vctheta)^\tr \Sqtheta^{-1} \vcq(\vctheta) 
        - 
        \big(\mathbf{D}(\vctheta)\mathbf{u} - \mathbf{y}\big)^\tr
        \Sbfytheta^{-1}
        \big(\mathbf{D}(\vctheta)\mathbf{u} - \mathbf{y}\big) 
     \\&\qquad \qquad \qquad \qquad \qquad \qquad \quad 
     - \mubfxtheta^\tr \Sbfxtheta^{-1} \mubfxtheta
     \Big] \bigg).
\end{split}
\end{equation}
For further simplification, we define the vector $\vcz(\vctheta)$ and the matrix $\mxF(\vctheta)$ as
\begin{align}\label{eqn:z_theta}
    \vcz(\vctheta) := \mathbf{D}(\vctheta)\mathbf{u} - \mathbf{y},
\end{align}
and
\begin{align}
    \mxF(\vctheta) := \Xi(\vctheta)\Sbfxtheta\Xi(\vctheta)^\tr + \Sbfytheta,
\end{align}
respectively.
Using Lemma \eqref{eq:matrix inversion lemma}, i.e., 
\emph{Matrix Inversion Lemma} or \emph{Sherman--Morrison--Woodbury Formula},  
we have
\begin{align}
        \mxF(\vctheta) ^{-1}
        &= 
        \big(\Sbfytheta + \Xi(\vctheta)\Sbfxtheta\Xi(\vctheta)^\tr\big)^{-1} 
        \nonumber\\
        &= 
        \Sbfytheta^{-1} -\Sbfytheta^{-1}\Xi(\vctheta)
        \big(\Sbfxtheta^{-1} + \Xi(\vctheta)^\tr \Sbfytheta^{-1} \Xi(\vctheta)\big)^{-1}
        \Xi(\vctheta)^\tr\Sbfytheta^{-1} 
        \label{eq:F-1}\\
        &=
        \Sbfytheta^{-1} -\Sbfytheta^{-1}\Xi(\vctheta)
        \Sqtheta
        \Xi(\vctheta)^\tr\Sbfytheta^{-1},
        \nonumber
\end{align}
where the last equality is due to  the definition of matrix $\Sqtheta$ given in \eqref{eqn:Sq}.
Similarly, one can see that
\begin{align}        
        \Sqtheta 
        &\,= 
        \Sbfxtheta - 
        \Sbfxtheta\Xi(\vctheta)^\tr
        \big(\Sbfytheta + \Xi(\vctheta) \Sbfxtheta \Xi(\vctheta)^\tr\big)^{-1} \Xi(\vctheta)\Sbfxtheta,
        \label{eq:Sq-1}
\end{align}
We employ \eqref{eq:q function} and \eqref{eqn:z_theta} to simplify the exponent in \eqref{eq:likelihood integrated}, thereby obtaining
\begin{align}
        \vcq(\vctheta)^\tr\Sqtheta^{-1}&\vcq(\vctheta)\, - \, (\mathbf{D}(\vctheta)\mathbf{u}-\mathbf{y})^\tr\Sbfytheta^{-1}(\mathbf{D}(\vctheta)\mathbf{u}-\mathbf{y}) - \mubfxtheta^\tr\Sbfxtheta^{-1}\mubfxtheta
        \\
        =\,& 
        \big[ \vcz(\vctheta)^\tr\Sbfytheta^{-1}\Xi(\vctheta) - \mubfxtheta^\tr\Sbfxtheta^{-1}\big]
        \, \Sqtheta\, 
        \big[ \vcz(\vctheta)^\tr\Sbfytheta^{-1}\Xi(\vctheta) - \mubfxtheta^\tr\Sbfxtheta^{-1} \big]^\tr
        \nonumber
        \\&
        \qquad\qquad\qquad
        - \vcz(\vctheta)^\tr\Sbfytheta^{-1}\vcz(\vctheta) - \mubfxtheta^\tr\Sbfxtheta^{-1}\mubfxtheta.
\end{align}
By expanding the above equation further and rearranging the terms, we arrive at
\begin{align}
        \vcq(\vctheta)^\tr\Sqtheta^{-1}&\vcq(\vctheta)\, - \, (\mathbf{D}(\vctheta)\mathbf{u}-\mathbf{y})^\tr\Sbfytheta^{-1}(\mathbf{D}(\vctheta)\mathbf{u}-\mathbf{y}) - \mubfxtheta^\tr\Sbfxtheta^{-1}\mubfxtheta
        \\
        =\,& 
        \big[ 
            \vcz(\vctheta)^\tr \Sbfytheta^{-1} \Xi(\vctheta) \Sqtheta 
            - 
            \mubfxtheta^\tr\Sbfxtheta^{-1}\Sqtheta 
        \big]
        \big[\Xi(\vctheta)^\tr\Sbfytheta^{-1}\vcz(\vctheta) \big] 
        \nonumber\\&\qquad
        -
        \big[ 
            \vcz(\vctheta)^\tr \Sbfytheta^{-1} \Xi(\vctheta) \Sqtheta 
            - 
            \mubfxtheta^\tr\Sbfxtheta^{-1}\Sqtheta
        \big]
        \big[ \Sbfxtheta^{-1} \mubfxtheta \big] 
        \nonumber
        \\&\qquad
        - 
        \vcz(\vctheta)^\tr \Sbfytheta^{-1} \vcz(\vctheta) 
        - 
        \mubfxtheta^\tr \Sbfxtheta^{-1} \mubfxtheta
        \\
        =\,& 
        \vcz(\vctheta)^\tr\Sbfytheta^{-1}\Xi(\vctheta)\Sqtheta \Xi(\vctheta)^\tr\Sbfytheta^{-1}\vcz(\vctheta)  
        - 
        \mubfxtheta^\tr\Sbfxtheta^{-1}\Sqtheta \Xi(\vctheta)^\tr\Sbfytheta^{-1}\vcz(\vctheta) 
        \nonumber\\&\qquad
        -
        \vcz(\vctheta)^\tr\Sbfytheta^{-1}\Xi(\vctheta)\Sqtheta \Sbfxtheta^{-1}\mubfxtheta 
        + 
        \mubfxtheta^\tr\Sbfxtheta^{-1}\Sqtheta  \Sbfxtheta^{-1}\mubfxtheta
        \nonumber\\&\qquad
        - \vcz(\vctheta)^\tr\Sbfytheta^{-1}\vcz(\vctheta) - \mubfxtheta^\tr\Sbfxtheta^{-1}\mubfxtheta
        \\
       =\,& 
       \vcz(\vctheta)^\tr\Sbfytheta^{-1}\Xi(\vctheta)\mxS_{\vcq}(\vctheta)\Xi(\vctheta)^\tr\Sbfytheta^{-1}\vcz(\vctheta) + \mubfxtheta^\tr\Sbfxtheta^{-1}\mxS_{\vcq}(\vctheta)\Sbfxtheta^{-1}\mubfxtheta 
       \nonumber\\&\qquad
       - 2\vcz(\vctheta)^\tr\Sbfytheta^{-1}\Xi(\vctheta)\Sqtheta\Sbfxtheta^{-1}\mubfxtheta - \vcz(\vctheta)^\tr\Sbfytheta^{-1}\vcz(\vctheta) - \mubfxtheta^\tr\Sbfxtheta^{-1}\mubfxtheta.
\end{align}
Additionally, by applying \eqref{eq:F-1} and \eqref{eq:Sq-1}, the above equation reduces to
\begin{align}
        \vcq(\vctheta)^\tr\Sqtheta^{-1}&\vcq(\vctheta)\, - \, (\mathbf{D}(\vctheta)\mathbf{u}-\mathbf{y})^\tr\Sbfytheta^{-1}(\mathbf{D}(\vctheta)\mathbf{u}-\mathbf{y}) - \mubfxtheta^\tr\Sbfxtheta^{-1}\mubfxtheta
        \\
        = \,& -\vcz(\vctheta)^\tr\mxF(\vctheta)^{-1}\vcz(\vctheta) + \mubfxtheta^\tr\Sbfxtheta^{-1}\mxS_{\vcq}(\vctheta)\Sbfxtheta^{-1}\mubfxtheta
        \nonumber\\&\qquad
        - 2\vcz(\vctheta)^\tr\Sbfytheta^{-1}\Xi(\vctheta)\Sqtheta\Sbfxtheta^{-1}\mubfxtheta - \mubfxtheta^\tr\Sbfxtheta^{-1}\mubfxtheta
        \\
        =\,& 
        -\vcz(\vctheta)^\tr\mxF(\vctheta)^{-1}\vcz(\vctheta) - \mubfxtheta^\tr\Xi(\vctheta)^\tr\mxF(\vctheta)^{-1}\Xi(\vctheta)\mubfxtheta - 2\vcz(\vctheta)^\tr\Sbfytheta^{-1}\Xi(\vctheta)\Sqtheta\Sbfxtheta^{-1}\mubfxtheta,
        \label{eq:exponent}
\end{align}        
%
On the other hand, it can be shown that the last term in the final line of the above equation is equivalent to $2\vcz(\vctheta)^\tr\mxF(\vctheta)^{-1}\Xi(\vctheta)\mubfxtheta$. More precisely, we have
\begin{align}
    \vcz(\vctheta)^\tr\Sbfytheta^{-1}
    &
    \Xi(\vctheta) \Sqtheta^{-1} \Sbfxtheta^{-1} \mubfxtheta 
    - 
    \vcz(\vctheta)^\tr \mxF(\vctheta)^{-1} \Xi(\vctheta) \mubfxtheta
    \\
    =\,& 
    \vcz(\vctheta)^\tr 
    \Sbfytheta^{-1} \Xi(\vctheta) \Sqtheta^{-1}\Sbfxtheta^{-1}
    \mubfxtheta 
    - 
    \vcz(\vctheta)^\tr
    \Sbfytheta^{-1}
    \Xi(\vctheta)
    \mubfxtheta
    \nonumber
    \\&
    \qquad
    + \vcz(\vctheta)^\tr\Sbfytheta^{-1}\Xi(\vctheta)\Sqtheta^{-1}\Xi(\vctheta)^\tr\Sbfytheta^{-1}\Xi(\vctheta)\mubfxtheta\\
    =\,& 
    \vcz(\vctheta)^\tr\Sbfytheta^{-1}\Xi(\vctheta)\Sqtheta^{-1}(\Sbfxtheta^{-1} + \Xi(\vctheta)^\tr\Sbfytheta^{-1}\Xi(\vctheta))\mubfxtheta - \vcz(\vctheta)^\tr\Sbfytheta^{-1}\Xi(\vctheta)\mubfxtheta
    \\
    =\,& 
    \vcz(\vctheta)^\tr\Sbfytheta^{-1}\Xi(\vctheta)\mubfxtheta -\vcz(\vctheta)^\tr\Sbfytheta^{-1}\Xi(\vctheta)\mubfxtheta
    \\
    =\,& 0,
\end{align}
where the first equality is due to \eqref{eq:F-1}. Thus, \eqref{eq:exponent} can be furthered simplified to 
\begin{equation}\label{eq: simlified exp}
    \begin{split}
        -\,
        \vcz(\vctheta)^\tr \mxF(\vctheta)^{-1} \vcz(\vctheta) 
        \,-\, 
        &
        \mubfxtheta^\tr\Xi(\vctheta)^\tr \mxF(\vctheta)^{-1}  \Xi(\vctheta) \mubfxtheta 
        - 
        2 \vcz(\vctheta)^\tr \mxF(\vctheta)^{-1} \Xi(\vctheta) \mubfxtheta
        \\
        & = 
        -
        \big(\vcz(\vctheta)^\tr + \mubfxtheta^\tr\Xi(\vctheta)^\tr\big)
        \mxF(\vctheta)^{-1}
        \big(\vcz(\vctheta) + \Xi(\vctheta)\mubfxtheta\big).
    \end{split}
\end{equation}
Substituting \eqref{eq: simlified exp} into \eqref{eq:likelihood integrated} and applying the logarithm, we obtain the log-likelihood function as
\begin{equation}
\begin{split}
    \log\, p(\mathbf{y}|\vctheta,\mathbf{u}) 
    &= 
    \frac{1}{2} 
    \Big[
        \logdet\big(\Sqtheta\big)
        - 
        \logdet\big(\Sbfxtheta\big)
        - 
        \logdet\big(\Sbfytheta\big) 
    \\&
    \qquad \qquad \qquad 
        - 
        \big(\vcz(\vctheta)^\tr + \mubfxtheta^\tr\Xi(\vctheta)^\tr\big)
        \mxF(\vctheta)^{-1}
        \big(\vcz(\vctheta) + \Xi(\vctheta)\mubfxtheta\big)
    \Big] + \frac12 \Ny\nD\log(2\pi).
\end{split}
\end{equation}
Thus, due to definition of matrix $\mxF(\vctheta)$ and the strictly increasing monotonicity of the logarithm function, 
the maximum likelihood problem \eqref{eqn:ML_generic}, which entails maximizing \eqref{eq: likelihoog int} over $\Theta$, is equivalent to minimizing the function $J$, as defined in \eqref{eq:cost function ML}, over $\Theta$, i.e., the optimization problem \eqref{eqn:min_J}.
This concludes the proof.
\qed
\section{Proof of Proposition 2}\label{app:proof PD of Expectation}
To show \eqref{eq:F_k}-\eqref{eq:H_k} are positive definite, we first expand the right half side of the equations. To this end, we use the following expectations,
\begin{align}
\mathbb{E}_{p(\mathbf{x}|\mathbf{y},\hat{\vctheta}_k,\mathbf{u})}[\vcx_t|\mathbf{y}, \hat{\vctheta}_k] &= \hat{\vcx}_{t|\nD}, \label{eq:expectation 1}\\
\mathbb{E}_{p(\mathbf{x}|\mathbf{y},\hat{\vctheta}_k,\mathbf{u})}[\vcx_t \vcx_t^\tr|\mathbf{y}, \hat{\vctheta}_k] &= \hat{\vcx}_{t|\nD}\hat{\vcx}_{t|\nD}^\tr + \mxP_{t|\nD}, \label{eq:expectation 2} \\
\mathbb{E}_{p(\mathbf{x}|\mathbf{y},\hat{\vctheta}_k,\mathbf{u})}[\vcx_{t+1} \vcx_t^\tr|\mathbf{y}, \hat{\vctheta}_k] &= \hat{\vcx}_{t+1|{\nD}}\hat{\vcx}_{t|{\nD}}^\tr + \mxP_{t+1|{\nD}}\mxL_{t,k}^\tr, \label{eq:expectation 3}
\end{align}
which can be estimated using RTS smoother as in \eqref{eq:mean and covariance} and \eqref{eq:correlation}.
Here, the subscript in $\mxL_{t,k}$ means the recursive updates $\mxL_t$ using the estimated parameter $\hat{\vctheta}_k$. Consider $\mxH_k(\mxM)$, using the above expectations, it can be expanded as
\begin{align}
    \mxH_k(\mxM) &= 
        \frac{1}{\nD} \sum_{t=0}^{\nD-1}  \Big[(\hat{\vcx}_{t+1|\nD}\hat{\vcx}_{t+1|\nD}^\tr + \mxP_{t+1|\nD}) -  \begin{bmatrix}\hat{\vcx}_{t+1|{\nD}}\hat{\vcx}_{t|{\nD}}^\tr + \mxP_{t+1|{\nD}}\mxL_{t,k}^\tr &\quad \hat{\vcx}_{t+1|\nD} \vcu_t^\tr  \end{bmatrix} {\mxM}^\tr  \nonumber \\
        &\quad -  {\mxM}\begin{bmatrix}\hat{\vcx}_{t+1|\nD}\hat{\vcx}_{t|\nD}^\tr + \mxP_{t+1|\nD}\mxL_{t,k}^\tr &\quad \hat{\vcx}_{t+1|\nD} \vcu_t^\tr  \end{bmatrix}^\tr + {\mxM} \begin{bmatrix}\hat{\vcx}_{t|\nD}\hat{\vcx}_{t|\nD}^\tr \!+\! \mxP_{t|\nD} & \hat{\vcx}_{t|\nD}\vcu_t^\tr\\ \vcu_t\hat{\vcx}_{t|\nD}^\tr &\vcu_t \vcu_t^\tr\end{bmatrix} {\mxM}^\tr \Big] \\
        &= \frac{1}{\nD} \sum_{t=0}^{\nD-1} 
        \big[ (\hat{\vcx}_{t+1|\nD} - 
        {\mxM}
        \begin{bmatrix}
            \hat{\vcx}_{t+1|\nD} \\ \vcu_t
        \end{bmatrix}
        )
        (\hat{\vcx}_{t+1|\nD} - 
        {\mxM}
        \begin{bmatrix}
            \hat{\vcx}_{t+1|\nD} \\ \vcu_t
        \end{bmatrix}
        )^\tr
        \nonumber \\
        &\quad + \mxP_{t+1|\nD} -  \mxP_{t+1|\nD}\mxL_{t,k}^\tr {\mxA}^\tr - {\mxA} \mxL_{t,k} \mxP_{t+1|\nD} + {\mxA}\mxP_{t|\nD}{\mxA}^\tr\big]. \label{eq:H_k 2}
\end{align}
To show that the above equation is positive definite, we first expand $\mxP_{t|\nD}$. Using \eqref{eq:mean and covariance}, it can be written as 
\begin{align}
    \mxP_{t|\nD} 
    &= \mxP_{t|t} + \mxL_{t,k}(\mxP_{t+1|\nD} - \mxP_{t+1|t})\mxL_{t,k}^\tr\\
    &= \mxP_{t|t} - \mxL_{t,k}\mxP_{t+1|t}\mxL_{t,k}^\tr + \mxL_{t,k}\mxP_{t+1|\nD}\mxL_{t,k}^\tr\\
    &= \mxP_{t|t} - \mxP_{t|t}\hat{\mxA}_k^\tr\mxP_{t+1|t}^{-1}\hat{\mxA}_k\mxP_{t|t} +\mxL_{t,k}\mxP_{t+1|\nD}\mxL_{t,k}^\tr\\
    &=(\mxP_{t|t}^{-1} + \hat{\mxA}_k^\tr \hat{\mxS}_{\vcw}^{-1} \hat{\mxA}_k)^{-1} + \mxL_{t,k}\mxP_{t+1|\nD}\mxL_{t,k}^\tr.\label{eq:P_t|nD}
\end{align}
In the last equality, we use Lemma~\ref{lem:matrix_inversion_lemma}. Notice that $\hat{\mxA}_k$ is the parameter estimates from the last iteration and thus is known. Since $\mxP_{t|t}$ is positive definite, \eqref{eq:P_t|nD} implies that $\mxP_{t|\nD}$ is also positive definite. By substituting \eqref{eq:P_t|nD} into \eqref{eq:H_k 2}, we have
\begin{align}
    \mxH_k(\mxM) &= 
    \frac{1}{\nD} \sum_{t=0}^{\nD-1} 
        \big[ (\hat{\vcx}_{t+1|\nD} - 
        {\mxM}
        \begin{bmatrix}
            \hat{\vcx}_{t+1|\nD} \\ \vcu_t
        \end{bmatrix}
        )
        (\hat{\vcx}_{t+1|\nD} - 
        {\mxM}
        \begin{bmatrix}
            \hat{\vcx}_{t+1|\nD} \\ \vcu_t
        \end{bmatrix}
        )^\tr \nonumber
        \\& \quad
        + (\mxP_{t+1|\nD}^{\frac{1}{2}} - {\mxA}\mxL_{t,k}\mxP_{t+1|\nD}^{\frac{1}{2}})(\mxP_{t+1|\nD}^{\frac{1}{2}} - {\mxA}\mxL_{t,k}\mxP_{t+1|\nD}^{\frac{1}{2}})^\tr 
        + \mxA(\mxP_{t|t}^{-1} + \hat{\mxA}_k \hat{\mxS}_{\vcw,k} \hat{\mxA}_k)^{-1}{\mxA}^\tr \big] \label{eq:H_theta|hat_theta PD1}\\
    &= 
    \frac{1}{\nD} \sum_{t=0}^{\nD-1} 
        \big[ (\hat{\vcx}_{t+1|\nD} - 
        {\mxM}
        \begin{bmatrix}
            \hat{\vcx}_{t+1|\nD} \\ \vcu_t
        \end{bmatrix}
        )
        (\hat{\vcx}_{t+1|\nD} - 
        {\mxM}
        \begin{bmatrix}
            \hat{\vcx}_{t+1|\nD} \\ \vcu_t
        \end{bmatrix}
        )^\tr \nonumber
        \\& \quad
        + (\mathbf{I} - {\mxA}\mxL_{t,k})\mxP_{t+1|\nD}(\mathbf{I} - {\mxA}\mxL_{t,k})^\tr 
        + \mxA(\mxP_{t|t}^{-1} + \hat{\mxA}_k \hat{\mxS}_{\vcw,k} \hat{\mxA}_k)^{-1}{\mxA}^\tr \big]. \label{eq:H_theta|hat_theta PD2}
\end{align}
Consequently, since the first term in \eqref{eq:H_theta|hat_theta PD2} is positive semi-definite, from Lemma~\ref{lem:PD}, it follows that
\begin{align}
    \mxH_k(\mxM) 
        \succeq
        \left(\mxP_{t+1|\nD}^{-1} + \mxL_{t,k}^\tr (\mxP_{t|t}^{-1} + \hat{\mxA}_k \hat{\mxS}_{\vcw,k} \hat{\mxA}_k) \mxL_{t,k} \right)^{-1}.
\end{align}
Thus, $\mxH_k(\mxM)$ is positive definite for all $\mxM \in \Rbb^{\Nx \times (\Nx + \Nu)}$, with a lower bound which does not depend on $\mxM$.

Next, we show the positive definiteness of $\mxG_k(\mxN)$, which can be expanded as
\begin{align}\!\!\!\!\!\!
    \mxG_k(\mxN) &= \mathbb{E}_{p(\mathbf{x}|\mathbf{y},\hat{\vctheta}_k,\mathbf{u})}
    \Bigg[\sum_{t=0}^{\nD-1} \left(\vcy_t - \mxN \begin{bmatrix} \vcx_t\\ \vcu_t \otimes \vcx_t \\ \vcu_t\end{bmatrix}  \right)
    \left(\vcy_t - \mxN \begin{bmatrix} \vcx_t\\ \vcu_t \otimes \vcx_t \\ \vcu_t\end{bmatrix}  \right)^\tr\Bigg] \label{eq:G_theta|hat_theta PD 1}\\
    &= \sum_{t=0}^{\nD-1} 
    \Bigg[ \vcy_t \vcy_t^\tr 
    - \mxN \begin{bmatrix} 
    \hat{\vcx}_{t|\nD}\\ \vcu_t \otimes \hat{\vcx}_{t|\nD} \\ \vcu_t\end{bmatrix}\vcy_t^\tr
    - \vcy_t \begin{bmatrix} \hat{\vcx}_{t|\nD}\\ \vcu_t \otimes \hat{\vcx}_{t|\nD} \\ \vcu_t\end{bmatrix}^\tr \mxN^\tr
    \nonumber \\ &
    \quad + \mxN 
    \begin{bmatrix} 
    \hat{\vcx}_{t|\nD} \hat{\vcx}_{t|\nD}^\tr + \mxP_{t|\nD} 
    & \vcu_t^\tr \otimes \left(\hat{\vcx}_{t|\nD} \hat{\vcx}_{t|\nD}^\tr + \mxP_{t|\nD} \right) 
    & \hat{\vcx}_{t|\nD} \vcu_t^\tr \\ 
    \vcu_t \otimes \left(\hat{\vcx}_{t|\nD} \hat{\vcx}_{t|\nD}^\tr + \mxP_{t|\nD} \right)
    & \vcu_t \vcu_t^\tr \otimes \left(\hat{\vcx}_{t|\nD} \hat{\vcx}_{t|\nD}^\tr + \mxP_{t|\nD} \right)
    & \vcu_t\vcu_t^\tr \otimes \hat{\vcx}_{t|\nD}\\
    \vcu_t \hat{\vcx}_{t|\nD}^\tr 
    & \vcu_t \vcu_t^\tr \otimes \hat{\vcx}_{t|\nD}^\tr
    & \vcu_t \vcu_t^\tr 
    \end{bmatrix}
    \mxN^\tr
    \Bigg] \label{eq:G_theta|hat_theta PD 2}\\
    &= \sum_{t=0}^{\nD-1} \left[
    \left( \mxN \begin{bmatrix} \hat{\vcx}_{t|\nD}\\ \vcu_t \otimes \hat{\vcx}_{t|\nD} \\ \vcu_t\end{bmatrix} - \vcy_t \right) \left( \mxN \begin{bmatrix} \hat{\vcx}_{t|\nD}\\ \vcu_t \otimes \hat{\vcx}_{t|\nD} \\ \vcu_t\end{bmatrix} - \vcy_t \right)^\tr
    + \mxC 
    \begin{bmatrix}  
    \mxP_{t|\nD} 
    & \vcu_t^\tr \otimes\mxP_{t|\nD}\\
    \vcu_t \otimes \mxP_{t|\nD}
    & \vcu_t \vcu_t^\tr \otimes \mxP_{t|\nD} 
    \end{bmatrix}
    \mxC^\tr
    \right] \label{eq:G_theta|hat_theta PD 3}\\
    &= \sum_{t=0}^{\nD-1} \left[
    \left( \mxC \begin{bmatrix} \hat{\vcx}_{t|\nD}\\ \vcu_t \otimes \hat{\vcx}_{t|\nD} \end{bmatrix} + \begin{bmatrix}\mxD & -\mathbf{I}\end{bmatrix}
    \begin{bmatrix} \vcu_t \\ \vcy_t \end{bmatrix} \right) 
    \left( \mxC \begin{bmatrix} \hat{\vcx}_{t|\nD}\\ \vcu_t \otimes \hat{\vcx}_{t|\nD} \end{bmatrix} 
    + \begin{bmatrix}\mxD & -\mathbf{I}\end{bmatrix}
    \begin{bmatrix} \vcu_t \\ \vcy_t \end{bmatrix}
    \right)^\tr\right.
    \nonumber \\ &
    \quad + \mxC \left.
    \begin{bmatrix}  
    \mxP_{t|\nD} 
    & \vcu_t^\tr \otimes\mxP_{t|\nD}\\
    \vcu_t \otimes \mxP_{t|\nD}
    & \vcu_t \vcu_t^\tr \otimes \mxP_{t|\nD} 
    \end{bmatrix}
    \mxC^\tr
    \right]. \label{eq:G_theta|hat_theta PD 4}
\end{align}
We show the above equation is positive definite by contradiction. Assume there exists a non zeros vector $\vcz \in \Rbb^{\Ny}$, such that $\vcz^\tr \mxG_k(\mxN) \vcz = 0$. Since both two terms in \eqref{eq:G_theta|hat_theta PD 4} are positive semi-definite, we have
\begin{equation}
\begin{split}
    &\qquad  \vcz^\tr \mxC 
    \sum_{t=0}^{\nD-1}
    \begin{bmatrix}  
    \mxP_{t|\nD} 
    & \vcu_t^\tr \otimes\mxP_{t|\nD}\\
    \vcu_t \otimes \mxP_{t|\nD}
    & \vcu_t \vcu_t^\tr \otimes \mxP_{t|\nD} 
    \end{bmatrix}
    \mxC^\tr \vcz = 0,\\
\end{split}
\end{equation}    
which is equivalent to
\begin{equation}
\begin{split}    
    \vcz^\tr \mxC 
    \sum_{t=0}^{\nD-1}
    \left( \begin{bmatrix}  1 \\ \vcu_t \end{bmatrix}
    \begin{bmatrix}  1 \\ \vcu_t \end{bmatrix}^\tr 
    \otimes \mxP_{t|\nD} \right)
    \mxC^\tr \vcz = 0. 
\end{split}
\end{equation}
As we show $\mxP_{t|\nD} \succ 0$ in \eqref{eq:P_t|nD} and from Lemma~\ref{lem:PD for summarion u_t}, the above equation implies $\mxC^\tr \vcz = 0$. On the other hand, from \eqref{eq:G_theta|hat_theta PD 4}, $\vcz^\tr \mxG_k(\mxN) \vcz = 0$ also implies that 
\begin{equation}
    \begin{split}
    &\qquad 
    \sum_{t=0}^{\nD-1} \left[    \vcz^\tr
    \left( \mxC \begin{bmatrix} \hat{\vcx}_{t|\nD}\\ \vcu_t \otimes \hat{\vcx}_{t|\nD} \end{bmatrix} + \begin{bmatrix}\mxD & -\mathbf{I}\end{bmatrix}
    \begin{bmatrix} \vcu_t \\ \vcy_t \end{bmatrix} \right) 
    \left( \mxC \begin{bmatrix} \hat{\vcx}_{t|\nD}\\ \vcu_t \otimes \hat{\vcx}_{t|\nD} \end{bmatrix} 
    + \begin{bmatrix}\mxD & -\mathbf{I}\end{bmatrix}
    \begin{bmatrix} \vcu_t \\ \vcy_t \end{bmatrix}
    \right)^\tr \vcz \right] = 0,\\
    \end{split}
\end{equation}
which is equivalent to
\begin{equation}
    \begin{split}
    \vcz^\tr
    \begin{bmatrix}\mxD & -\mathbf{I}\end{bmatrix}
    \sum_{t=0}^{\nD-1}
    \begin{bmatrix} 
        \vcu_t\vcu_t^\tr & \vcu_t\vcy_t^\tr\\
        \vcy_t\vcu_t^\tr & \vcy_t\vcy_t^\tr
    \end{bmatrix} 
    \begin{bmatrix}\mxD & -\mathbf{I}\end{bmatrix}^\tr
    \vcz  = 0. \\
    \end{split}
\end{equation}
From Lemma~\ref{lem:PD for summarion y_t}, we know
\begin{align}
  \sum_{t=0}^{\nD-1}
    \begin{bmatrix} 
        \vcu_t\vcu_t^\tr & \vcu_t\vcy_t^\tr\\
        \vcy_t\vcu_t^\tr & \vcy_t\vcy_t^\tr
    \end{bmatrix} \succ 0.  
\end{align}
Thus, we arrive at $\vcz = 0$, which contradicts our assumption that $\vcz$ is a non zero vector. Therefore, $\mxG_k(\mxN)$ is positive definite for all $\mxN \in \Rbb^{\Ny \times (\Nu + \Nx + \Nx\Nu)}$. Now, we give the lower bound of $\mxG_k(\mxN)$. Recall that
\begin{equation}
\begin{split}
    \Psi &=   \mathbb{E}_{p(\mathbf{x}|\mathbf{y},\hat{\vctheta}_k,\mathbf{u})} \left[ \sum_{t=0}^{\nD - 1} \left(\begin{bmatrix}
        \vcx_t \\ \vcu_t \otimes \vcx_t \\ \vcu_t
    \end{bmatrix}
    \begin{bmatrix}
        \vcx_t \\ \vcu_t \otimes \vcx_t \\ \vcu_t
    \end{bmatrix}^\tr \right)\right]
    \\ &
    =\sum_{t=0}^{\nD-1}
    \begin{bmatrix} 
    \hat{\vcx}_{t|\nD} \hat{\vcx}_{t|\nD}^\tr + \mxP_{t|\nD} 
    & \vcu_t^\tr \otimes \left(\hat{\vcx}_{t|\nD} \hat{\vcx}_{t|\nD}^\tr + \mxP_{t|\nD} \right) 
    & \hat{\vcx}_{t|\nD} \vcu_t^\tr \\ 
    \vcu_t \otimes \left(\hat{\vcx}_{t|\nD} \hat{\vcx}_{t|\nD}^\tr + \mxP_{t|\nD} \right)
    & \vcu_t \vcu_t^\tr \otimes \left(\hat{\vcx}_{t|\nD} \hat{\vcx}_{t|\nD}^\tr + \mxP_{t|\nD} \right)
    & \vcu_t\vcu_t^\tr \otimes \hat{\vcx}_{t|\nD}\\
    \vcu_t \hat{\vcx}_{t|\nD}^\tr 
    & \vcu_t \vcu_t^\tr \otimes \hat{\vcx}_{t|\nD}^\tr
    & \vcu_t \vcu_t^\tr 
    \end{bmatrix}
\end{split}
\end{equation}
as defined in Lemma~\ref{lem:PD_phi_psi}.
Additionally, define matrices $\Pi$ and $\mxY$ as
\begin{equation}
\begin{split}
    \Pi &:= \sum_{t=0}^{\nD-1} \vcy_t
    \begin{bmatrix}
        \vcx_t \\ \vcu_t \otimes \vcx_t \\ \vcu_t
    \end{bmatrix},
\end{split}
\end{equation}
and
\begin{equation}
\begin{split}
    \mxY &:= \sum_{t=0}^{\nD-1} \vcy_t \vcy_t^\tr, 
\end{split}
\end{equation}
respectively. From Lemma~\ref{lem:PD_phi_psi}, we know $\Psi$ is positive definite and invertible.  Using these notations, from \eqref{eq:G_theta|hat_theta PD 2}, we have
\begin{equation}
\begin{split}
    \mxG_k(\mxN) &= \mxN\Psi\mxN^\tr - \mxN \Pi^\tr - \Pi \mxN^\tr + \mxY
    \\ &
    = \left(\mxN\Psi^{\frac
    {1}{2}} - \Pi^\tr\Psi^{-\frac{1}{2}} \right) \left(\mxN\Psi^{\frac{1}{2}} - \Pi^\tr\Psi^{-\frac{1}{2}} \right)^\tr + \mxY - \Pi^\tr \Psi^{-1} \Pi,
\end{split}
\end{equation}
which implies that 
\begin{equation}
    \mxG_k(\mxN) \succeq \mxY - \Pi^\tr \Psi^{-1} \Pi.
\end{equation}
Moreover, we have
\begin{equation}
    \mxY - \Pi^\tr \Psi^{-1} \Pi = \mxG_k(\Pi^\tr \Psi^{-1}) \succ 0,
\end{equation}
which implies that $\mxG_k(\mxN) \succeq \lambda_0 \mathbf{I}$, where $\lambda_0 = \lambda_{\min} \left(\mxY - \Pi^\tr \Psi^{-1} \Pi \right) > 0$. Thus, $\mxG_k(\mxN)$ is positive definite with a lower bound which does not depend on $\mxN$.

Finally, for the $\mxF_k(\mu_{\vcx_0})$, we have
\begin{align}
   \mxF_k(\mu_{\vcx_0}) 
   &= 
   \mathbb{E}_{p(\mathbf{x}|\mathbf{y},\hat{\vctheta}_k,\mathbf{u})} \left[(\vcx_0-\vcmu_{\vcx_0}) (\vcx_0-\vcmu_{\vcx_0})^\tr\right]
   \\
   &= 
   \mathbb{E}_{p(\mathbf{x}|\mathbf{y},\hat{\vctheta}_k,\mathbf{u})} 
   \left[\vcx_0\vcx_0^\tr\right] 
   -
   \mathbb{E}_{p(\mathbf{x}|\mathbf{y},\hat{\vctheta}_k,\mathbf{u})} 
   \left[\vcmu_{\vcx_0}\vcx_0^\tr\right]
   -
   \mathbb{E}_{p(\mathbf{x}|\mathbf{y},\hat{\vctheta}_k,\mathbf{u})} 
   \left[\vcx_0\vcmu_{\vcx_0}^\tr\right]
   +
   \mathbb{E}_{p(\mathbf{x}|\mathbf{y},\hat{\vctheta}_k,\mathbf{u})} 
   \left[\vcmu_{\vcx_0}\vcmu_{\vcx_0}^\tr\right] 
    \\
   &= 
   \mathbb{E}_{p(\mathbf{x}|\mathbf{y},\hat{\vctheta}_k,\mathbf{u})} 
   \left[\vcx_0\vcx_0^\tr\right] 
   -
   \vcmu_{\vcx_0}\left(\mathbb{E}_{p(\mathbf{x}|\mathbf{y},\hat{\vctheta}_k,\mathbf{u})} 
   [\vcx_0]\right)^\tr
   -
   \mathbb{E}_{p(\mathbf{x}|\mathbf{y},\hat{\vctheta}_k,\mathbf{u})} 
    [\vcx_0]\vcmu_{\vcx_0}^\tr
   +
   \vcmu_{\vcx_0}\vcmu_{\vcx_0}^\tr 
    \\
   &= \hat{\vcx}_{0|\nD}\hat{\vcx}_{0|\nD}^\tr +
    \mxP_{0|\nD} - \vcmu_{\vcx_0}\hat{\vcx}_{0|\nD}^\tr - \hat{\vcx}_{0|\nD}\vcmu_{\vcx_0}^\tr + \vcmu_{\vcx_0}\vcmu_{\vcx_0}^\tr
    \\
   &=
   \mxP_{0|\nD} + (\hat{\vcx}_{0|\nD} - \vcmu_{\vcx_0})(\hat{\vcx}_{0|\nD} - \vcmu_{\vcx_0})^\tr 
   \\
   &\succeq \mxP_{0|\nD}.
\end{align}
Accordingly, $\mxP_{0|\nD} \succ 0$, it follows that $\mxF_k(\mu_{\vcx_0})$ is positive definite for all $\mu_{\vcx_0} \in \Rbb^{\Nx}$, with a lower bound which does not depend on $\vcmu_{\vcx_0}$. This concludes the proof.
\qed

\section{Proof of Proposition 3}\label{app:Stationary Point Uniqueness}
From \eqref{eq:J function}, we have
\begin{align}
        J_k(\vctheta) = \mathbb{E}_{p(\mathbf{x}|\mathbf{y},\hat{\vctheta}_k,\mathbf{u})}\Bigg[
        &\frac{\nD}{2}\logdet\big(\mxS_\vcv\big) 
        + 
        \frac{1}{2} \logdet\big(\mxS_{\vcx_0}\big)
        +
        \frac{\nD}{2}\logdet\big(\mxS_\vcw\big) 
        \\&
        +\frac{1}{2}\sum_{t=0}^{\nD-1} \trace\left(\mxS_\vcv^{-1} \left(\vcy_t - \mxN \begin{bmatrix} \vcx_t\\ \vcu_t \otimes \vcx_t \\ \vcu_t\end{bmatrix}  \right)
        \left(\vcy_t - \mxN \begin{bmatrix} \vcx_t\\ \vcu_t \otimes \vcx_t \\ \vcu_t\end{bmatrix}  \right)^\tr\right)
        \\ & 
        + \frac{1}{2}\trace\left(\mxS_{\vcx_0}^{-1} \left(\vcx_0-\vcmu_{\vcx_0}\right) \left(\vcx_0-\vcmu_{\vcx_0}\right)^\tr\right) 
        \\ &
        +\frac{1}{2}\sum_{t=0}^{\nD-1} \trace \left( \mxS_\vcw^{-1}\left(\vcx_{t+1} - \mxM \begin{bmatrix}\vcx_t\\ \vcu_t \end{bmatrix}\right)\left(\vcx_{t+1} - \mxM \begin{bmatrix}\vcx_t\\ \vcu_t \end{bmatrix}\right)^\tr\right)\Bigg].
\end{align}
To find all local minimums, let the partial derivative with respect to each parameter be zero. By deriving the derivatives, we obtain
\begin{align}
        \pdv{J_k(\vctheta)}{\mxM} =& \, \mxS_\vcw^{-1} \mathbb{E}_{p(\mathbf{x}|\mathbf{y},\hat{\vctheta}_k,\mathbf{u})} \left[\sum_{t=0}^{\nD-1} - \left(\vcx_{t+1} - \mxM \begin{bmatrix}\vcx_t\\ \vcu_t \end{bmatrix} \right)\begin{bmatrix}\vcx_t^\tr & \vcu_t^\tr \end{bmatrix}\right], \\
        \pdv{J_k(\vctheta)}{\mxN} =& \, \mxS_\vcv^{-1} \mathbb{E}_{p(\mathbf{x}|\mathbf{y},\hat{\vctheta}_k,\mathbf{u})} \left[-\sum_{t=0}^{\nD-1}  \vcy_t  \begin{bmatrix} \vcx_t\\ \vcu_t \otimes \vcx_t \\ \vcu_t\end{bmatrix}^\tr + \mxN \left(\sum_{t=0}^{\nD-1} ( \begin{bmatrix} \vcx_t\\ \vcu_t \otimes \vcx_t \\ \vcu_t\end{bmatrix} \begin{bmatrix} \vcx_t\\ \vcu_t \otimes \vcx_t \\ \vcu_t\end{bmatrix}^\tr\right)\right],\\
        \pdv{J_k(\vctheta)}{\mxS_\vcw} =& \, \frac{1}{2}\mathbb{E}_{p(\mathbf{x}|\mathbf{y},\hat{\vctheta}_k,\mathbf{u})} \sum_{t=0}^{\nD-1} \left[\nD\mxS_\vcw^{-1} - \mxS_\vcw^{-1} \left(\vcx_{t+1} - \mxM \begin{bmatrix}\vcx_t\\ \vcu_t \end{bmatrix}\right)\left(\vcx_{t+1} - \mxM \begin{bmatrix}\vcx_t\\ \vcu_t \end{bmatrix}\right)^\tr \mxS_\vcw^{-1}\right], \\ 
        \pdv{J_k(\vctheta)}{\mxS_{\vcv}} =& \, \frac{1}{2}\mathbb{E}_{p(\mathbf{x}|\mathbf{y},\hat{\vctheta}_k,\mathbf{u})} \sum_{t=0}^{\nD-1} \left[\nD\mxS_\vcv^{-1} - \mxS_\vcv^{-1} \left(\vcy_t  - \mxN \begin{bmatrix} \vcx_t\\ \vcu_t \otimes \vcx_t \\ \vcu_t\end{bmatrix}\right)\left(\vcy_t - \mxN \begin{bmatrix} \vcx_t\\ \vcu_t \otimes \vcx_t \\ \vcu_t\end{bmatrix}\right)^\tr \mxS_\vcv^{-1}\right], \\ 
        \pdv{J_k(\vctheta)}{\vcmu_{\vcx_0}} =& \, \mxS_{\vcx_0}^{-1} \mathbb{E}_{p(\mathbf{x}|\mathbf{y},\hat{\vctheta}_k,\mathbf{u})} \left[-\vcx_0+\vcmu_{\vcx_0}\right],\\
        \pdv{J_k(\vctheta)}{\mxS_{\vcx_0}} =& \, \frac{1}{2}\mathbb{E}_{p(\mathbf{x}|\mathbf{y},\hat{\vctheta}_k,\mathbf{u})}  \left[\mxS_{\vcx_0}^{-1} - \mxS_{\vcx_0}^{-1} (\vcx_0-\vcmu_{\vcx_0}) (\vcx_0-\vcmu_{\vcx_0})^\tr\mxS_{\vcx_0}^{-1}\right]. 
\end{align}
Using the expectations in \eqref{eq:expectation 1}-\eqref{eq:expectation 3} again, the partial derivatives can be further expanded as
\begin{align}
\label{eq:J derivative}
        \pdv{J_k(\vctheta)}{\mxM} =& \, \mxS_\vcw^{-1}  \left[ \sum_{t=0}^{\nD-1} (-\begin{bmatrix} \hat{\vcx}_{t+1|{\nD}}\hat{\vcx}_{t|{\nD}}^\tr + \mxP_{t+1|{\nD}}\mxL_{t,k}^\tr & \quad \hat{\vcx}_{t+1|\nD}\vcu_t^\tr  \end{bmatrix} + \mxM \begin{bmatrix}\hat{\vcx}_{t|\nD}\hat{\vcx}_{t|\nD}^\tr + \mxP_{t|\nD}  &\hat{\vcx}_{t|\nD}\vcu_t^\tr \\\vcu_t\hat{\vcx}_{t|\nD}^\tr &\vcu_t \vcu_t^\tr \end{bmatrix} )\right], \\
        \pdv{J_k(\vctheta)}{\mxN} =& \, \mxS_\vcv^{-1} \left[\sum_{t=0}^{\nD-1} -\vcy_t \begin{bmatrix} \hat{\vcx}_{t|\nD}\\ \vcu_t \otimes \hat{\vcx}_{t|\nD} \\ \vcu_t\end{bmatrix}^\tr\right. \!\!\!\! 
        \nonumber
        \\& + \mxN \left.\left( \sum_{t=0}^{\nD-1}
        \begin{bmatrix} 
        \hat{\vcx}_{t|\nD} \hat{\vcx}_{t|\nD}^\tr + \mxP_{t|\nD} 
        & \vcu_t^\tr \otimes \left(\hat{\vcx}_{t|\nD} \hat{\vcx}_{t|\nD}^\tr + \mxP_{t|\nD} \right) 
        & \hat{\vcx}_{t|\nD} \vcu_t^\tr \\ 
        \vcu_t \otimes \left(\hat{\vcx}_{t|\nD} \hat{\vcx}_{t|\nD}^\tr + \mxP_{t|\nD} \right)
        & \vcu_t \vcu_t^\tr \otimes \left(\hat{\vcx}_{t|\nD} \hat{\vcx}_{t|\nD}^\tr + \mxP_{t|\nD} \right)
        & \vcu_t\vcu_t^\tr \otimes \hat{\vcx}_{t|\nD}\\
        \vcu_t \hat{\vcx}_{t|\nD}^\tr 
        & \vcu_t \vcu_t^\tr \otimes \hat{\vcx}_{t|\nD}^\tr
        & \vcu_t \vcu_t^\tr 
        \end{bmatrix}
        \right)\right],\\
        \pdv{J_k(\vctheta)}{\mxS_\vcw} =& \, \frac{1}{2} \sum_{t=0}^{\nD-1} \Big[ \nD\mxS_\vcw^{-1} - \mxS_\vcw^{-1} \Big((\hat{\vcx}_{t+1|\nD}\hat{\vcx}_{t+1|\nD}^\tr + \mxP_{t+1|\nD}) -  \begin{bmatrix}\hat{\vcx}_{t+1|{\nD}}\hat{\vcx}_{t|{\nD}}^\tr + \mxP_{t+1|{\nD}}\mxL_{t,k}^\tr &\quad \hat{\vcx}_{t+1|\nD} \vcu_t^\tr  \end{bmatrix} \mxM^\tr  \nonumber \\
        &-  \mxM\begin{bmatrix}\hat{\vcx}_{t+1|\nD}\hat{\vcx}_{t|\nD}^\tr + \mxP_{t+1|\nD}\mxL_{t,k}^\tr &\quad \hat{\vcx}_{t+1|\nD} \vcu_t^\tr  \end{bmatrix}^\tr + \mxM \begin{bmatrix}\hat{\vcx}_{t|\nD}\hat{\vcx}_{t|\nD}^\tr \!+\! \mxP_{t|\nD} & \hat{\vcx}_{t|\nD}\vcu_t^\tr\\ \vcu_t\hat{\vcx}_{t|\nD}^\tr &\vcu_t \vcu_t^\tr\end{bmatrix} \mxM^\tr \Big) \mxS_\vcw^{-1} \Big],\\
        \pdv{J_k(\vctheta)}{\mxS_\vcv} =& \, \frac{1}{2} \sum_{t=0}^{\nD-1} \Bigg[ \nD\mxS_\vcv^{-1} - \mxS_\vcv^{-1} \Bigg( \vcy_t \vcy_t^\tr - \mxN \begin{bmatrix} \hat{\vcx}_{t|\nD}\\ \vcu_t \otimes \hat{\vcx}_{t|\nD} \\ \vcu_t\end{bmatrix} \vcy_t^\tr -  \vcy_t \begin{bmatrix} \hat{\vcx}_{t|\nD}\\ \vcu_t \otimes \hat{\vcx}_{t|\nD} \\ \vcu_t\end{bmatrix}^\tr\mxN^\tr \nonumber \\
        & + \mxN \sum_{t=0}^{\nD-1}
        \begin{bmatrix} 
        \hat{\vcx}_{t|\nD} \hat{\vcx}_{t|\nD}^\tr + \mxP_{t|\nD} 
        & \vcu_t^\tr \otimes \left(\hat{\vcx}_{t|\nD} \hat{\vcx}_{t|\nD}^\tr + \mxP_{t|\nD} \right) 
        & \hat{\vcx}_{t|\nD} \vcu_t^\tr \\ 
        \vcu_t \otimes \left(\hat{\vcx}_{t|\nD} \hat{\vcx}_{t|\nD}^\tr +    \mxP_{t|\nD} \right)
        & \vcu_t \vcu_t^\tr \otimes \left(\hat{\vcx}_{t|\nD} \hat{\vcx}_{t|\nD}^\tr     + \mxP_{t|\nD} \right)
        & \vcu_t\vcu_t^\tr \otimes \hat{\vcx}_{t|\nD}\\
        \vcu_t \hat{\vcx}_{t|\nD}^\tr 
        & \vcu_t \vcu_t^\tr \otimes \hat{\vcx}_{t|\nD}^\tr
        & \vcu_t \vcu_t^\tr 
        \end{bmatrix}
        \mxN^\tr \Bigg) \mxS_\vcv^{-1} \Bigg], \\
        \pdv{J_k(\vctheta)}{\vcmu_{\vcx_0}} =& \, \mxS_{\vcx_0}^{-1} [-\hat{\vcx}_{0|\nD}+\vcmu_{\vcx_0}],\\
        \pdv{J_k(\vctheta)}{\mxS_{\vcx_0}} =& \, \frac{1}{2}  [\mxS_{\vcx_0}^{-1} - \mxS_{\vcx_0}^{-1} (\hat{\vcx}_{0|\nD}\hat{\vcx}_{0|\nD}^\tr + \mxP_{0|\nD} - \hat{\vcx}_{0|\nD}\vcmu_{\vcx_0}^\tr - \vcmu_{\vcx_0}\hat{\vcx}_{0|\nD}^\tr + \vcmu_{\vcx_0}\vcmu_{\vcx_0}^\tr) \mxS_{\vcx_0}^{-1}]. 
\end{align}
From Lemma~\ref{lem:PD_phi_psi}, we know that
\begin{equation}
    \Phi = \sum_{t=0}^{\nD-1}\begin{bmatrix}\hat{\vcx}_{t|\nD}\hat{\vcx}_{t|\nD}^\tr + \mxP_{t|\nD}  &\hat{\vcx}_{t|\nD}\vcu_t^\tr \\\vcu_t\hat{\vcx}_{t|\nD}^\tr &\vcu_t \vcu_t^\tr \end{bmatrix},
\end{equation}
and
\begin{equation}
    \Psi =\sum_{t=0}^{\nD-1}\begin{bmatrix} 
        \hat{\vcx}_{t|\nD} \hat{\vcx}_{t|\nD}^\tr + \mxP_{t|\nD} 
        & \vcu_t^\tr \otimes \left(\hat{\vcx}_{t|\nD} \hat{\vcx}_{t|\nD}^\tr + \mxP_{t|\nD} \right) 
        & \hat{\vcx}_{t|\nD} \vcu_t^\tr \\ 
        \vcu_t \otimes \left(\hat{\vcx}_{t|\nD} \hat{\vcx}_{t|\nD}^\tr +    \mxP_{t|\nD} \right)
        & \vcu_t \vcu_t^\tr \otimes \left(\hat{\vcx}_{t|\nD} \hat{\vcx}_{t|\nD}^\tr     + \mxP_{t|\nD} \right)
        & \vcu_t\vcu_t^\tr \otimes \hat{\vcx}_{t|\nD}\\
        \vcu_t \hat{\vcx}_{t|\nD}^\tr 
        & \vcu_t \vcu_t^\tr \otimes \hat{\vcx}_{t|\nD}^\tr
        & \vcu_t \vcu_t^\tr 
        \end{bmatrix},
\end{equation}
are positive definite and thus invertible. Therefore, we can set the derived partial derivatives to zero and subsequently obtain \eqref{eq: closed form M}-\eqref{eq: closed form Sx0} by solving the resulting system of equations, which concludes the proof.
\qed
\section{Proof of Corollary 1}\label{app:Proof of Corollary 1}
In Proposition~\ref{pro:local minimum}, we have verified that $\nabla_\vctheta \,  J_k = 0$ has a unique solution. Thus, to conclude the invexity of $J_k$, we need to show that $J_k$ is radially unbounded, i.e., $\lim_{\|\vctheta\|\to\infty} \, J_k(\vctheta) = \infty$ and $\lim_{\vctheta\to\partial\Theta} \, J_k(\vctheta) = \infty$, where $\partial\Theta$ denotes the boundary of $\Theta$.
%
%
One can easily see that $J_k$ is a strongly convex quadratic function with respect to tuple $(\mxM,\mxN,\mu_{\vcx_0})$, 
which implies the mentioned property with respect to those variables. Therefore, we only need to consider $\mxS_\vcv, \mxS_\vcw$ and $\mxS_{\vcx_0}$. We prove the claim for $\mxS_\vcv$ considering that the other two cases can be shown similarly, i.e., we show that
\begin{equation}
\begin{split}
    V_k := \mathbb{E}_{p(\mathbf{x}|\mathbf{y},\hat{\vctheta}_k,\mathbf{u})} \left[\frac{\nD}{2}\logdet\mxS_\vcv  +\frac{1}{2}\sum_{t=0}^{\nD-1} \left( \left(\vcy_t - \mxN \begin{bmatrix} \vcx_t\\ \vcu_t \otimes \vcx_t \\ \vcu_t\end{bmatrix}\right)^\tr \mxS_\vcv^{-1}
     \left( \vcy_t -  \mxN \begin{bmatrix} \vcx_t\\ \vcu_t \otimes \vcx_t \\ \vcu_t\end{bmatrix}\right) \right)\right]
\end{split}
\end{equation}
is radially unbounded with respect to $\mxS_\vcv$. Since $\mxS_\vcv$ is positive definite, let $\mxS_\vcv := \mxV^\tr \Lambda \mxV$, where $\mxV^\tr \mxV = \mathbf{I}_{\Ny}$. Furthermore, define $\vca_t :=  \mxV \left(\vcy_t - \mxN \begin{bmatrix} \vcx_t\\ \vcu_t \otimes \vcx_t \\ \vcu_t\end{bmatrix} \right)$. Under these notations, we can rewrite $V_k$ as
\begin{align}
    V_k &= \frac{\nD}{2} \log\det (\mxV^\tr \Lambda \mxV) + \frac{1}{2} \sum_{t=0}^{\nD-1} \mathbb{E}_{p(\mathbf{x}|\mathbf{y},\hat{\vctheta}_k,\mathbf{u})} \big[ \vca_t^\tr \Lambda^{-1} \vca_t \big]\\
    &= \frac{\nD}{2} \log\det \Lambda + \frac{1}{2} \sum_{t=0}^{\nD-1} \mathbb{E}_{p(\mathbf{x}|\mathbf{y},\hat{\vctheta}_k,\mathbf{u})} \big[ \vca_t^\tr \Lambda^{-1} \vca_t \big]\\
    &= \frac{\nD}{2} \log\det \Lambda + \frac{1}{2} \trace \left( \Lambda^{-1} \sum_{t=0}^{\nD-1} \mathbb{E}_{p(\mathbf{x}|\mathbf{y},\hat{\vctheta}_k,\mathbf{u})} \big[\vca_t \vca_t^\tr \big] \right)\\
    &= \frac{\nD}{2} \log\det \Lambda + \frac{1}{2} \trace \left( \Lambda^{-1} \mxV\mxG_k(\mxN)\mxV^\tr \right)
    \\&
    \geq \frac{\nD}{2} \log\det \Lambda + \frac{\lambda_{\min} \left(\mxY - \Pi^\tr \Psi^{-1} \Pi \right)}{2} \trace \left( \Lambda^{-1} \right).
\end{align}
The last inequality holds since $\mxG_k(\mxN) \succeq \lambda_{\min} \left(\mxY - \Pi^\tr \Psi^{-1} \Pi \right)\mathbf{I}$ and $\mxV$ is full rank. Finally, Lemma~\ref{lem:invexity_logdet_trinv} implies that $V_k$ is radially unbounded, which concludes the proof. 

\qed

\section{Proof of Proposition 4}\label{app:recursive feaisbility}
We need to show that $\hat{\mxS}_{\vcw,k+1}$, $\hat{\mxS}_{\vcv,k+1}$ and $\hat{\mxS}_{\vcx_0,k+1}$ are positive definite matrices, which subsequently implies that \eqref{eq: closed form M}-\eqref{eq: closed form Sx0} provide a feasible solution for \eqref{eq:EM optimization J_k}. 
For $\hat{\mxS}_{\vcw,k+1}$, by substituting \eqref{eq: closed form C} into \eqref{eq:G_theta|hat_theta PD 2} and comparing with \eqref{eq: closed form Sv}, we have
    \begin{equation}
       \hat{\mxS}_{\vcv,k+1} = \frac{1}{\nD}\mxG_k(\hat{\mxN}_{k+1}).
    \end{equation}
From Proposition~\ref{pro: postive definite}, we know that $\mxG_k(\mxN)$ is positive definite for any $\mxN \in \Rbb^{\Ny \times (\Nu + \Nx + \Nx\Nu)}$, which implies that $\hat{\mxS}_{\vcv,k+1}$ is positive definite.
One can see that similar line of argument holds for $\hat{\mxS}_{\vcv,k+1}$ and $\hat{\mxS}_{\vcx_0,k+1}$, 
which concludes the proof.
\qed

\bibliographystyle{IEEEtran}
\bibliography{mainbib_bilinear}



\end{document}